  \theoremstyle{plain}
  \newtheorem{theorem}{Theorem}[section]
  \newtheorem{lemma}[theorem]{Lemma}
  \newtheorem{definition}[theorem]{Definition}
  \newtheorem{remark}[theorem]{Remark}
 \newtheorem*{theorem*}{Theorem}
\newcommand{\Tr}[1]{\mathrm{Tr}\left\{#1 \right\}}
\newcommand{\UES}{U_{\text{Enc}}^{S}}
\newcommand\algo{\mathcal}
\newcommand{\lsn}{\mathsf{LSN}}
\newcommand{\lpn}{\mathsf{LPN}}
\newcommand{\mslsn}{\mathsf{MSLSN}}
\newcommand{\plc}{\mathsf{PLC}}
\renewcommand{\vec}[1]{\mathbf{#1}}  
\newcommand{\bit}{\{0,1\}}
\newcommand{\Pn}{\mathcal{P}_n}
\newcommand{\reg}[1]{{\color{gray}#1}}
\newcommand{\ignore}[1]{}
\newcommand{\Cliff}{\mathsf{Cliff}}
\newcommand{\Stab}{\mathsf{Stab}}
\newcommand{\vCliff}{\mathsf{\text{\v{C}}liff}}
\newcommand{\Z}{\mathbb{Z}}
\newcommand{\Ber}{\mathsf{Ber}}
\newcommand{\kb}[1]{|#1\rangle\langle #1|} %
\newcommand{\negl}{\mathsf{negl}}
\renewcommand{\Mod}[1]{\ (\mathrm{mod}\ #1)}
\newcommand\calP{\mathcal{P}}
\renewcommand{\vec}[1]{\mathbf{#1}}
\newcommand\id{\mathbb{I}}
\newcommand{\linear}{\mathrm{L}}
\newcommand{\ot}{\otimes}
\title{The {\em Learning Stabilizers with Noise} problem \vspace{0.2cm}}
\author[1,2]{Alexander Poremba\footnote{\url{poremba@mit.edu}}}
\author[1]{Yihui Quek\footnote{\url{yquek@mit.edu}}}
\author[1]{Peter Shor\footnote{\url{shor@mit.edu}}}
\affil[1]{Department of Mathematics, Massachusetts Institute of Technology}
\affil[2]{Computer Science \& Artificial Intelligence Laboratory (CSAIL),\newline Massachusetts Institute of Technology}
\date{ }
\begin{document}

\maketitle

\abstract{Random classical codes have good error correcting properties, and yet they are notoriously hard to decode in practice. Despite many decades of extensive study, the fastest known algorithms still run in exponential time. The \emph{Learning Parity with Noise} ($\lpn$) problem, which can be seen as the task of decoding a random linear code in the presence of noise,
has thus emerged as a prominent hardness assumption with numerous applications in both cryptography and learning theory.

Is there a natural quantum analog of the $\lpn$ problem? In this work, we introduce the \emph{Learning Stabilizers with Noise} ($\mathsf{LSN}$) problem, the task of decoding a random stabilizer code in the presence of local depolarizing noise. We give both polynomial-time and exponential-time quantum algorithms for solving $\mathsf{LSN}$ in various depolarizing noise regimes, ranging from extremely low noise, to low constant noise rates, and even higher noise rates up to a threshold.
Next, we provide concrete evidence that $\lsn$ is hard. First, we show that $\mathsf{LSN}$ includes $\lpn$ as a special case, which suggests that it is at least as hard as its classical counterpart. Second, we prove worst-case to average-case reductions for variants of $\mathsf{LSN}$.
We then ask: what is the computational complexity of solving $\mathsf{LSN}$? Because the task features quantum inputs, its complexity cannot be characterized by traditional complexity classes.
Instead, we show that the $\mathsf{LSN}$ problem lies in a recently introduced (distributional and oracle) \emph{unitary synthesis class}. Finally, we identify several applications of our $\mathsf{LSN}$ assumption, ranging from the construction of quantum bit commitment schemes to the computational limitations of learning from quantum data. 
}

\newpage
\tableofcontents

\newpage

\section{Introduction}

Coding theory has offered many valuable insights into the theory of computation, ranging from 
structural insights in complexity theory ~\cite{PCP, NLTS}, to the design of cryptographic primitives~\cite{Sha79,Stern93,FS96,McEliece1978,Alekhnovich03} and even to lower bounds in the field of computational learning theory~\cite{BKW03, 10.1109/FOCS.2006.51}.
%The hardness of decoding classical random linear codes -- as captured in classic problems like {\em Learning Parities with Noise} and {\em Learning with Errors} -- is not only at the heart of classical complexity but has also featured in its quantum counterpart \cite{Tgate,Gollakota_2022, mahadev2023classicalhomomorphicencryptionquantum}. 
The existence of asymptotically good error correcting codes, in particular, is a major cornerstone in the field. Thanks to the probabilistic method, we know that a random linear code already attains the so-called Gilbert-Varshamov bound~\cite{Gilbert1952ACO, Varshamov} with high probability. This suggests that asymptotically good error correcting codes not only exist in theory, but are in fact also abundant. Despite their remarkable error correcting properties, random linear codes have been found to be notoriously hard to decode in practice, and the fastest known algorithms still run in exponential time~\cite{BKW03}.

\paragraph{Learning Parity with Noise.} The observation that better codes seem harder to decode is captured by the \emph{Learning Parity with Noise} ($\lpn$)
problem~\cite{BFKL93}. In a nutshell, this assumption says that it is computationally difficult to decode a random linear code under Bernoulli noise. In other words, given as input
$$
(\vec A \sim \Z_2^{n\times k},\vec A \cdot \vec x +  \vec e \Mod{2}) 
$$
it is hard to find the secret string $\vec x$
which is chosen uniformly at random in $\Z_2^{k}$, and where $\vec e \sim \Ber_p^{\otimes n}$ is a random Bernoulli error for some appropriate noise rate $p\in (0,1/2)$. Here, $\vec A \in \Z_2^{n\times k}$ serves as a random \emph{generator matrix} of a linear code, for $n = \poly(k)$.

In practice, $\lpn$ is believed to be hard for both classical and quantum algorithms running in time $\poly(k)$ in various noise regimes. For constant noise rates $p \in (0,1/2)$, the celebrated BKW algorithm~\cite{BKW03} solves $\lpn$ in both time and sample\footnote{Here, the number of samples refers to the parameter $n$---the number of \emph{noisy linear equations} on $\vec x$.}  complexity given by $O(2^{k/\log k})$. 
%This regime is considered to be the \emph{standard} $\lpn$ hardness regime. 
The conjectured hardness of $\lpn$ has found applications in both cryptography~\cite{10.5555/647097.717000,Alekhnovich03,10.1007/11535218_18,10.1007/978-3-642-03356-8_35,10.1007/978-3-642-34961-4_40,DBLP:conf/cans/DavidDN14,applebaum_et_al:LIPIcs.ITCS.2017.7,BLVW} and learning theory~\cite{BFKL93,10.1109/FOCS.2006.51}. The \emph{Learning with Errors} (LWE) problem~\cite{regev2009lattices}---a more structured variant of $\lpn$---has since become the basis of modern cryptography and has even led to highly advanced cryptographic primitives, such as fully homomorphic encryption~\cite{10.5555/1834954,6108154} and the classical verification of quantum computations~\cite{10.1137/20M1371828}. In the context of learning theory, it was shown that an efficient algorithm for $\lpn$ would allow us to learn important function classes, such as $2$-DNF formulas, juntas, and even more general functions with sparse Fourier spectrum~\cite{10.1109/FOCS.2006.51}.

Because the LPN problem is so prevalent in many areas of computer science, a significant effort has been devoted to finding evidence of its hardness. One of these pieces of evidence is a {\em worst-to-average-case reduction} \cite{BLVW, 10.1007/978-3-030-84252-9_16}. Recall that the $\lpn$ problem is an \emph{average-case} problem: the comoutational task is to decode a {\em random} code, secret and error. Ref.~\cite{BLVW} studied a related worst-case problem---the \emph{nearest codeword problem} (NCP)---and showed that it reduces to $\lpn$. This reduction, later improved by~\cite{10.1007/978-3-030-84252-9_16}, showed that $\lpn$ is at least as hard as (a mildly hard variant of) NCP in the worst case. \cite{BLVW} also found the first non-trivial complexity upper bound on the hardness of the $\lpn$ problem; specifically, they showed that (a variant of) the $\lpn$ problem is contained in the complexity class $\mathsf{Search}\mathsf{BPP}^{\mathsf{SZK}}$, and is thus unlikely to be $\mathsf{NP}$-hard.

\paragraph{The hardness of decoding random stabilizer codes.} Just like random linear codes, random \emph{quantum} stabilizer codes\footnote{The stabilizer formalism was first developed by Gottesman~\cite{gottesman1997stabilizercodesquantumerror} and incorporates the majority of quantum error correcting codes we know today~\cite{ErrorCorrectionZoo}.} also possess remarkable error correcting properties \cite{Graeme_thesis,gottesman_book}. They are ubiquitous in quantum information science; for example, random stabilizer codes appear in the context of quantum authentication schemes and the verification of quantum computations~\cite{aharonov2017interactiveproofsquantumcomputations}, quantum cryptography~\cite{dulek2018quantumciphertextauthenticationkey}, the theory of quantum communication~\cite{Graeme_thesis,Wilde_2013}, and even in black-hole physics and quantum gravity~\cite{Hayden_Preskill_2007,yoshida2017efficientdecodinghaydenpreskillprotocol,Harlow_2013}. Characterizing the hardness and complexity of decoding random stabilizer codes is therefore not only important from the perspective of quantum error correction, but could also shed a new light on the computational limitations of many natural quantum information processing tasks.
In this work, we ask: 
\begin{center}
\emph{How hard is it to decode a random quantum stabilizer code?}
\end{center}
Surprisingly, this subject has seen little theoretical treatment. While prior work has shown that decoding quantum stabilizer codes is {\em worst-case} hard~\cite{HsiehLeGall11,IP15}---via reduction from a purely classical decoding problem---the average-case complexity of decoding stabilizer codes as an inherently \emph{quantum problem} was left as an open problem~\cite{IP15}.
We re-formulate this as a question that bears on all of the areas mentioned above:

\begin{center}
\emph{Can we find a natural quantum analog of the Learning Parity with Noise problem? In particular, what would its hardness imply for quantum information science as a whole?}
\end{center}

Given the success of constructing cryptographic primitives from the hardness of LPN in a classical world, could such a quantum analog of LPN allow us to directly construct cryptographic protocols in a quantum world? 
Finally---and perhaps, even more interestingly---this assumption may turn out to be even harder to break than its classical counterpart.

%We call our problem Learning Stabilizers with Noise ($\lsn$). Curiously, we find that the above-mentioned lack of error degeneracy in fact allows $\lsn$ to be solved with a single sample, widening the gap between the sample- and time-complexity of $\lsn$. Analogous to the central role of $\lpn$ in classical complexity theory, we expect our quantum version of this problem to become a primitive in quantum cryptography and quantum learning theory.
%has exclusively focused on the (worst-case) hardness of minimum-weight decoding as a classical problem.
% Given that the $\lpn$ problem is ubiquitous in the fields of classical cryptography 
% In particular, how would one go about defining such a problem? Are there quantum algorithms which solve the problem---even across different noise regimes?

\section{Overview}

%Given the ubiquity of the $\lpn$ problem in cryptography and learning theory, we introduce a natural quantum analog of $\lpn$ which we call the \emph{Learning Stabilizers with Noise} (LSN) problem. 
 
We now give an overview of our contributions in this work, summarized in Table \ref{tab:lpn-lsn-comparison}.

\subsection{Learning Stabilizers with Noise}

In this work, we introduce a natural quantum analog of $\lpn$---the {\em Learning Stabilizers with Noise} (LSN) problem. In studying the LSN problem, we thoroughly characterize the hardness and complexity of decoding random stabilizer codes in different noise regimes. Similar to the $\lpn$ problem, which has found numerous applications in both cryptography and learning theory, we believe that our LSN assumption has the potential to occupy a similar role in quantum information more broadly.
Before we introduce our $\lsn$ task formally, we first revisit $\lpn$ and draw a connection to quantum error correction.

\begin{table}[h]
\centering
\begin{tabular}{|p{2.6cm}|p{5cm}|p{6.5cm}|}
\hline
 & Learning Parity with Noise & Learning Stabilizers with Noise (this problem) \\
\hline
Worst-case hardness & $\checkmark$ 
$\mathsf{NP}$-complete~\cite{BMT78} as a decisional syndrome decoding task.
Variant of the {\em (Promise) Nearest Codeword Problem} ($\mathsf{NCP}$)~\cite{BLVW} & As a classical syndrome decoding task:~$\mathsf{NP}$-complete~\cite{HsiehLeGall11,KuoLu12} or $\#\mathsf{P}$-complete~\cite{IP15} depending on the decoding problem\\
\hline
Average-case hardness & \vspace{0.001mm}$\checkmark$~\cite{BLVW, 10.1007/978-3-030-84252-9_16} & \vspace{0.001mm}This paper (\Cref{sec:complexity})\\
\hline
Worst-to-average-case reductions & \vspace{0.001mm}$\checkmark$~\cite{BLVW, 10.1007/978-3-030-84252-9_16} & \vspace{0.001mm}This paper (\Cref{sec:worst-avg-reduction})\\
\hline
Algorithms for average-case problem & $\checkmark$ $2^{O(k/\log k)}$ time/sample complexity \cite{blum2000noisetolerantlearningparityproblem} in constant-noise regime. & \vspace{0.001mm}This paper (\Cref{sec:algorithms})\\
\hline
\end{tabular}
\caption{Comparison of $\lpn$ and LSN in terms of hardness and complexity}
\label{tab:lpn-lsn-comparison}
\end{table}

\paragraph{A quantum analog of $\lpn$?} Let $n,k \in \mathbb{N}$ be integers with $n=\poly(k)$, and let $p \in (0,1/2)$ be a parameter. Recall that an instance of the $\lpn$ problem\footnote{For a more formal definition, see \Cref{def:LPN}.} consists of a generator matrix for a random linear code, together with a noisy codeword for a uniformly random string; specifically, we consider samples of the form
$$
(\vec A \sim \Z_2^{n\times k},\vec A \cdot \vec x + \vec e \Mod{2})
$$
where $\vec A \in \Z_2^{n\times k}$ is a random generator matrix, where $\vec A \cdot \vec x + \vec e \Mod{2}$ is a noisy codeword which encodes uniformly random string $\vec x \sim \Z_2^k$, and
where $\vec e \sim \Ber_p^{\otimes n}$ is a random Bernoulli error. Without loss of generality\footnote{This happens with overwhelming probability for $\vec A \sim \Z_2^{n\times k}$ provided that $n \gg k$ (see \Cref{subsec:reduction-LPN}).}, we assume that the matrix $\vec A$ has full column-rank, i.e., the columns of $\vec A$ generate all of $\Z_2^k$.
We now make a simple observation; namely, we can interpret the $\lpn$ instance $\vec A \cdot \vec x + \vec e \Mod{2}$ as a particular noisy quantum codeword on $n$ qubits\footnote{Strictly speaking, we should think of $\ket{\vec A\cdot \vec x + \vec e \Mod{2}}$ as encoding the row vector $\vec x^\intercal \vec A + \vec e^\intercal \Mod{2}$.}, since 
\begin{align}
\ket{\vec A\cdot \vec x + \vec e \Mod{2}}  &= X^{\vec e} \ket{\vec A\cdot  \vec x \Mod{2}} \nonumber \\
&= X^{\vec e} U_{\vec A}\left(\ket{0^{n-k}} \otimes \ket{\vec x}\right) \label{eq:noisy-LPN-codeword}\, ,
\end{align}
where $X^{\vec e} = X^{e_1} \otimes \dots \otimes X^{e_n}$ is a product of low-weight Pauli-X operators and where the unitary operator $U_{\vec A}$ is defined to be the matrix multiplication operation
\begin{align}\label{eq:encoding-U-A}
U_{\vec A}: \ket{0^{n-k}} \otimes \ket{\vec x} \rightarrow \ket{\vec A \cdot\vec x \Mod{2}}.
\end{align}
Because $\vec A$ has full column-rank, $U_{\vec A}$ corresponds to a linear reversible circuit which can be described solely in terms of $\mathrm{CNOT}$ gates~\cite{10.5555/2011763.2011767}. Therefore, $U_{\vec A}$ is a Clifford operator and thus maps Pauli operators to Pauli operators via conjugation.

We may also observe that $U_{\vec A}$ is the encoding circuit for a stabilizer code. The stabilizer group associated with this code is precisely the group of $k$ commuting Pauli operators under which $U_{\vec A}(\ket{0^{n-k}} \otimes \ket{\vec x})$ remains invariant. These are easily seen to be the Pauli operators 
$$
U_{\vec A}Z_i U_{\vec A}^\dag, \quad \text{ for } i \in [n-k] \, ,
$$
where $Z_i$ denotes a Pauli operator which acts as a Pauli-$Z$ operator on the $i$-th qubit, and is equal to the identity everywhere else.
In other words, the Clifford encoding unitary $U_{\vec A}$ (derived from an instance of an $\lpn$ problem) gives rise to the quantum stabilizer code\footnote{See \Cref{sec:stabilizer} for additional background on stabilizer codes.} given by
$
S_{\vec A} = \langle U_{\vec A}Z_1 U_{\vec A}^\dag, \dots, U_{\vec A}Z_{n-k} U_{\vec A}^\dag \rangle$.
This shows that every instance of $\lpn$ can be mapped to an instance of decoding stabilizer codes. 

We now generalize this idea significantly, ultimately leading us to define the \emph{Learning Stabilizers with Noise} (LSN) problem---the natural quantum analog of the $\lpn$ problem:
\begin{itemize}
    \item (Random stabilizer code:) Note that the encoding Clifford unitary in \Cref{eq:encoding-U-A} generates a specific stabilizer code of the form $S_{\vec A} = \langle U_{\vec A}Z_1 U_{\vec A}^\dag, \dots, U_{\vec A}Z_{n-k} U_{\vec A}^\dag \rangle$. 
    We consider stabilizer subgroups of the Pauli group which are chosen uniformly at random from the set of all stabilizer subgroups with $n-k$ generators, denoted by $\Stab(n,k)$. In fact, as we later prove in \Cref{thm:random-Stab}, this is equivalent to choosing random stabilizer codes which are described by the set of generators $\langle CZ_1C^\dag, \dots,CZ_{n-k}C^\dag \rangle$, where $C \sim \Cliff_n$ is a random $n$-qubit Clifford operator.\footnote{While this is considered \emph{folklore}, we are not aware of a proof which has previously appeared in the literature. Therefore, we decided to rigorously prove this statement in \Cref{sec:stabilizer}.}

    \item (Local depolarizing noise:) Recall that the noisy codeword $X^{\vec e} U_{\vec A}(\ket{0^{n-k}} \otimes \ket{\vec x})$ in \Cref{eq:noisy-LPN-codeword} is only affected by low-weight bit-flip errors $X^{\vec e}$, where $\vec e \sim \Ber_p^{\otimes n}$ comes from a Bernoulli distribution. In quantum systems, however, noise may also come in the form of phase errors. This leads us to consider a quantum noise model in the form of local depolarizing noise $\algo D_p^{\otimes n}$. Similar to the Bernoulli distribution, local depolarizing noise also produces low-weight errors with high probability, which therefore naturally generalizes the classical noise model.
\end{itemize}
In other words, we consider the task of decoding a random quantum stabilizer code in the presence of local depolarizing noise. Because the codeword is a \emph{stabilizer state}, we call this the \emph{Learning Stabilizers with Noise} (LSN) problem---in analogy to the classical $\lpn$ problem.
We now give a formal definition of the problem.

\paragraph{Learning Stabilizers with Noise.}
The \emph{Learning Stabilizers with Noise} (LSN) problem (formally defined in \Cref{def:LSN}) is to find $x \in \bit^k$ given as input a sample
$$
\big(S\in \Stab(n,k), \,E \ket{\overline{\psi_x}}^S \big) \quad \text{ with } \quad \ket{\overline{\psi_x}}^S:= U_{\mathrm{Enc}}^S(\ket{0^{n-k}} \otimes \ket{x}) \, ,
$$
where $S\sim \Stab(n,k)$ is a uniformly random stabilizer subgroup, where $E \sim \mathcal{D}_p^{\otimes n}$ is a Pauli error from a local depolarizing channel, where $x \sim \bit^k$ is a random string, and where $U_{\mathrm{Enc}}^S$ is some canonical encoding circuit for the stabilizer code associated with $S$. As mentioned before, the encoding circuit is typically given in the form of a random $n$-qubit Clifford operator.

At first sight, it may not be clear why the LSN problem is even well-defined, since a unique solution to the quantum learning problem may not exist in certain parameter regimes. The intuition behind our argument for the existence of a unique solution is as follows. Suppose that $p\in(0,1/2)$ is a sufficiently small constant. Then, the quantum Gilbert-Varshamov bound (see \Cref{sec:stabilizer}) tells us that a random stabilizer code is non-degenerate and has distance at least $d=3np+1$ with overwhelming probability, in which case for any pair of codewords with $x,y \in \bit^k$, we have $$\bra{\overline{\psi_x}} E_a^\dag E_b \ket{\overline{\psi_y}}=0$$
by the \emph{Knill-Laflamme conditions}---provided that $E_a, E_b$ have weight at most $|E_a|, |E_b| \leq \frac{3}{2}np$. Fortunately, a simple Chernoff bound analysis reveals that this is the case with overwhelming probability for the local depolarizing channel $\algo D_p^{\otimes n}$. Therefore, Pauli errors which originate from a local depolarizing noise channel take orthogonal codewords to orthogonal codewords, and hence there must exist a measurement that perfectly distinguishes between them. This observation is also at the core of our algorithms  for the LSN problem, which we describe next.

\paragraph{Algorithms for Learning Stabilizers with Noise.}

Previously, we discussed why random stabilizer codes give rise to a single-shot decoding problem which exhibits unique solutions with high probability. This suggests that LSN can be solved information-theoretically. Can we also find efficient algorithms for solving the LSN problem? Not surprisingly, the answer depends on the specific noise regime of the error distribution (see \Cref{fig:LSN-hardness}). 

In \Cref{sec:algorithms}, we give
both polynomial-time and exponential-time quantum algorithms for solving the LSN problem in various noise-regimes:

\begin{figure}[t]
    \centering

\includegraphics[width=0.82\linewidth]{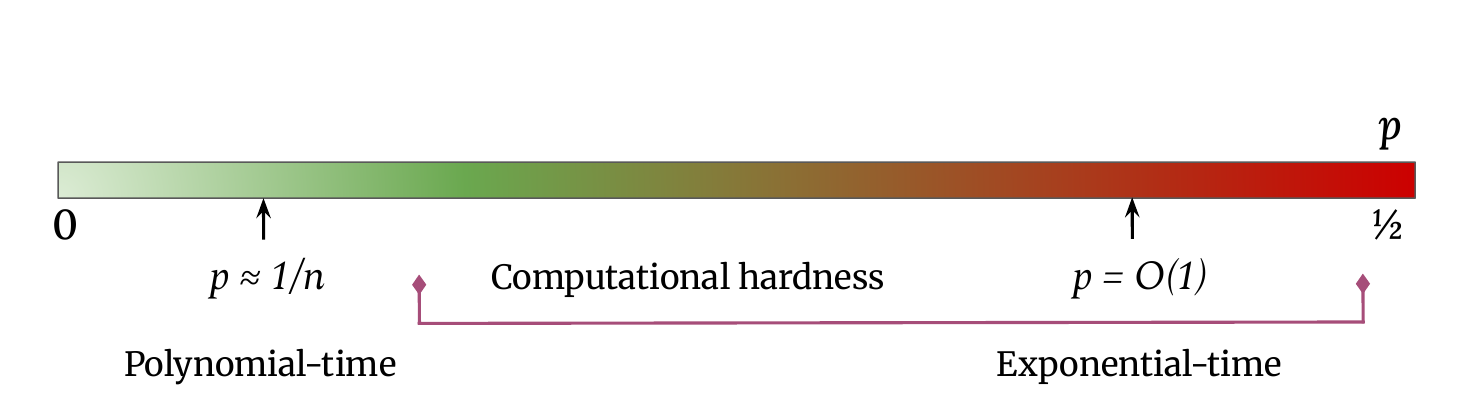}
    \caption{\textbf{Algorithms for the LSN problem.} In an extremely low noise regime with $p\approx 1/n$, the problem can be solved in polynomial-time. In a constant-noise regime with $p=O(1)$, the problem can be solved information-theoretically. Here, we give an exponential-time algorithm and conjecture that it is computationally hard. 
    }
    \label{fig:LSN-hardness}
\end{figure}

\begin{itemize}
    \item \textbf{Extremely low-noise} regime with parameter $p \approx \frac{1}{n}$. In this parameter regime, we show that a simple projection onto the stabilizer codespace (see Algorithm 1) suffices to solve the LSN problem in time $O(n^3)$ with constant success probability.

    \item \textbf{Low constant-noise} regime for a some small constant $p \in (0,1/2)$. In this regime, we show that the \emph{Pretty Good Measurement} (PGM)~\cite{barnum2000reversingquantumdynamicsnearoptimal,Montanaro_2007} (with only a single sample) succeeds with high probability. This is inextricably linked to the structure of our problem: although the distance between two arbitrary orthogonal states contracts tremendously---in fact, exponentially in system size (see, e.g. Proposition IV.7 in \cite{hirche2023quantumdifferentialprivacyinformation})---under a layer of  local depolarizing noise, a random stabilizer code encodes orthogonal states into \emph{orthogonal} subspaces that are ``protected" from such destructive contraction even under noise. This means that the information of the LSN secret is preserved under noise, and this is the intuition behind our approach in Algorithm 2.
    \item \textbf{Higher constant-noise} regime. In this regime, decoding is still possible at the cost of more samples. We derive the scaling of the sample complexity with noise, up to a certain noise threshold. 
\end{itemize}

\paragraph{Connection to syndrome decoding.}

Recall that an instance of the LSN problem by definition features \emph{quantum inputs}, which naturally led us to consider \emph{quantum} algorithms as a means of solving the problem. However, despite the fact that LSN is a quantum problem, it reduces to a natural \emph{classical} problem---\emph{stabilizer syndrome decoding}. One way of solving the problem is to just measure the stabilizer syndromes of the noisy quantum codeword and to tackle the classical decoding problem instead.
However, there are good reasons to believe that the classical (average-case) syndrome decoding problem is significantly harder. First, while LSN trivially reduces to stabilizer syndrome decoding, the reverse direction is highly unclear; in particular, it is not at all obvious why an algorithm for the LSN problem would also imply an algorithm for the classical stabilizer syndrome decoding problem.
Second, it is also evident from our algorithms in \Cref{sec:algorithms} that the LSN problem is fundamentally different from the classical syndrome decoding problem: as our quantum algorithm (see Algorithm $2$) suggests, it is possible to solve LSN in time which is merely exponential in $k$, whereas any naive algorithm for the syndrome decoding problem is likely to run in time which is exponential in $n$, where $n=\poly(k)$ is typically larger than $k$.

\paragraph{Worst-case to average-case reductions.}

Recall that the LSN decoding problem is stated as an \emph{average-case} problem, where the
success probability of an algorithm is measured on average over the random choice of stabilizer $S \in \Stab(n,k)$, secret $x \in \Z_2^k$ and error $E \sim \algo D_p^{\otimes n}$.
While the quantum Gilbert-Varshamov bound does in fact guarantee that an average-case instance of LSN can be solved information-theoretically, our results in \Cref{sec:algorithms} indicate that the problem becomes computationally intractable for large $k$---even in a low constant-noise regime. This raises the question of whether we can find concrete evidence for the average-case hardness of the LSN problem, beyond the fact that it subsumes the classical $\lpn$ problem.

Recently, a number of works showed that there is in fact evidence of \emph{worst-case} hardness for $\lpn$; specifically, by studying a related worst-case problem---the \emph{nearest codeword problem} (NCP)~\cite{BLVW,10.1007/978-3-030-84252-9_16}. Using the \emph{sample amplification} technique~\cite{10.1007/11538462_32}, Brakerski, Lyubashevsky, Vaikuntanathan and Wichs~\cite{BLVW} gave a worst-case to average-case reduction from NCP to $\lpn$. Here, an instance to the former problem consists of $
(\vec C \in \Z_2^{m \times k}, \, \vec t= \vec C \cdot \vec s + \vec w \Mod{2})$ for some $\vec s \in \Z_2^k$, with the promise that the generator matrix $\vec C$ is balanced\footnote{Roughly speaking, this means that the minimum and maximum distance of the linear code generated by $\vec C \in \Z_2^{n \times m}$ is neither too small nor too large.} and that the Hamming weight of the error $\vec w \in \Z_2^m$ is known. On a high level, the reduction in~\cite{BLVW} proceeds in two steps:
\begin{itemize}
    \item (Re-randomization of the secret) A random string $\vec u \sim \Z_2^k$ is chosen, and the worst-case instance $(\vec C,\vec t)$ gets mapped via an additive shift to $$(\vec C, \, \vec t + \vec C \cdot \vec u \Mod{2}) = (\vec C, \vec C \cdot (\vec s + \vec u) + \vec w \Mod{2}).$$
    Note that, whereas the initial secret $\vec s \in \Z_2^k$ was fixed, the new secret $\vec s + \vec u$ is now distributed according to the uniform distribution over $\Z_2^k$.

    \item (Re-randomization of the code and error) A random $\vec R \in \Z_2^{n \times m}$ is sampled from a \emph{smoothing distribution} $\algo R_w^{n \times m}$, and the previous sample gets mapped to
    $$
    (\vec R \cdot \vec C, \vec R \cdot(\vec C \cdot (\vec s + \vec u) + \vec w \Mod{2})) = (\vec R\cdot \vec C, \vec R \cdot \vec C \cdot (\vec s + \vec u) + \Vec R \cdot\vec w \Mod{2}))
    $$
\end{itemize}
\cite{BLVW} show that the resulting sample is statistically close to an (average-case) $\lpn$ sample---provided that the \emph{smoothing distribution} $\algo R_w^{n \times m}$ is chosen appropriately.

By taking a similar approach, we develop a worst-case to average-case reduction for the LSN problem. Here, the starting point is a worst-case stabilizer decoding instance $(S, E\ket{\overline{\psi}_x}^S)$ for some stabilizer $S\in \text{Stab}(n,k)$, Pauli error $E$ of bounded weight (at most $\log^c(n)$ for any $c>0$), and secret $x\in \{0,1\}^k$.
First, we observe that, in order to re-randomize the secret $x$, we need to act on the encoded data $\ket{\overline{\psi}_x}^S$ itself. Hence, it suffices to choose a random string $u \sim \bit^k$ and to apply the \emph{logical} Pauli operator $\overline{X}^{u}$ associated with $S$ to the noisy codeword itself, resulting in the desired transformation $E\ket{\overline{\psi}_{x\oplus u}}^S$.

To re-randomize the code and the error, we first observe that the shifted codeword $\ket{\overline{\psi}_{x\oplus u}}^S$ can be written as 
$$
\ket{\overline{\psi}_{x\oplus u}}^S = U_{\mathrm{Enc}}^S(\ket{0^{n-k}} \otimes \ket{x \oplus u})
$$
for some (not necessarily random) encoding Clifford $U_{\mathrm{Enc}}^S \in \Cliff_n$. Because $\Cliff_n$ forms a finite group, this suggests that one could simply sample a uniformly random $C \sim \Cliff_n$ and consider the state $CE\ket{\overline{\psi}_{x\oplus u}}^S = (CEC^\dag) C\ket{\overline{\psi}_{x\oplus u}}^S$, where $C\ket{\overline{\psi}_{x\oplus u}}^S$ now comes from a random stabilizer code for a uniformly random encoding Clifford $C \cdot U_{\mathrm{Enc}}^S$. While this does seem to result in a re-randomized stabilizer code, the aforementioned transformation could potentially \emph{blow up} the weight of the Pauli error $E$. In fact, it is well-known that random Cliffords are \emph{Pauli-mixing}~\cite{10.5555/3179473.3179474,aharonov2017interactiveproofsquantumcomputations}: they take any non-identity Pauli operator and map it to a uniformly random non-identity Pauli operator via conjugation. This seems to suggest that any naive worst-case to average-case reduction for LSN is doomed to fail: the weight of the re-randomized Pauli error now
appears to follow a Binomial distribution with parameter $3/4$, thereby inducing a high-noise distribution which is statistically far from any reasonable noise channel (e.g., such as a local depolarizing channel).

To overcome this barrier we develop a re-randomization strategy which is much more \emph{gentle} on the error (i.e., it does not cause it to blow up), and yet still ensures that the code gets somewhat re-randomized. Our strategy is to follow the random Pauli operator with a {\em twirl}---another random unitary consisting of a random permutation operator, followed by a layer of random single-qubit Cliffords. Similar ensembles of unitaries been used in the {\em randomized compiling and benchmarking} literature in order to tailor noise with arbitrary coherence and spatial correlations into a symmetric Pauli channel \cite{Wallman_2016, Emerson_2007}. To our knowledge, however, our work is the first to identify the twirl as a useful tool in the context of a worst-case to average-case reduction. Under this twirl, a worst-case error of weight $w$ is transformed into a uniformly random error of weight $w$; in particular, the weight of the error remains invariant. However, that the distributions of the error and the code are now \emph{correlated}, which requires a much more refined analysis.

\paragraph{Complexity of Learning Stabilizers with Noise.} 
What is the computational complexity of solving the LSN problem? Notice that the description of the learning task features quantum inputs, which means that its complexity cannot be characterized by traditional
complexity classes such as $\mathsf{BQP}$ or $\mathsf{QMA}$ which deal with classical-input decision problems. Instead, we show that (a variant of) the $\mathsf{LSN}$ problem lies in the complexity class called $\mathsf{avgUnitaryBQP}^{\mathsf{NP}}$, a (distributional and oracle) unitary synthesis class which was recently formalized by Bostanci et al.~\cite{bostanci2023unitarycomplexityuhlmanntransformation}.
At a high level, this follows from the quantum Gilbert-Varshamov bound which ensures that a random $\mathsf{LSN}$ instance carries a unique stabilizer syndrome with overwhelming probability. Once the syndrome is computed, it can then be decoded with the help of an $\mathsf{NP}$ oracle. In other words, any polynomial-time quantum machine which access to the oracle can \emph{synthesize} the appropriate Pauli correction for the noisy quantum codeword.

In \Cref{sec:complexity}, we give a brief review of unitary complexity and also prove our aforementioned complexity upper bound on the hardness of the $\mathsf{LSN}$ problem.

\subsection{Applications.}

\paragraph{Learning from quantum data.} Just as $\lpn$ has been fundamental to lower bounds in classical learning theory, we expect that $\lsn$ will be a useful tool for proving lower bounds in quantum learning theory. In \Cref{sec:learning} we identify one such learning setting: learning from quantum data \cite{Caro_2021,chung2021sampleefficientalgorithmslearning,FQR,Caro24}, a generalization of Probably Approximately Correct (PAC) learning to the quantum setting. Here, the goal is to learn a map  $\rho: \mathcal{X} \rightarrow L(\mathcal{H}_d)$ from classical to quantum data -- for example, a Hamiltonian can be construed as a map from temperatures to Gibbs states, or time-evolved states. 

In one special case of this task known as {\em learning state preparation processes}, the learner is allowed to observe input-output pairs for an unknown map, but the inputs are sampled from a distribution and are not identical. Refs. \cite{chung2021sampleefficientalgorithmslearning,FQR} gave sample-efficient algorithms for learning in this setting, thus showing that it is information theoretically possible. But we show that these algorithms can never be computationally efficient---assuming the hardness of our $\lsn$ assumption. While the lack of computational efficiency was implicit for the fully general learning setting due to a result of \cite{AGS21}, that result does not apply when there are physically natural restrictions on the concept class, such as learning processes that involve quantum noise. Our $\lsn$ hardness assumption fills this gap. 

%As a bonus, we are also able to {\em upper} bound the complexity of learning state preparation processes, by relating it to a unitary synthesis problem. We are to our knowledge the first to identify a complexity class that contains such learning tasks. 

\paragraph{Constructing quantum bit commitment schemes.}

In \Cref{sec:commitments}, we give a cryptographic application of LSN and show how to construct a statistically hiding and computationally binding quantum commitment scheme~\cite{yan2023general}. This is a fundamental cryptographic primitive that allows two parties (called a \emph{sender} and \emph{receiver}) to engage in a two-phase quantum communication protocol:
in the first phase (the ``commit phase''), the sender sends a commitment (i.e., a quantum register) to a bit $b$ to the receiver;  the \emph{hiding} property of a bit commitment scheme ensures that the receiver cannot decide the value of $b$ from the commitment alone.
In the second phase (the ``reveal phase''), the sender sends another quantum register to the receiver that allows the receiver to compute the value of $b$; the \emph{binding} property of commitments ensures that the sender can only reveal the correct value of $b$, i.e.\ if the sender sent a reveal register that was meant to convince the receiver it had committed to a different value of $b$, the receiver would detect this. 

\subsection{Related Work}

Random linear codes have been extensively studied in the field of coding theory~\cite{Gilbert1952ACO, Varshamov}.
The \emph{Learning Parity with Noise}
problem was first proposed in~\cite{BFKL93}.
Blum, Kalai and Wasserman~\cite{BKW03} gave an algorithm that solves LPN in time $O(2^{k/\log k})$.
Berlekamp, McEliece and van Tilborg~\cite{BMT78} showed that the worst-case (decisional) syndrome decoding task is $\mathsf{NP}$-complete. 
Brakerski, Lyubashevsky, Vaikuntanathan and Wichs~\cite{BLVW} gave a worst-case to average-case reduction and showed that $\lpn$ is at least as hard as (a mildly hard variant of) of the nearest codeword problem (NCP). In subsequent work, Ref.~\cite{10.1007/978-3-030-84252-9_16} later gave an improved worst-case to average-case reduction in the subexponentially-hard constant-noise regime.~\cite{BLVW} also showed that $\lpn$ is contained in $\mathsf{Search}\mathsf{BPP}^{\mathsf{SZK}}$, and thus unlikely to be $\mathsf{NP}$-hard.

Smith~\cite{Graeme_thesis} showed a quantum analog of the Gilbert-Varshamov bound using the notion of random \emph{stabilizer codes}. The worst-case hardness of decoding quantum stabilizer codes as a classical decoding task has been extensively studied, and was found to be
$\mathsf{NP}$-complete~\cite{HsiehLeGall11,KuoLu12} or $\#\mathsf{P}$-complete~\cite{IP15}---depending on the problem. The key insight in these results is that classical decoding essentially reduces to quantum decoding.
Ref.s \cite{HsiehLeGall11} and \cite{KuoLu12} use a one-to-one correspondence between stabilizer codes and classical linear codes to prove that quantum maximum-likelihood decoding is $\mathsf{NP}$-complete. Ref.s \cite{PoulinChung,IP15} go one step further, pointing out that decoding stabilizer codes should be even harder than decoding classical codes because of {\em error degeneracy} in the quantum setting, whereby multiple different errors can lead to the same syndrome. Based on this insight,~\cite{IP15} showed that quantum maximum-likelihood decoding (which accounts for error degeneracy is in fact $\#\mathsf{P}$-complete). The proof, once again, reduces from a classical problem: evaluating the weight-enumerator polynomial of a classical binary linear code. Ref \cite{KapshikarKundu23} also show that finding the minimum distance of a quantum code is $\mathsf{NP}$-hard due to a reduction from classical minimum distance decoding.

The fact that all aforementioned results about quantum decoding rely on classical complexity primitives underscores the need for a formalism that captures the inherent quantum nature of the decoding task. A recent work of Bostanci, Efron, Metger, Poremba, Qian and Yuen~\cite{bostanci2023unitarycomplexityuhlmanntransformation} characterized the complexity of decoding general quantum channels (which includes quantum error correction) using the language of of \emph{unitary synthesis} problems. Crucially, this implies to uniquely quantum problems that feature quantum inputs and outputs. While their results do not explicitly analyze random stabilizer codes, we make use of their formalism in order to describe the complexity of our average-case LSN problem. Another recent line of work also argues one should build quantum cryptography from inherently quantum, rather than classical cryptographic hardness assumptions~\cite{Kretschmer21,morimae2022quantum,bostanci2023unitarycomplexityuhlmanntransformation,metger2024simpleconstructionslineardepthtdesigns,cryptoeprint:2024/1639}, as they are potentially harder to break. 

Recent work of Grewal, Iyer, Kretschmer and Liang~\cite{grewal2024efficientlearningquantumstates,10.1145/3618260.3649738} studied the sample complexity of learning general stabilizer states, and even more general states that feature few $\mathsf{T}$-gates. However, their setting is, in some sense, orthogonal to ours. For example, in our (single-shot) LSN learning task, the description of the encoding Clifford is entirely public, and the apparent hardness of learning arises in the presence of noise, whereas in their setting the task is to determine the set of stabilizers (and encoding Clifford) from several identical (pristine) copies of the unknown state.

Gollakota and Liang~\cite{Gollakota_2022} gave lower bounds on the sample complexity of PAC-learning noisy stabilizer states in the Statistical Query (SQ) model. While they do connect the hardness of their learning task to LPN via a reduction, it does not consider the problem of learning random stabilizers as in our setting, and therefore does not resemble our average-case learning task in any meaningful way.

\subsection{Open Problems}

Our work raises a number of interesting open questions; in particular:

\begin{itemize}
    \item Can we place a special variant of LSN in MicroCrypt~\cite{morimae2022quantum}, whereby the hardness of the problem plausibly does not imply the existence of one-way functions?\footnote{Interestingly, the LSN assumption as currently defined is not a MicroCrypt assumption, as the stabilizer syndrome decoding problem can always be solved with the help of an $\mathsf{NP}$ oracle due to the non-degeneracy  of random stabilizer codes. We thank Kabir Tomer and Justin Raizes for this observation.}

    \item Can we reduce the LPN problem to the standard LSN problem, where the underlying stabilizer code is uniformly random?

    \item What is the largest $n$-qubit Clifford subgroup that takes low-weight Paulis to low-weight Paulis? Note that this could potentially allow one to obtain a stronger worst-case to average-case reduction for LSN.

    \item Can we prove a much better worst-case to average-case reduction altogether which applies to the standard LSN problem for random stabilizer codes?
    \item Can we prove a search-to-decision reduction for LSN, similar to what is known for both the $\lpn$ problem~\cite{10.1007/s00145-010-9061-2} and the LWE problem~\cite{regev2009lattices}? This could enable a number of other cryptographic primitives, such as (succinct) quantum encryption, directly under the LSN assumption.
\end{itemize}

\subsection*{Acknowledgements}
This material is based upon work supported by the Department of Energy, Office of Science, National Quantum Information
Science Research Centers, Quantum Systems Accelerator, under Grant
number DOE DE-SC0012704, and Co-design Center for Quantum Advantage (C2QA) under contract number DE-SC0012704.
The authors would like to thank Alexandru Gheorghiu, Anand Natarajan, Aram Harrow, Henry Yuen, John Bostanci, Jonas Haferkamp, Jonas Helsen, Jonathan Conrad, Soonwon Choi, Thomas Vidick and Vinod Vaikuntanathan for useful discussions.
The authors are supported by the National Science Foundation (NSF) under Grant No.\ CCF-1729369. 

\section{Preliminaries}\label{sec:prelims}

Let us first introduce some basic notation and relevant background.

\paragraph{Notation.}
For $N\in \mathbb{N}$, we use $[N] = \{1,2,\dots,N\}$ to denote the set of integers up to $N$. The symmetric group on $[N]$ is denoted by $\frak{S}_N$. 
In slight abuse of notation, we sometimes identify elements $x \in [N]$ with bit strings $x \in \bit^n$ via their binary representation whenever $N=2^n$ and $n \in \mathbb{N}$. Similarly, we identify permutations $\pi \in \frak{S}_N$ with permutations $\pi: \bit^{n} \rightarrow 
\bit^n$ over bit strings of length $n$.

We write $\negl(\cdot)$ to denote any \emph{negligible} function, which is a function $f$ such that, for every constant $c \in \mathbb{N}$, there exists an integer $N$ such that for all $n > N$, $f(n) < n^{-c}$.

\paragraph{Probability theory.} 

The notation $x \sim X$ describes that an element $x$ is drawn uniformly at random from the set $X$, and we use $\mathrm{Unif}(X)$ to denote the uniform distribution over $X$. Similarly, if $\algo D$ is a general distribution, we let
$x \sim \algo D$ denote sampling $x$ according to $\algo D$.
We denote the expectation value of a random variable $X$ 
by $\mathbb{E}[X] = \sum_{x} x \Pr[ X = x] $. If $\algo D$ is a distribution over a set $X$, then we denote by $\algo D^{\otimes n}$ the $n$-wise product distribution over the set $X \times \dots \times X$. For a parameter $p\in [0,1]$, we let $\Ber_p$ denote the Bernoulli distribution with 
$$
\Pr[X=1]=p \quad\text{ and }\quad \Pr[X=0]=1-p, \quad \text{ for } X \sim \Ber_p.$$
We let $\mathsf{Bin}_{n,p}$ denote the Binomial distribution with $\Pr[X=k] = \binom{n}{k}p^k (1-p)^{n-k}$, for a random variable $X \sim \mathsf{Bin}_{n,p}$.

\paragraph{Quantum information.}

 For a comprehensive background on quantum computation, we refer to~\cite{NC}. We denote a finite-dimensional complex Hilbert space by $\mathcal{H}$, and we use subscripts to distinguish between different systems (or registers). For example, we let $\mathcal{H}_{\reg A}$ be the Hilbert space corresponding to a system ${\reg A}$. 
The tensor product of two Hilbert spaces $\algo H_{\reg A}$ and $\algo H_{\reg B}$ is another Hilbert space denoted by $\algo H_{\reg{AB}} = \algo H_{\reg A} \otimes \algo H_{\reg B}$.
The Euclidean norm of a vector $\ket{\psi} \in \algo H$ over the finite-dimensional complex Hilbert space $\mathcal{H}$ is denoted as $\| \psi \| = \sqrt{\braket{\psi|\psi}}$. 
Let $\linear(\algo H)$
denote the set of linear operators over $\algo H$. A quantum system over the $2$-dimensional Hilbert space $\mathcal{H} = \mathbb{C}^2$ is called a \emph{qubit}. For $n \in \mathbb{N}$, we refer to quantum registers over the Hilbert space $\mathcal{H} = \big(\mathbb{C}^2\big)^{\otimes n}$ as $n$-qubit states. We use the word \emph{quantum state} to refer to both pure states (unit vectors $\ket{\psi} \in \mathcal{H}$) and density matrices $\rho \in \mathcal{D}(\mathcal{H)}$, where we use the notation $\mathcal{D}(\mathcal{H)}$ to refer to the space of positive semidefinite matrices of unit trace acting on $\algo H$.

A quantum channel $\Phi: \linear(\algo H_{\reg A}) \rightarrow \linear(\algo H_{\reg B})$ is a linear map between linear operators over the Hilbert spaces $\algo H_{\reg A}$ and $\algo H_{\reg B}$. Oftentimes, we use the compact notation $\Phi_{\reg{A \rightarrow B}}$ to denote a quantum channel between $\linear(\algo H_{\reg A})$ and $\linear(\algo H_{\reg B})$. We say that a channel $\Phi$ is \emph{completely positive} if, for a reference system $\reg R$ of arbitrary size, the induced map $\id_{\reg R} \otimes \Phi$ is positive, and we call it \emph{trace-preserving} if $\Tr{\Phi(X)} = \Tr{X}$, for all $X \in \linear(\algo H)$. A quantum channel that is both completely positive and trace-preserving is called a quantum $\mathsf{CPTP}$ channel.
A \emph{unitary} $U: \linear(\mathcal{H}_{\reg A}) \rightarrow \linear(\mathcal{H}_{\reg A})$ is a special case of a quantum channel that satisfies $U^\dagger U = U U^\dagger = \id_{\reg A}$. We denote the $n$-qubit unitary group by $\mathcal{U}_n$.
%An isometry is a linear map $V: \linear(\mathcal{H}_{\reg A}) \rightarrow \linear(\mathcal{H}_{\reg B})$ with $\dim(\mathcal{H}_{\reg B}) \geq \dim(\mathcal{H}_{\reg A})$ and $V^\dag V = \id_{\reg A}$.
A \emph{projector} ${\Pi}$ is a Hermitian operator such that ${\Pi}^2 = {\Pi}$, and a \emph{projective measurement} is a collection of projectors $\{{\Pi}_i\}_i$ such that $\sum_i {\Pi}_i = \id$.
A positive-operator valued measure ($\mathsf{POVM}$) is a set of Hermitian positive semidefinite operators $\{M_i\}$ acting on a Hilbert space $\mathcal{H}$ such that $\sum_{i} M_i = \id$. 
A linear map $U \in \linear(\mathcal{H}_{\mathsf{A}}, \mathcal{H}_{\mathsf{B}})$ is called a partial isometry if there exists a projector $\Pi \in \linear(\mathcal{H}_{\mathsf{A}})$ and an isometry $\tilde U \in \linear(\mathcal{H}_{\mathsf{A}}, \mathcal{H}_{\mathsf{B}})$ such that $U = \tilde U \Pi$.
We call the image of the projector $\Pi$ the \emph{support} of the partial isometry $U$.
Because a partial isometry cannot be implemented in practice (it is not a trace-preserving operation), we also define a \emph{channel completion} of a partial isometry as any quantum channel that behaves like the partial isometry on its support, and can behave arbitrarily on the orthogonal complement of the support; specifically, for any
partial isometry $U \in \linear(\mathcal{H}_{\mathsf{A}}, \mathcal{H}_{\mathsf{B}})$, a \emph{channel completion of $U$} is a $\mathsf{CPTP}$ channel $\Phi \in \linear(\mathcal{H}_{\mathsf{A}}, \mathcal{H}_{\mathsf{B}})$ such that
\begin{align*}
    \Phi(\Pi \rho \Pi) = U \Pi \rho \Pi U^\dagger\,, \quad \text{ for } \rho \in \algo D(\mathcal{H}_{\mathsf{A}}),
\end{align*}
where $\Pi \in \linear(\mathcal{H}_{\mathsf{A}})$ is the projector onto the support of $U$.
If $\Phi$ is a unitary or isometric channel, we also call this a \emph{unitary} or \emph{isometric completion} of the partial isometry.

\paragraph{Quantum distance measures.}

Let $\rho,\sigma \in \algo D(\algo H)$ be two density matrices acting on the same Hilbert space $\algo H$. The (squared)  \emph{fidelity} between $\rho$ and $\sigma$ is defined as 
$$
\mathrm{F}(\rho,\sigma) = \| \sqrt{\rho} \sqrt{\sigma} \|_1^2 \, ,
$$
where $\| \cdot \|_1$ is the trace norm.
The \emph{trace distance} of $\rho,\sigma \in \mathcal{D}(\mathcal{H)}$ is given by
$$
\delta_{\mathsf{TD}}(\rho,\sigma) = \frac{1}{2} \| \rho - \sigma \|_1.
$$
The two distance measures are related via the Fuchs-van de Graaf inequalities:
\[
    1 - \sqrt{\mathrm{F}(\rho,\sigma)} \leq \delta_{\mathsf{TD}}(\rho,\sigma) \leq \sqrt{1 - \mathrm{F}(\rho,\sigma)}\,.
\]

We also use the following inequality.

\begin{lemma}[Strong convexity of trace distance (\cite{NC}, Theorem 9.3)]\label{lem:strong-convexity}
Let $\vec p=\left\{p_i\right\}$ and $\vec q=\left\{q_i\right\}$ be probability distributions over the same index set, and let $\{\rho_i\}$ and $\{\sigma_i\}$ be density operators, also with indices from the same index set. Then,
\begin{equation*}
\delta_{\mathsf{TD}}\left(\sum_i p_i \rho_i,\sum_i q_i \sigma_i \right) \leq \sum_i p_i \cdot \delta_{\mathsf{TD}}\left(\rho_i,\sigma_i\right)+\delta_{\mathsf{TV}}(\vec p,\vec q).
\end{equation*}
\end{lemma}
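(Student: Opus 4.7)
The plan is to prove the inequality by interpolating through the intermediate mixture $\tau := \sum_i p_i \sigma_i$, which separates the contribution of differing states (measured by the $\delta_{\mathsf{TD}}(\rho_i,\sigma_i)$ terms) from the contribution of differing weights (measured by the total variation distance $\delta_{\mathsf{TV}}(\vec p,\vec q)$). Concretely, I would apply the triangle inequality for the trace distance to write
\[
\delta_{\mathsf{TD}}\Bigl(\sum_i p_i \rho_i,\sum_i q_i \sigma_i\Bigr) \;\leq\; \delta_{\mathsf{TD}}\Bigl(\sum_i p_i \rho_i,\sum_i p_i \sigma_i\Bigr) + \delta_{\mathsf{TD}}\Bigl(\sum_i p_i \sigma_i,\sum_i q_i \sigma_i\Bigr),
\]
and then bound the two terms separately.

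For the first term, the idea is to invoke the standard (joint) convexity of the trace distance: since $\delta_{\mathsf{TD}}(\rho,\sigma) = \tfrac{1}{2}\|\rho-\sigma\|_1$ and the trace norm satisfies the triangle inequality, one immediately gets
\[
\delta_{\mathsf{TD}}\Bigl(\sum_i p_i \rho_i,\sum_i p_i \sigma_i\Bigr) \;=\; \tfrac{1}{2}\Bigl\|\sum_i p_i (\rho_i - \sigma_i)\Bigr\|_1 \;\leq\; \sum_i p_i \cdot \delta_{\mathsf{TD}}(\rho_i,\sigma_i).
\]
For the second term, I would exploit the fact that each $\sigma_i$ is a density operator with unit trace norm, so another application of the triangle inequality for $\|\cdot\|_1$ gives
\[
\delta_{\mathsf{TD}}\Bigl(\sum_i p_i \sigma_i,\sum_i q_i \sigma_i\Bigr) \;=\; \tfrac{1}{2}\Bigl\|\sum_i (p_i - q_i)\sigma_i\Bigr\|_1 \;\leq\; \tfrac{1}{2}\sum_i |p_i-q_i| \cdot \|\sigma_i\|_1 \;=\; \delta_{\mathsf{TV}}(\vec p,\vec q).
\]
Combining the two bounds yields the desired inequality.

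I do not anticipate any serious obstacles: the only subtlety is the choice of which mixture to interpolate through (one could equally well use $\sum_i q_i \rho_i$, which would lead to the same bound with the roles of $\vec p$ and $\vec q$ swapped in a cosmetic way), and the crucial quantitative fact is just that $\|\sigma_i\|_1 = 1$ for density operators, which is what makes the total variation term appear cleanly without extra constants.
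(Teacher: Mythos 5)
Your proof is correct. Note that the paper does not actually prove this lemma---it is imported verbatim from Nielsen and Chuang (Theorem 9.3)---so there is no in-paper argument to compare against; your decomposition $\sum_i p_i \rho_i - \sum_i q_i \sigma_i = \sum_i p_i(\rho_i - \sigma_i) + \sum_i (p_i - q_i)\sigma_i$, handled via the triangle inequality for $\|\cdot\|_1$ and the fact that $\|\sigma_i\|_1 = 1$, is the standard textbook proof (Nielsen and Chuang phrase the same decomposition through the variational characterization of the trace distance over projectors, which is an interchangeable route).
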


\paragraph{Uhlmann's theorem.} 

We frequently make use of the following theorem.

\begin{theorem}[{Uhlmann's theorem \cite{uhlmann1976transition}}] \label{thm:uhlmann}
Let $\ket{\psi}_{\mathsf{AB}}$ and $\ket{\phi}_{\mathsf{AB}}$ be pure states that live in a Hilbert space $\algo H_{\mathsf{AB}}$, and let $\rho_\mathsf{A}$ and $\sigma_\mathsf{A}$ denote their respective reduced states in register $\mathsf{A}$.
Then, there exists a unitary $U \in \mathrm{L}(\algo H_{\mathsf{B}})$ acting only on register $\mathsf{B}$ such that
\[
\mathrm{F}(\rho_{\mathsf{A}}, \sigma_{\mathsf{A}}) = |\bra{\phi}_{\mathsf{AB}} (\id_\mathsf{A} \otimes U_{\mathsf{B}}) \ket{\psi}_{\mathsf{AB}}|^2 \,.
\]
\end{theorem}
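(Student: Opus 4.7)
The plan is to prove Uhlmann's theorem by reducing the right-hand side to a standard variational expression for the trace norm, using the fact that all purifications of a given density matrix are related by a unitary on the purifying register. I would begin by fixing a canonical ``maximally entangled'' purification vehicle: let $\ket{\Omega}_{\mathsf{AB}} = \sum_i \ket{i}_{\mathsf{A}}\ket{i}_{\mathsf{B}}$ for some fixed orthonormal basis of $\algo H_{\mathsf{A}}$ (padding $\algo H_{\mathsf{B}}$ with an ancilla if necessary so that $\dim \algo H_{\mathsf{B}} \geq \dim \algo H_{\mathsf{A}}$). Using the mirroring identity $(M \otimes \id)\ket{\Omega} = (\id \otimes M^\intercal)\ket{\Omega}$, I observe that $(\sqrt{\rho_{\mathsf{A}}} \otimes \id_{\mathsf{B}})\ket{\Omega}_{\mathsf{AB}}$ is a purification of $\rho_{\mathsf{A}}$, and by the standard uniqueness-of-purification lemma, any other purification $\ket{\psi}_{\mathsf{AB}}$ of $\rho_{\mathsf{A}}$ equals $(\id_{\mathsf{A}} \otimes V)\,(\sqrt{\rho_{\mathsf{A}}} \otimes \id_{\mathsf{B}})\ket{\Omega}_{\mathsf{AB}}$ for some isometry $V$ on $\mathsf{B}$; similarly $\ket{\phi}_{\mathsf{AB}} = (\id_{\mathsf{A}} \otimes W)\,(\sqrt{\sigma_{\mathsf{A}}} \otimes \id_{\mathsf{B}})\ket{\Omega}_{\mathsf{AB}}$ for some $W$.

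Next I would substitute these expressions into $\bra{\phi}_{\mathsf{AB}}(\id_{\mathsf{A}} \otimes U_{\mathsf{B}})\ket{\psi}_{\mathsf{AB}}$ and collect all the $\mathsf{B}$-side operators into a single unitary $T = W^\dagger U V$. Using $\bra{\Omega}(X \otimes Y)\ket{\Omega} = \mathrm{Tr}(X^\intercal Y)$, the inner product collapses to
\[
\bra{\phi}_{\mathsf{AB}}(\id_{\mathsf{A}} \otimes U_{\mathsf{B}})\ket{\psi}_{\mathsf{AB}} \;=\; \mathrm{Tr}\!\left((\sqrt{\sigma_{\mathsf{A}}}\sqrt{\rho_{\mathsf{A}}})^\intercal\, T\right).
\]
Now the dependence on the unitary $U$ (equivalently $T$) is isolated in a clean trace expression, and the problem is reduced to choosing $T$ so that this quantity equals $\sqrt{\mathrm{F}(\rho_{\mathsf{A}}, \sigma_{\mathsf{A}})} = \|\sqrt{\rho_{\mathsf{A}}}\sqrt{\sigma_{\mathsf{A}}}\|_1$.

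The key final step invokes the variational identity $\|M\|_1 = \max_{T\text{ unitary}} |\mathrm{Tr}(TM)|$, with the maximizer obtained from the polar decomposition of $M$: writing $\sqrt{\rho_{\mathsf{A}}}\sqrt{\sigma_{\mathsf{A}}} = |\,\sqrt{\rho_{\mathsf{A}}}\sqrt{\sigma_{\mathsf{A}}}\,|\,P$ for a partial isometry $P$, one extends $P^\dagger$ to a unitary $T_0$ so that $\mathrm{Tr}(T_0(\sqrt{\sigma_{\mathsf{A}}}\sqrt{\rho_{\mathsf{A}}})^\intercal) = \|(\sqrt{\sigma_{\mathsf{A}}}\sqrt{\rho_{\mathsf{A}}})^\intercal\|_1 = \|\sqrt{\rho_{\mathsf{A}}}\sqrt{\sigma_{\mathsf{A}}}\|_1$. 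Setting $U := W T_0 V^\dagger$ then realizes the maximum, giving $|\bra{\phi}(\id \otimes U)\ket{\psi}|^2 = \mathrm{F}(\rho_{\mathsf{A}}, \sigma_{\mathsf{A}})$.

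The main subtlety, and the step I would handle most carefully, is the dimension bookkeeping between $\algo H_{\mathsf{A}}$ and $\algo H_{\mathsf{B}}$ and the distinction between unitary versus isometric freedom on $\mathsf{B}$: the reduction to the canonical purification is most cleanly done when $\dim \algo H_{\mathsf{B}} \geq \dim \algo H_{\mathsf{A}}$, and if this fails one has to argue separately (e.g.\ by symmetry in $\rho$ versus $\sigma$, or by noting that both reduced states have rank at most $\dim \algo H_{\mathsf{B}}$). Modulo this, the plan is essentially mechanical once the canonical purification and the polar-decomposition characterization of $\|\cdot\|_1$ are combined, and the unitary $U$ whose existence is asserted is produced constructively from the polar decomposition of $\sqrt{\rho_{\mathsf{A}}}\sqrt{\sigma_{\mathsf{A}}}$.
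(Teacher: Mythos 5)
The paper does not prove this statement: it is quoted as an imported result (Uhlmann's theorem) with a citation to the original reference, so there is no in-paper proof to compare against. Your argument is the standard textbook proof and is correct: reduce both states to the canonical purification $(\sqrt{\rho_{\mathsf{A}}}\otimes \id)\ket{\Omega}$ via isometries on $\mathsf{B}$, use $\bra{\Omega}(X\otimes Y)\ket{\Omega}=\mathrm{Tr}(X^\intercal Y)$ to collapse the overlap to $\mathrm{Tr}\bigl((\sqrt{\sigma_{\mathsf{A}}}\sqrt{\rho_{\mathsf{A}}})^\intercal T\bigr)$, and realize $\|\sqrt{\rho_{\mathsf{A}}}\sqrt{\sigma_{\mathsf{A}}}\|_1$ with the unitary from the polar decomposition. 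The two points you rightly flag as needing care are exactly the only delicate ones: (i) when $V,W$ are proper isometries, $U:=WT_0V^\dagger$ is only a partial isometry and must be completed to a unitary on $\algo H_{\mathsf{B}}$ (the completion does not change $W^\dagger U V$); and (ii) the dimension bookkeeping, which is handled by restricting to the supports of $\rho_{\mathsf{A}},\sigma_{\mathsf{A}}$ (both have rank at most $\dim\algo H_{\mathsf{B}}$ since $\ket{\psi},\ket{\phi}$ live in $\algo H_{\mathsf{AB}}$). With those details filled in, the proof is complete.
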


\paragraph{Gentle Measurement.} We also make use of the following well-known lemma, which is often called the Gentle Measurement Lemma.

\begin{lemma}[\cite{Wilde_2013}, Lemma 9.4.1]
Let $\rho \in \algo D(\algo H)$ be an arbitrary density matrix, and let $\Lambda$ be any positive semidefinite hermitian matrix. Then,
$$
\delta_{\mathsf{TD}}\left(\rho,\frac{\sqrt{\Lambda} \rho \sqrt{\Lambda}}{\mathrm{Tr}[\Lambda \rho]}\right)  \, \leq \, \sqrt{1 -\mathrm{Tr}[\Lambda \rho]}.
$$
\end{lemma}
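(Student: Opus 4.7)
The plan is to reduce to the pure-state case via purification and then combine Fuchs--van de Graaf with an operator-monotonicity argument for the square root. Implicitly the lemma is interesting only when $\Lambda \leq \id$ (otherwise $\mathrm{Tr}[\Lambda \rho]$ can exceed $1$ and the bound becomes vacuous or imaginary), so I will assume that regime; this is the standard setting from Wilde's book.

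First, I would handle pure states $\rho = \kb{\psi}$. In that case the post-measurement state
\[
\sigma \;=\; \frac{\sqrt{\Lambda} \kb{\psi} \sqrt{\Lambda}}{\mathrm{Tr}[\Lambda \rho]}
\]
is itself rank-one, so the fidelity collapses to an inner product:
\[
\mathrm{F}(\rho, \sigma) \;=\; \bra{\psi} \sigma \ket{\psi} \;=\; \frac{|\bra{\psi} \sqrt{\Lambda} \ket{\psi}|^2}{\bra{\psi} \Lambda \ket{\psi}}.
\]
Since $0 \leq \Lambda \leq \id$, the scalar function $x \mapsto \sqrt{x}$ satisfies $\sqrt{x} \geq x$ on $[0,1]$, and by operator monotonicity we have $\sqrt{\Lambda} \succeq \Lambda$. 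Thus $\bra{\psi} \sqrt{\Lambda} \ket{\psi} \geq \bra{\psi} \Lambda \ket{\psi}$, and substituting gives
\[
\mathrm{F}(\rho, \sigma) \;\geq\; \bra{\psi}\Lambda\ket{\psi} \;=\; \mathrm{Tr}[\Lambda \rho].
\]
Feeding this into the Fuchs--van de Graaf inequality $\delta_{\mathsf{TD}}(\rho,\sigma) \leq \sqrt{1 - \mathrm{F}(\rho,\sigma)}$ yields the claim in the pure-state case.

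For a general mixed $\rho \in \algo D(\algo H_{\mathsf A})$, I would introduce a reference system $\mathsf B$ and a purification $\ket{\psi}_{\mathsf{AB}}$ of $\rho$. Applying the pure-state bound to $\kb{\psi}_{\mathsf{AB}}$ with the extended operator $\Lambda_{\mathsf A} \otimes \id_{\mathsf B}$ (which is still psd and $\leq \id$, and satisfies $\mathrm{Tr}[(\Lambda_{\mathsf A} \otimes \id_{\mathsf B})\kb{\psi}_{\mathsf{AB}}] = \mathrm{Tr}[\Lambda \rho]$) produces
\[
\delta_{\mathsf{TD}}\!\left(\kb{\psi}_{\mathsf{AB}}, \frac{(\sqrt{\Lambda}\otimes \id)\kb{\psi}_{\mathsf{AB}}(\sqrt{\Lambda}\otimes \id)}{\mathrm{Tr}[\Lambda \rho]}\right) \;\leq\; \sqrt{1 - \mathrm{Tr}[\Lambda \rho]}.
\]
Then I would trace out the register $\mathsf B$ and invoke monotonicity of the trace distance under the partial trace channel. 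The partial trace on the left entry returns $\rho$, and on the right entry returns precisely $\frac{\sqrt{\Lambda}\rho\sqrt{\Lambda}}{\mathrm{Tr}[\Lambda \rho]}$, completing the proof.

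I do not anticipate any serious obstacle: the only subtle ingredient is the operator-monotonicity step $\sqrt{\Lambda} \succeq \Lambda$ for $\Lambda \leq \id$, which is standard. The main thing to be careful about is ensuring the purification step preserves both the operator bound and the trace $\mathrm{Tr}[\Lambda \rho]$, which is immediate once one extends $\Lambda$ by tensoring with the identity on the purifying register.
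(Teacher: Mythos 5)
The paper does not prove this lemma---it is imported verbatim from Wilde's book---and your argument is correct and is essentially the standard proof from that reference: establish the pure-state case via $\mathrm{F}(\rho,\sigma)=|\bra{\psi}\sqrt{\Lambda}\ket{\psi}|^2/\bra{\psi}\Lambda\ket{\psi}\geq \Tr{\Lambda\rho}$ using $\sqrt{\Lambda}\succeq\Lambda$, apply Fuchs--van de Graaf, then lift to mixed states by purification and monotonicity of trace distance under the partial trace. Your observation that the hypothesis $\Lambda\preceq\id$ is implicitly required (the paper's statement omits it) is correct and harmless here, since every invocation in the paper takes $\Lambda$ to be a projector; the only cosmetic quibble is that $\sqrt{\Lambda}\succeq\Lambda$ follows from the functional calculus applied to a single operator with spectrum in $[0,1]$ rather than from operator monotonicity per se.
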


\paragraph{Permutation operators.} Let $n \in \mathbb{N}$ be an integer. Then, for a permutation $\pi \in \frak{S}_n$, we define the corresponding $n$-qubit permutation operator $\mathcal{Q}(\pi)$ over $(\mathbb{C}^2)^{\otimes n}$ as
$$
\mathcal{Q}(\pi):=\sum_{i_1, \ldots, i_n\in\{0,1\}}\left|i_{\pi^{-1}(1)}, \ldots, i_{\pi^{-1}(n)}\right\rangle\left\langle i_1, \cdots, i_n\right|.$$
In other words, $\mathcal{Q}(\pi)$ is the unitary operator that permutes all of the single-qubit qubit registers according to the permutation $\pi$. By linearity, the operator $\mathcal{Q}(\pi)$ also permutes any product of single-qubit linear operators $\vec O_1,\dots,\vec O_n \in \mathrm{L}(\mathbb{C}^2)$ as follows:
$$
\mathcal{Q}(\pi) (\vec O_1 \otimes \dots \otimes \vec O_n) \mathcal{Q}(\pi)^\dag = (\vec O_{\pi^{-1}(1)} \otimes \dots \otimes \vec O_{\pi^{-1}(n)}).
$$

%\nota{Notational suggestion: use Maris notation (\url{http://home.lu.lv/~sd20008/papers/essays/Clifford%20group%20[presentation].pdf}) where we take the Pauli group to be without signs and the one time we care about signs we put $\pm$. }
\subsection{Pauli group.} The $n$-qubit Pauli group $\Pn$ consists of $n$-fold tensor products of Pauli operators. In other words, this is a group of order $|\Pn| = 2^{2n+1}$ with
$$
\Pn = \{ \pm \id, \pm \mathsf{X}, \pm \mathsf{Y},\pm \mathsf{Z} \}^{\otimes n}.
$$
Once we switch to the symplectic representation, we are going to ignore signs and instead consider the quotient group given by
$$
\bar{\mathcal{P}}_n = \Pn/\{\pm \id^{\otimes n}\}.
$$

An important property of Pauli group is the fact that any two Pauli operators either commute or anti-commute, i.e., it holds that $PQ=\pm QP$ for any Pauli operators $P,Q$. 

\paragraph{Symplectic representation.} Instead of working with $\Pn$, we sometimes use the quotient group $\bar{\mathcal{P}}_n = \Pn/\{\pm \id^{\otimes n}\}$ to reason about Paulis. In this case, every unsigned Pauli $P \in \bar{\mathcal{P}}_n$ can be specified in terms of a pair $\vec p_x, \vec p_z \in \bit^n$ such that
$$
P = \bigotimes_{i=1}^n \mathsf{X}^{\vec p_{x,i}} \cdot \bigotimes_{i=1}^n \mathsf{Z}^{\vec p_{z,i}}.
$$
Therefore, we can directly identify the unsigned Paulis $\bar{\mathcal{P}}_n$ with a binary vector space of dimension $2n$ such that $\bar{\mathcal{P}}_n \cong \Z_2^{2n}$. Note that multplication of Paulis $P,Q$ in $\bar{\mathcal{P}}_n$, each represented by $\vec p =(\vec p_x|\vec p_z)$ and $\vec q=(\vec q_x|\vec q_z)$, amounts to addition in $\Z_2^{2n}$:
$$
(\vec p_x|\vec p_z) \cdot (\vec q_x|\vec q_z) = (-1)^{\vec p_x \cdot \vec q_z + \vec p_z \cdot \vec q_x} (\vec p_x \oplus \vec q_x|\vec p_z \oplus \vec p_x).
$$
We call $(\vec p_x|\vec p_z) \odot (\vec q_x|\vec q_z) := \vec p_x \cdot \vec q_z + \vec p_z \cdot \vec q_x$ the symplectic inner product. Two Paulis $P,Q$ commute if and only if their symplectic inner product of its $\Z_2^{2n}$ representations $(\vec p_x|\vec p_z)$ and $(\vec q_x|\vec q_z)$ vanishes. In other words, if and only if $\vec  p \odot \vec q = 0 \Mod{2}$.

\subsection{Clifford group.} The Clifford group is the set of unitaries that normalizes the Pauli group; that is
\begin{equation}
\Cliff_n=\left\{U \in \algo U_n \mid U P U^{\dagger} \in \algo{P}_n, \, \forall P \in \algo{P}_n\right\}
\end{equation}
The Clifford group also contains operators of the form $e^{i\theta} I$. for some $\theta \in [0, 2\pi]$. These global phase operators are often irrelevant to us because Clifford operators act by conjugation, and thus we consider two Clifford operators to be equivalent if they differ only by a global phase. Thus we will define
%By definition, the Pauli group is a normal subgroup of the Clifford group. 
%Sometimes it is more useful to mod out the Pauli subgroup when describing the Clifford group; that is to say, to consider two Clifford unitaries as equivalent when they differ only by a Pauli. Another source of redundancy in describing Cliffords is their \textit{global phase}; We will often want to mod that out as well. Following the convention of \cite{gottesman_book}, we will define
\begin{equation}
\vCliff_n= \Cliff_n/\{e^{i\theta}\mathcal{P}_n\}.
\end{equation}

%\ynote{Introduce how Cliffords act on symplectic representation by multiplication.}

We also quote a result about the generators of the Clifford group:
\begin{theorem}[Clifford generators \cite{gottesman_book}]
    Any gate in the Clifford group $\Cliff_n$ can be written as a product of $e^{i \theta} I, H_i, R_{\pi / 4, i}$, and $\mathrm{CNOT}_{i, j}$, with $i, j=1, \ldots, n$ and $\theta \in[0,2 \pi)$.
\end{theorem}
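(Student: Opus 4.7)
The plan is to use that a Clifford is projectively determined by its conjugation action on the $2n$ Pauli generators $X_1,\dots,X_n,Z_1,\dots,Z_n$, so it suffices to build an arbitrary such action out of the stated gates. For the projective-determination fact: if $U, U' \in \Cliff_n$ induce the same conjugation on all of $\Pn$, then $U^\dagger U'$ centralizes $\Pn$; since $\Pn$ acts irreducibly on $(\mathbb{C}^2)^{\otimes n}$, Schur's lemma forces $U^\dagger U' = e^{i\theta} I$. Consequently, if I can show that for every $U \in \Cliff_n$ there is a product $V$ of gates from $\{H_j, R_{\pi/4,j}, \mathrm{CNOT}_{j,k}\}$ such that $V U$ differs from an element of $\Pn$ by a global phase, the proof is complete, because $\Pn$ itself lies in the group generated by $\{e^{i\theta} I, H_j, R_{\pi/4,j}\}$ (using $Z_j = R_{\pi/4,j}^2$, $X_j = H_j Z_j H_j$, and $Y_j = i X_j Z_j$).

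I would construct $V$ by induction on the qubit index $k = 1,\dots, n$, maintaining the invariant that the current Clifford $U$ satisfies $U X_j U^\dagger = \pm X_j$ and $U Z_j U^\dagger = \pm Z_j$ for all $j < k$. Let $P := U X_k U^\dagger$; since $X_k$ commutes with every $X_j, Z_j$ for $j < k$, so does $P$, forcing $P$ to act trivially on qubits $<k$. On qubits $\geq k$, I would convert every single-qubit factor of $P$ to an $X$ by conjugating with appropriate products of $H_j, R_{\pi/4, j}$, and then apply $\mathrm{CNOT}_{k,j}$ for each surplus factor, using $\mathrm{CNOT}_{k,j}(X_k X_j)\mathrm{CNOT}_{k,j} = X_k$ to collapse $P$ onto $\pm X_k$. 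Because no gate touches qubits $<k$, the invariant is preserved.

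Next I reduce $P' := U Z_k U^\dagger$ (in the updated $U$) to $\pm Z_k$ without disturbing the image of $X_k$. Anticommutation with $X_k$ together with commutation with the earlier generators forces $P'$ to be trivial on qubits $<k$ and to have qubit-$k$ factor in $\{Y_k, Z_k\}$. The gate $H_k R_{\pi/4, k} H_k$ acts on single-qubit Paulis as $X \mapsto X$, $Y \mapsto Z$, $Z \mapsto -Y$; applying it precisely when the qubit-$k$ factor is $Y_k$ converts that factor to $Z_k$ while fixing $X_k$. For each qubit $j > k$ carrying a nontrivial factor in $P'$, I would use single-qubit gates on qubit $j$ alone to turn that factor into $Z_j$, then apply $\mathrm{CNOT}_{j,k}$, which maps $Z_k Z_j \mapsto Z_k$ and fixes $X_k$. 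After all such $j$ are processed, $P' \mapsto \pm Z_k$ and the induction advances to index $k+1$.

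The main obstacle is the bookkeeping in the last step: one must certify that every move used to fix the $Z_k$-image leaves the freshly fixed $X_k$-image alone and does not reintroduce action on qubits $<k$. This restricts the admissible operations to single-qubit gates on qubits $>k$, $\mathrm{CNOT}_{j,k}$ with $j > k$, and the particular single-qubit conjugation $H_k R_{\pi/4,k} H_k$ on qubit $k$; each of these can be checked directly against the defining relations $H X H = Z$, $H Z H = X$, $R_{\pi/4} X R_{\pi/4}^\dagger = Y$, $R_{\pi/4} Z R_{\pi/4}^\dagger = Z$ and the standard $\mathrm{CNOT}$ conjugation rules. Once the induction terminates at $k = n$, the resulting Clifford sends every generator to $\pm$ itself, so by the projective-determination fact it equals a Pauli operator times a phase, concluding the argument.
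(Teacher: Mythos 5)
The paper offers no proof of this statement: it is explicitly \emph{quoted} from Gottesman's book, so there is nothing in the paper to compare your argument against, and your proposal should be judged on its own. What you wrote is the standard synthesis proof, and it is essentially sound. The Schur's-lemma step is valid because the $n$-qubit Paulis span $\linear\big((\mathbb{C}^2)^{\otimes n}\big)$, so the commutant of $\Pn$ is the scalars; the reduction to realizing an arbitrary commutation-preserving action on the $2n$ generators is correct; and the final sign correction works because every sign pattern $X_j\mapsto \epsilon_j X_j$, $Z_j\mapsto \delta_j Z_j$ is induced by conjugation by some Pauli $\prod_j X_j^{a_j}Z_j^{b_j}$, which in turn lies in the group generated by $e^{i\theta}I$, $H_j$, $R_{\pi/4,j}$ via $Z=R_{\pi/4}^2$, $X=HZH$, $Y=iXZ$. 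Your bookkeeping for the $Z_k$ stage (only single-qubit gates on qubits $>k$, $\mathrm{CNOT}_{j,k}$ with $j>k$, and $H_kR_{\pi/4,k}H_k$, each of which fixes the already-pinned image $\pm X_k$ and never touches qubits $<k$) checks out against the stated conjugation rules.

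One case is missing from the $X_k$ stage: if $P=UX_kU^\dagger$ acts as the identity on qubit $k$ itself, the collapse rule $\mathrm{CNOT}_{k,j}(X_kX_j)\mathrm{CNOT}_{k,j}=X_k$ has no $X_k$ factor to latch onto. Since $P\neq\pm I^{\otimes n}$ (conjugation preserves non-centrality) and $P$ is forced to be trivial on qubits $<k$, it must be nontrivial on some qubit $j\geq k$; if that happens only for $j>k$, first transport the support onto qubit $k$, e.g.\ by a $\mathrm{SWAP}_{k,j}$ built from three $\mathrm{CNOT}$s, which acts only on qubits $\geq k$ and therefore preserves your invariant. With that one line added, the induction closes and the argument is complete.
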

In the rest of this paper, we assume all Clifford circuits are written in terms of the above gates. In fact, we can compile the basic set of Clifford generators above into a more convenient gateset: 
\begin{equation}
    \{e^{i \theta} I, H, R_{\pi / 4}, \mathrm{CNOT}, \mathrm{SWAP}, \mathrm{C}\mbox{-}\mathrm{Z}.\}
\end{equation}
This is seen to be equivalent to the original gateset by using the identities $\mathrm{C}\mbox{-}\mathrm{Z}=(I \otimes H) \operatorname{CNOT}(I \otimes H)$ and $\mathrm{SWAP}_{i,j} = \mathrm{CNOT}_{i\rightarrow j}\otimes \mathrm{CNOT}_{j\rightarrow i} \otimes \mathrm{CNOT}_{i\rightarrow j}$, pictorially illustrated in Figure \ref{fig:transpositions}. Here we have, for the only time in this paper, used the notation $\mathrm{CNOT}_{i\rightarrow j}$ to denote a CNOT with $i$ as the control qubit and $j$ as the target qubit. 

\paragraph{Permutations and Local Cliffords.} 

Note that, for any $\pi \in \frak{S}_n$, the permutation operator $\mathcal{Q}(\pi)$ can be implemented as a Clifford operation by composing transpositions. These are themselves Clifford operations consisting of three consecutive CNOTs, as illustrated in \Cref{fig:transpositions}.
\begin{figure}[h]
    \centering
    \begin{quantikz}
       (i)\, & \qw & \ctrl{2} & \targ{} & \ctrl{2} & \qw& \qw \\
           & \vdots & & & & & \\
       (j)\, & \qw & \targ{} & \ctrl{-2} & \targ{} & \qw & \qw \\
    \end{quantikz}
    \caption{Circuit to transpose qubit $i$ and $j$ \label{fig:transpositions}}
\end{figure}
This motivates us to consider the $n$-qubit subgroup $\plc_n \leq \vCliff_n$ generated by permutation operators and local Clifford gates:
\begin{equation*}
 \plc_n := \left\{ C \in \vCliff_n \mid   C = \bigotimes_{i=1}^nC_i\circ \mathcal{Q}(\pi) \, : \,  C_1, \dots, C_n \in \vCliff_1 \,\,\text{ and } \,\, \pi \in \frak{S}_n  \right\}.
\end{equation*}

It is easy to see that $\mathsf{PLC}_n$ forms a group, as captured by the following lemma. 

\begin{lemma}[$\plc$ is a group\label{lem:plcgroup}]
    The set of all Clifford unitaries in $\mathsf{PLC}_n$ form a subgroup of the $n$-qubit unitary group under unitary composition (i.e., matrix multiplication).
\end{lemma}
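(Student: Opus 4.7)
The plan is to verify the three subgroup axioms: containing the identity, closure under products, and closure under inverses. The identity is immediate, obtained by taking $\pi$ to be the identity permutation in $\frak{S}_n$ and each $C_i = \id$. The main content lies in closure under products, which reduces entirely to the commutation rule for $\mathcal{Q}(\pi)$ with single-qubit tensor factors already recorded in the preliminaries, namely
\[
\mathcal{Q}(\pi)\bigl(\vec O_1 \otimes \cdots \otimes \vec O_n\bigr)\mathcal{Q}(\pi)^\dagger \;=\; \vec O_{\pi^{-1}(1)} \otimes \cdots \otimes \vec O_{\pi^{-1}(n)}.
\]

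For closure under products, I would take two arbitrary elements
\[
A \;=\; \Bigl(\bigotimes_{i=1}^n A_i\Bigr)\mathcal{Q}(\pi), \qquad B \;=\; \Bigl(\bigotimes_{i=1}^n B_i\Bigr)\mathcal{Q}(\sigma)
\]
and push $\mathcal{Q}(\pi)$ past the tensor $\bigotimes_i B_i$ using the identity above: writing $\mathcal{Q}(\pi)\bigl(\bigotimes_i B_i\bigr) = \bigl(\bigotimes_i B_{\pi^{-1}(i)}\bigr)\mathcal{Q}(\pi)$. Composing the single-qubit Cliffords factor by factor, we obtain
\[
AB \;=\; \Bigl(\bigotimes_{i=1}^n A_i\, B_{\pi^{-1}(i)}\Bigr)\mathcal{Q}(\pi\sigma),
\]
which is again of the required form since each $A_i B_{\pi^{-1}(i)} \in \vCliff_1$ and $\pi\sigma \in \frak{S}_n$. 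For inverses, starting from $C = \bigl(\bigotimes_i C_i\bigr)\mathcal{Q}(\pi)$, I would write $C^{-1} = \mathcal{Q}(\pi)^\dagger \bigl(\bigotimes_i C_i^\dagger\bigr) = \mathcal{Q}(\pi^{-1})\bigl(\bigotimes_i C_i^\dagger\bigr)$, and then apply the same commutation rule with $\pi$ replaced by $\pi^{-1}$ to bring the single-qubit factors to the left:
\[
C^{-1} \;=\; \Bigl(\bigotimes_{i=1}^n C_{\pi(i)}^\dagger\Bigr)\mathcal{Q}(\pi^{-1}),
\]
which is in $\plc_n$ because each $C_{\pi(i)}^\dagger \in \vCliff_1$ and $\pi^{-1} \in \frak{S}_n$.

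I do not expect any genuine obstacle here; the only thing to be slightly careful about is bookkeeping of the permuted indices, and to note that the whole argument takes place in $\vCliff_n$ rather than $\Cliff_n$, so we do not need to worry about global phases when combining the $A_i B_{\pi^{-1}(i)}$ factors. The lemma is essentially the statement that $\plc_n$ is the semidirect product $\vCliff_1^{\otimes n} \rtimes \frak{S}_n$ sitting naturally inside $\vCliff_n$, and the computation above is exactly the semidirect-product multiplication rule.
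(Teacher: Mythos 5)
Your proof is correct; the paper actually states this lemma without proof (prefacing it only with ``it is easy to see''), and your semidirect-product computation --- pushing $\mathcal{Q}(\pi)$ past the local tensor factors via the conjugation rule, yielding $AB = \bigl(\bigotimes_i A_i B_{\pi^{-1}(i)}\bigr)\mathcal{Q}(\pi\sigma)$ and $C^{-1} = \bigl(\bigotimes_i C_{\pi(i)}^\dagger\bigr)\mathcal{Q}(\pi^{-1})$ --- is exactly the standard argument the authors evidently had in mind. The index bookkeeping checks out (with the paper's convention one indeed has $\mathcal{Q}(\pi)\mathcal{Q}(\sigma)=\mathcal{Q}(\pi\sigma)$ and $\mathcal{Q}(\pi)^\dagger=\mathcal{Q}(\pi^{-1})$), and your remark that the argument lives in $\vCliff_n$ so phases are immaterial is the right caveat.
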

This group is a key tool in our worst-to average-case reduction in \Cref{sec:worst-avg-reduction}.

\subsection{Quantum Noise Channels}

\paragraph{Single-qubit depolarizing noise.} For $M \in L(\mathbb{C}^d)$, we define the single-qubit depolarizing noise channel $\mathcal{D}_p: L(\mathbb{C}^d) \rightarrow L(\mathbb{C}^d)$ as one that acts as
\begin{equation}
    \mathcal{D}_p(M):=(1- 4/3p)M+ 4/3p\Tr{M}\frac{\mathbb{I}}{2},\quad p\in[0,3/4].
\end{equation}
%We have restricted the parameter $p\in[0,1]$, although we note that $\mathcal{D}_p$ is a completely positive and trace-preserving map for $p\in [0,4/3].$ 
This channel, when acting on a quantum state $\rho$, has the Kraus representation %\alex{a.k.a. Kraus representation right?}
\begin{equation}\label{eq-single_qubit_depolar_Pauli}
\mathcal{D}_p(\rho) = \frac{p}{3} X\rho X + \frac{p}{3} Y \rho Y + \frac{p}{3} Z\rho Z + (1- p)\rho,
\end{equation}
Thus, a single-qubit depolarizing channel can be thought of as a random Pauli channel $E(\cdot)E^{\dagger}, \, E\in \mathcal{P}_1$, where $E$ can be sampled as follows: pick $wt(E) \sim \mathsf{Bernoulli}(p)$ (i.e. apply $\text{id}(\cdot)$ with probability $1-p$) then uniformly sample from the $E\in \mathcal{P}_1$ with weight $w$.

\paragraph{Tensor product of single-qubit depolarizing noise channels.}
We will model noise acting on an $n$-qubit quantum state as a tensor product of $n$ single-qubit depolarizing noise channels with parameter $p \in (0,3/4)$, where for $\rho\in L(\mathbb{C}^{2^n})$ 
\begin{equation}\label{eq:nqubitdep}
\mathcal{D}_p^{\otimes n}(\rho) := \sum_{E\in \mathcal{P}_n} \left(\frac{p}{3}\right)^{|E|}\left(1-p\right)^{n-|E|} E\rho E^{\dagger} =: \sum_{E\in \mathcal{P}_n} \Pr_{E\sim \mathcal{D}_p^{\otimes n}}[E] E\rho E^{\dagger}.
\end{equation}
It is not hard to check that \Cref{eq:nqubitdep} yields \Cref{eq-single_qubit_depolar_Pauli} for $n=1$. This channel is equivalent to acting with a random Pauli channel $E(\cdot)E^{\dagger}, \, E\in \mathcal{P}_n$, where $E$ can be sampled as follows: pick $wt(E) = w \sim \mathsf{Binom}(n, p)$, then uniformly sample from the $E\in \mathcal{P}_n$ with weight $w$. We will often use this interpretation of noise channels as probabilistically applying Pauli errors. 

\paragraph{Bit-flip noise channels.}
For $\rho \in L(\mathbb{C}^2)$, the single-qubit bit-flip noise channel $\mathcal{F}_p: L(\mathbb{C}^2) \rightarrow L(\mathbb{C}^2)$ acts as 
\begin{equation}\label{eq:bitflip}
    \mathcal{F}_p(M):= (1-p) \rho + p X\rho X.
\end{equation}
This may, as usual, be extended to a tensor product of $n$ single-qubit bit-flip noise channels via:
\begin{equation}\label{eq:nbitflip}
    \mathcal{F}_p^{\otimes n}(\rho):= \sum_{b\in \{0,1\}^n} p^{|b|}(1-p)^{n-|b|} X^b\rho X^{b},
\end{equation}
where for $i\in [n]$, the $i$-th qubit of $X^b$ is $(X^b)_i:= X^{\id(b_i=1)}.$

% \paragraph{Uniform over weight-$w$ Paulis.}
% In our worst-to-average-case reduction, after applying the re-randomizing unitary the noise on the effective codeword will correspond to a channel that acts with a uniform weight-$w$ Pauli, and we denote this channel as $\mathcal{U}_w$:
%% Commenting it out because it's defined in the relevant section.

\subsection{Stabilizer Codes.}\label{sec:stabilizer}
We can use $\Pn$ to characterize a quantum error correcting code as follows. Let $S \leq \Pn$ be an abelian subgroup which has $n-k$ generators. In other words,
$$
S = \langle \vec g_1,\dots, \vec g_{n-k} \rangle.
$$
Then, we define the codespace $C(S)\subseteq (\mathbb{C}^2)^{\otimes n}$ as the simultaneous $(+1)$ eigenspace of all elements of $S$. \footnote{Because $S \subseteq \Pn$ is an abelian subgroup its elements can all be simultaneously diagonalized.}. In other words,
$$
C(S) = \big\{ \ket{\psi} \in \left(\mathbb{C}^2 \right)^{\otimes n} \, : \, M \ket{\psi} = (+1) \ket{\psi}, \,\, \forall M \in S\big\}.
$$
If $S$ has $n-k$ generators, then we know that $C(S)$ has precisely $2^k$ many codewords. We also use $\text{Stab}(n,k) $ to denote the set of all stabilizer subgroups of $\mathcal{P}_n$ with $n-k$ generators. 

The normalizer, denoted by $N(S)$, is the set of Paulis that commute with $S$\footnote{Technically, this is the centralizer but in our case they are the same.}:
$$
N(S) = \big\{ P \in \Pn \, : \, PQ = QP, \,\, \forall Q \in S\big\}.
$$
How do we detect errors? Suppose we have a class of errors $\mathcal{E} = \{ E_a\}$ where each $E_a \in \Pn$ is a low-weight Pauli error. Let $\ket{\psi} \in C(S)$ be a codeword. Then, for every generator $\vec g_i$ in $S$, one of two possible events takes place:
\begin{itemize}
    \item $\vec g_i$ commutes with $E_a$, in which case
    $$
\vec g_i E_a \ket{\psi}= E_a \vec g_i \ket{\psi} = E_a \ket{\psi}.
$$
\item $\vec g_i$ anti-commutes with $E_a$, in which case
    $$
\vec g_i E_a \ket{\psi}= (-1) E_a \vec g_i \ket{\psi} = (-1) E_a \ket{\psi}.
$$
\end{itemize}
We can detect this phase by performing a measurement. In general, we want to do this for every generator and collect a syndrome vector $s_a \in \bit^{n-k}$ such that
$$
\vec g_i E_a = (-1)^{s_{a,i}} E_a \vec g_i \, , \quad \forall i \in [n-k].
$$
The task of recovering a Pauli error from its corresponding stabilizer syndrome is called the stabilizer \emph{syndrome decoding problem}.

The {\em distance} of a quantum code $C$ is the minimum Hamming weight $d$ of an error that the code cannot correct. We use the following well-known fact (see e.g.~\cite{gottesman_book}).

\begin{theorem}[Knill-Laflamme conditions\label{thm:KLconditions}]
Let $\mathcal{E}$ be a set of errors with maximum Hamming weight $d$. Then, $C\subseteq (\mathbb{C}^2)^{\otimes n}$ is a $[[n,k,d]]$ quantum error correcting code with distance $d$ if and only if for every $\ket{\psi},\ket{\phi} \in C$ and for all $E_a, E_b \in \mathcal{E}$, it holds that
$$
\bra{\psi}E_a^{\dagger} E_b\ket{\phi}=c_{a b}\braket{\psi|\phi}
$$
for some hermitian matrix $c_{a b}$. 
Note that $c_{a b}$ does not depend on $\ket{\psi}$ or $\ket{\phi}$.
\end{theorem}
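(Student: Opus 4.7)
The plan is to prove the biconditional via the standard approach pioneered by Knill and Laflamme, leveraging a diagonalization of the matrix $c_{ab}$ to produce a canonical error basis under which the errors sort codewords into mutually orthogonal subspaces.

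For the $(\Leftarrow)$ direction, assume the stated orthogonality relations hold. I would first strengthen Hermiticity of $c_{ab}$ to positive semidefiniteness by observing that, for any complex vector $\vec v$, $\sum_{a,b} \overline{v_a} v_b c_{ab} = \|\sum_a v_a E_a \ket{\psi}\|^2 \geq 0$. Diagonalizing $c = U D U^\dagger$ with $D = \mathrm{diag}(d_1,\dots,d_m)$ and $d_k \geq 0$, define canonical errors $F_k := \sum_a (U^\dagger)_{ka} E_a$. A direct computation yields $\bra{\psi} F_k^\dagger F_l \ket{\phi} = d_k \delta_{kl} \braket{\psi|\phi}$, so for distinct $k$ the subspaces $F_k(C) \subset (\mathbb{C}^2)^{\otimes n}$ are pairwise orthogonal and $F_k$ restricted to $C$ equals $\sqrt{d_k}$ times an isometry. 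The recovery channel then assembles from projectors $P_k$ onto each $F_k(C)$ followed by the polar-decomposition inverse of $F_k|_C$; because the original errors $E_a$ are linear combinations of the $F_k$, correcting every $F_k$ suffices to correct all of $\mathcal E$.

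For the $(\Rightarrow)$ direction, assume a CPTP recovery channel $\mathcal{R}$ exists with $\mathcal{R}(E_a \ket{\psi}\bra{\psi} E_a^\dagger) \propto \ket{\psi}\bra{\psi}$ for all $\ket{\psi}\in C$ and $E_a \in \mathcal{E}$. Writing $\mathcal{R}(X) = \sum_j R_j X R_j^\dagger$ in Kraus form, the requirement that a \emph{pure} state be recovered forces each $R_j E_a \ket{\psi}$ to be a scalar multiple of $\ket{\psi}$; i.e., $R_j E_a \Pi_C = \alpha_{j,a} \Pi_C$ for scalars $\alpha_{j,a}$, where $\Pi_C$ denotes the projector onto $C$. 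Then, using $\sum_j R_j^\dagger R_j = \id$, we obtain
$$
\Pi_C E_a^\dagger E_b \Pi_C = \Pi_C E_a^\dagger \Bigl(\sum_j R_j^\dagger R_j\Bigr) E_b \Pi_C = \Bigl(\sum_j \overline{\alpha_{j,a}} \alpha_{j,b}\Bigr) \Pi_C,
$$
which is precisely the Knill-Laflamme condition with $c_{ab} = \sum_j \overline{\alpha_{j,a}} \alpha_{j,b}$, manifestly Hermitian and positive semidefinite.

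The main obstacle lies in the $(\Leftarrow)$ direction: while the orthogonal subspaces $\{F_k(C)\}$ admit an obvious recovery via projective measurement followed by partial-isometry inversion, extending this to a genuine CPTP channel on the full Hilbert space requires choosing an arbitrary completion on the orthogonal complement $\bigl(\bigoplus_k F_k(C)\bigr)^\perp$. Additional care is needed for \emph{degenerate codes}, where some $d_k = 0$: such canonical errors annihilate the codespace and can simply be dropped from the basis without affecting the recovery. A secondary subtlety in the $(\Rightarrow)$ direction is justifying the claim $R_j E_a \Pi_C = \alpha_{j,a} \Pi_C$, which follows from the fact that if $\sum_j R_j (E_a\ket{\psi}\bra{\psi}E_a^\dagger) R_j^\dagger$ is a rank-one projector onto $\ket{\psi}$, each non-vanishing term $R_j E_a \ket{\psi}$ must be parallel to $\ket{\psi}$, and linearity over $\ket{\psi}\in C$ forces the proportionality constants to be $\psi$-independent.
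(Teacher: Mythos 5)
The paper does not actually prove this statement: it is quoted as a ``well-known fact'' with a pointer to Gottesman's book, so there is no in-paper argument to compare yours against. Your proof is the standard Knill--Laflamme argument (essentially Theorem 10.1 of Nielsen--Chuang): in the backward direction, upgrade Hermiticity of $c_{ab}$ to positive semidefiniteness, diagonalize to obtain canonical errors $F_k$ satisfying $\Pi_C F_k^\dagger F_l \Pi_C = d_k\delta_{kl}\Pi_C$, and build the recovery from the projectors onto the mutually orthogonal images $F_k(C)$ followed by the polar-decomposition partial isometries; in the forward direction, use the rank-one structure of the recovered state to force $R_j E_a \Pi_C = \alpha_{j,a}\Pi_C$ and then sandwich $\sum_j R_j^\dagger R_j = \id$. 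Both directions are sound, including the handling of degenerate eigenvalues $d_k = 0$ and the completion of the Kraus set on $\bigl(\bigoplus_k F_k(C)\bigr)^\perp$. Two minor remarks: first, watch the index bookkeeping in $F_k := \sum_a (U^\dagger)_{ka}E_a$ --- depending on whether you write $c = UDU^\dagger$ or $D = UcU^\dagger$, the correct combination may be $\sum_a U_{ak}E_a$ rather than $\sum_a \overline{U_{ak}}E_a$; this is a convention issue, not a gap. Second, the theorem as phrased in the paper ties correctability of $\mathcal{E}$ to the code ``having distance $d$,'' which is slightly loose (the clean statement is that $\mathcal{E}$ is correctable iff the conditions hold, with distance then defined via the family of weight-bounded Pauli errors); you prove the substantive biconditional, which is what the paper actually uses.
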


We say that a code is {\em non-degenerate} if $c_{ab}=\delta_{ab}.$ The interpretation of the non-degeneracy condition is that two different errors take any two states in the same subspace to orthogonal subspaces. 

\paragraph{Encoding circuits for stabilizers.} Given a description of some stabilizer code in terms of the generating elements of the stabilizer group $S = \langle \vec g_1,\dots,\vec g_{n-k} \rangle$, there exists a Clifford circuit $U_{S} \in \Cliff_n$ to encode any initial $k$-qubit state $\ket{\psi} \in (\mathbb{C}^2)^{\otimes k}$ as a state inside the subspace stabilized by $S$. Let
$$
\ket{\bar{\psi}}^S := \UES\left(\ket{0^{n-k}} \otimes \ket{\psi} \right)\in C(S).
$$
denote the encoding of $\ket{\psi}$ in the subspace stabilized by $S$. 

How can we find $\UES$? Note that the initial state $\ket{0^{n-k}} \otimes \ket{\psi}$ itself is also a codeword which is stabilized by the trivial stabilizer code $\langle Z_1,\dots,Z_{n-k}\rangle$, i.e.,
$$
\begin{matrix}
Z & I & I & \cdots & I & I & \cdots & I\\
I & Z & I & \cdots & I & I &\cdots & I\\
\vdots & \vdots & \vdots & \vdots &\ddots & \vdots & \ddots & \vdots \\
I & I & I & \cdots & Z & I &\cdots & I
\end{matrix}
$$
Any encoding unitary $\UES$ also acts by conjugation to map the initial set of stabilizers $Z_i$ to the new set of stabilizers specified by $S$, i.e. $\vec g_i = \UES Z_i {\UES}^{\dag}$. This is because for every $i \in [n-k]$, we have
\begin{align*}
\UES \left(\ket{0^{n-k}} \otimes \ket{\psi} \right) &= \UES \, Z_i \left(\ket{0^{n-k}} \otimes \ket{\psi} \right)\\
&= \underbrace{\left(\UES Z_i {\UES}^{\dag}\right)}_{= \vec g_i} \,\underbrace{{\UES}  \left(\ket{0^{n-k}} \otimes \ket{\psi} \right)}_{= \ket{\bar{\psi}}^S}.
\end{align*}
\begin{theorem}[Efficient and efficiently-findable encoding circuits for stabilizer codes]\label{thm:encode_stab}
Given a description of $S\in \Stab(n,k)$ there is an $O(n^3)$ time classical algorithm to write down an encoding circuit for $S$ with $O(n^2)$ Clifford gates.
\end{theorem}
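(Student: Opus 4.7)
The plan is to work with the symplectic check-matrix representation of the stabilizer group and reduce to a standard form via Gaussian elimination, tracking the Clifford operations that implement each column operation on the check matrix. Given the generators $\vec g_1, \ldots, \vec g_{n-k}$ of $S$, I would first write down the check matrix $M \in \Z_2^{(n-k) \times 2n}$ whose $i$-th row is the symplectic vector $(\vec p_{x,i} \mid \vec p_{z,i})$ associated with $\vec g_i$, with signs tracked separately. Constructing $M$ takes $O(n^2)$ time.

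Next, I would show that the elementary Clifford gates (CNOT, $H$, $S$) act on $M$ by simple column operations: a $\mathrm{CNOT}_{i,j}$ adds the $i$-th column of the X-block to the $j$-th column and, on the Z-block, adds column $j$ into column $i$; a Hadamard on qubit $i$ swaps the $i$-th columns of the X- and Z-blocks; a phase gate on qubit $i$ adds the $i$-th X-column into the $i$-th Z-column; and transpositions $\mathcal{Q}(\pi)$ permute columns (which, by the identity in \Cref{fig:transpositions}, are already compiled into $O(n)$ CNOTs). Because $S$ is a maximal stabilizer group (i.e., its generators are independent commuting Paulis), the rows of $M$ are linearly independent and have pairwise-zero symplectic inner product.

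Using these column operations, I would run a two-stage Gaussian-elimination procedure to transform $M$ into the canonical form $(0_{(n-k)\times n} \mid I_{n-k} \mid 0_{(n-k) \times k})$ corresponding to the generator set $\{Z_1, \ldots, Z_{n-k}\}$. Stage one uses Hadamards to move any row whose X-part is nonzero into a normal form where the X-block is eliminated via CNOTs after choosing pivots; stage two applies a second round of CNOTs and, if necessary, phase gates to clear off-diagonal entries in the Z-block and bring it to $I_{n-k}$ in the first $n-k$ columns (up to a column permutation). Each pivot step performs $O(n)$ elementary gates acting on a row/column of length $O(n)$, for $O(n^2)$ classical work per pivot and $O(n)$ pivots in total, giving $O(n^3)$ classical time and $O(n^2)$ total Clifford gates. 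The sign discrepancies between $\vec g_i$ and the canonical $+Z_i$ are absorbed by a final layer of at most $n-k$ Pauli-$X$ gates. The encoding circuit $U_{\mathrm{Enc}}^S$ is then the inverse of the recorded Clifford sequence, which by construction satisfies $U_{\mathrm{Enc}}^S Z_i (U_{\mathrm{Enc}}^S)^\dag = \vec g_i$, and its gate count and classical synthesis time match.

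The only mild obstacle is the bookkeeping across the two stages: one has to verify that clearing the X-block via Hadamards and CNOTs leaves enough freedom to diagonalize the Z-block without reintroducing X-entries, and that a pivot always exists (which follows from the independence of the generators). Since this is the standard symplectic Gaussian-elimination argument attributed to Gottesman, citing his thesis for correctness and then tabulating the gate/time counts will complete the proof.
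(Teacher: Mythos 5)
Your proposal is correct and follows essentially the same route as the paper's proof: symplectic Gaussian elimination on the check matrix, with each pivot step realized by CNOT, Hadamard, and phase gates, followed by a final layer of Pauli-$X$ corrections to fix the signs, exactly as in Gottesman's construction that the paper also invokes. The only (immaterial) difference is bookkeeping: you charge the $O(n^3)$ classical time to the elimination itself, whereas the paper charges it to computing the sign corrections via Clifford simulation; both give the stated $O(n^2)$ gate count and $O(n^3)$ runtime.
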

\begin{proof}
The proof is presented in Gottesman (Section 6.4.1 of \cite{gottesman_book}) but we write it concisely.
Because of the remark before this theorem, it suffices to specify a circuit $\mathcal{C}$ that maps the final generators $\vec g_1,\dots,\vec g_{n-k}$ to the initial generators $Z_1,\ldots Z_{n-k}$, in the sense that $\mathcal{C}\vec{g}_i\mathcal{C}^{\dagger} = Z_i.$ Then we may output $\UES = \mathcal{C}^{\dag}.$ To find $\mathcal{C},$ we start by writing the generators $\vec g_1,\dots,\vec g_{n-k}$ in symplectic notation, representing them as two matrices $A,C\in \mathbb{Z}_2^{n\times n-k}$ where the $i$-th column of $A$ is $(\vec g_i)_x$ and the $i$-th column of $C$ is $(\vec g_i)_z$.  To find $\mathcal{C}$, we make use of the fact that conjugating a Pauli $P\in \bar{\mathcal{P}}_n$ with a Clifford $\mathcal{C}$ with symplectic representation $M_{\mathcal{C}}$ transforms the symplectic representation of $P$ (represented as a column vector) by left multiplication, as
\begin{equation}
    (p_x,p_z)^T\rightarrow M_{\mathcal{C}} (p_x,p_z)^T.
\end{equation}
Working exclusively within the symplectic picture, the goal is then to find a matrix $M_{\mathcal{C}}$ representing a valid Clifford such that 
\begin{equation}
    M_{\mathcal{C}}\binom{A}{\overline{\,C\,}} = \binom{0}{\overline{\id_{n,n-k}}}
\end{equation}
where $\id_{n,n-k}$ denotes a $n$ by $n-k$ matrix with $\id_{n-k}$ in its first $n-k$ rows and $0$s in all other locations. One can verify that the right-hand-side of the above equation represents the stabilizer $\langle Z_1,\ldots Z_{n-k} \rangle$ and hence the target. 

By checking Equations 6.79-6.82 of \cite{gottesman_book}, which give the symplectic representations of each Clifford generator, one can verify that row and column reductions on $\binom{A}{\overline{\,C\,}}$ can be implemented in symplectic space by acting with Clifford gates on the actual generators. Specifically, column reduction on $A$ composes the operation of adding the topmost row with a leading 1 to a different row which has an undesired leading 1. This corresponds to acting with a single $\mathrm{CNOT}$. Similarly, column reduction on $C$ can be implemented by acting with a single $R_{\pi/4}$ or $C-Z$. Row reduction on $A$ (via adding the column with a leading $1$ to other columns) corresponds to multiplication of generators, which effectively does not change the stabilizer group. Since only $O(n^2)$ additions of rows are needed to reduce $\binom{A}{\overline{\,C\,}}$ to the desired form, and each addition corresponds to adding $O(1)$ gates to the circuit, the row-reduction portion of the algorithm can be carried out in $O(n^2)$ time. 

There's one problem: in symplectic space phases do not exist, so the circuit $\mathcal{C}'$ resulting from the above procedure may not produce $S = \langle \vec g_1,\dots,\vec g_{n-k} \rangle$ with the proper signs. So we need to perform some final corrections. We then need to compute which $\vec g_i$s have the wrong signs, which by the following Lemma \ref{lem:sim} takes time $O(n^2)$ for each $g_i.$ For every $\vec{g}_i$ that has the wrong sign, we modify the circuit $\mathcal{C}'$ by acting with $X_i$ at the start of the circuit. This correction operation does not change the circuit complexity asymptotically and increases the classical runtime to $O(n^3)$, though we did not optimize this.
\end{proof}

We also quote a well-known result on simulating Cliffords.
\begin{lemma}[Simulating Clifford circuits\label{lem:sim}]
    Given a description of a Clifford circuit $C\in \Cliff_n$ where $C = \prod_{i=1}^m U_i$ and each $U_i$ is a 2-qubit gate from some generating set of $\vCliff_n$, for any Pauli $P\in \calP_n$ we may compute $CPC^{\dagger}$ in time linear in $O(m+n)$. 
\end{lemma}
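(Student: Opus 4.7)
The plan is to work in a phase-tracked symplectic representation of the Pauli group, storing each $P \in \mathcal{P}_n$ as a triple $(\vec{p}_x, \vec{p}_z, s)$ with $\vec{p}_x, \vec{p}_z \in \bit^n$ and $s \in \{\pm 1, \pm i\}$, so that $P = s \cdot \bigotimes_{i=1}^n X^{(\vec{p}_x)_i} Z^{(\vec{p}_z)_i}$. Writing $C = U_m \cdots U_1$, I would process the gates one at a time, maintaining after step $j$ the triple that encodes the Pauli $U_j \cdots U_1 P U_1^\dagger \cdots U_j^\dagger$. Initialization of the triple from the input description of $P$ takes $O(n)$ time.

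The key point is that each $U_j$ acts on at most two qubits and is drawn from a fixed generating set, so I can precompute once and for all a small lookup table storing $U X_a U^\dagger$, $U Z_a U^\dagger$, $U X_b U^\dagger$, $U Z_b U^\dagger$ for each generator $U$ on its support $\{a,b\}$. To process a single gate $U_j$, I would (i) read off the two-qubit restriction of the current Pauli from the four symplectic bits on qubits $a, b$, (ii) factor it into single-qubit $X$'s and $Z$'s in a canonical order, (iii) replace each factor with its conjugate via the lookup table, and (iv) multiply the four images back together, folding the resulting $\pm 1, \pm i$ into the phase $s$ and overwriting the four updated bits of $\vec p_x, \vec p_z$. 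All other coordinates remain untouched. Each gate update therefore costs $O(1)$, so the $m$ updates together cost $O(m)$. Writing the final triple down as an element of $\mathcal{P}_n$ costs another $O(n)$, giving an overall runtime of $O(m+n)$, as desired.

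The main obstacle is not algorithmic but rather careful sign bookkeeping. The symplectic bits alone determine $CPC^\dagger$ only up to a $\pm 1$ or $\pm i$ phase, and since the lemma asks for an element of $\mathcal{P}_n$ one must track these consistently. A single fixed convention---say $Y = iXZ$, together with a canonical ordering in which the four single-qubit factors on qubits $a,b$ are multiplied back together---resolves this, but the same convention must be used both when generating the lookup tables and when performing the per-gate updates, or phase errors will accumulate. These are all constant-time finite-case choices and do not affect the $O(m+n)$ bound.
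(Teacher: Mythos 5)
Your proposal is correct and follows essentially the same route as the paper, which invokes the Gottesman--Knill simulation: track the Pauli through the circuit gate by gate in a (phase-augmented) symplectic representation, updating via a constant-size lookup table for the generating gateset, for $O(1)$ work per gate plus $O(n)$ to read in and write out the Pauli. Your version simply spells out the representation and the sign bookkeeping that the paper delegates to the lookup tables in Gottesman's book.
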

\begin{proof}
    This is a special case of the Gottesman-Knill theorem \cite{gottesman_book}, which says that Clifford circuits acting on an initial $n$-qubit stabilizer state, followed by a sequence of $m$ Clifford group operations and Pauli measurements, can be efficiently classically simulated. The simulation algorithm is to keep track of how each operation transforms the stabilizer group of the initial state. 

    More concretely, in this case we have only one Pauli to keep track of, so we may compute $CPC^{\dagger}$ by computing the effect of each gate $U_i$ in sequence. Since each gate acts on a constant number of qubits, the time needed for the simulation and to write down the final Pauli scales as $O(m+n).$ Lookup tables for the action of gates from popular generating gatesets can be found in, e.g. Table 6.1 of \cite{gottesman_book}. 
\end{proof}

\paragraph{Random stabilizer codes.}
A random $[[n,k]]$ stabilizer code is a uniformly random choice of abelian subgroup $S \leq \mathcal{P}_n$ with $n-k$ generators. Note, here the Pauli signs are important and will determine what subspace is stabilized by $S$. Because the members of $S$ must commute, choosing $n-k$ elements uniformly from $\mathcal{P}_n$ will not always give a valid stabilizer code. We now give a rigorous proof of the fact that a random element of the Clifford group $\Cliff_n$, acting on any initial choice of $S$, generates a uniformly random $S$, and hence a uniformly random $[[n,k]]$ stabilizer code.

%\alex{To Do: show a stronger theorem that works for codes generated by arbitrary Clifford subgroups}

\begin{theorem}\label{thm:random-Stab}
Let $n \in \mathbb{N}$ be an integer and let $S = \langle \vec g_1,\dots,\vec g_{n-k} \rangle$ be any stabilizer with generators $\vec g_1,\dots,\vec g_{n-k} \in \mathcal{P}_n$. Then, the conjugated stabilizer code
$$
U S U^\dag = \langle U\vec g_1 U^\dag,\dots, U\vec g_{n-k} U^\dag \rangle, \quad \text{ for } U \sim \Cliff_n,
$$
yields a uniformly stabilizer in the set $\mathsf{Stab}(n,k)$.
\end{theorem}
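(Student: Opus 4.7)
The plan is to show that conjugation by a uniformly random Clifford implements the uniform distribution on $\mathsf{Stab}(n,k)$ by combining transitivity of the action with the orbit--stabilizer theorem. First I would formalize the action: the group $\Cliff_n$ acts on the set $\mathsf{Stab}(n,k)$ via $U \cdot S := U S U^\dagger$. That this action is well-defined follows from the defining property of Clifford operators (they map Paulis to Paulis), the fact that conjugation preserves commutation relations (so $USU^\dagger$ remains abelian), and the fact that $U(\cdot)U^\dagger$ is a group automorphism on $\mathcal{P}_n$ (so it preserves the number of independent generators, keeping us inside $\mathsf{Stab}(n,k)$).

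The key structural step is \emph{transitivity}: for any two stabilizers $S, S' \in \mathsf{Stab}(n,k)$, there exists $V \in \Cliff_n$ with $V S V^\dagger = S'$. This is an immediate consequence of \Cref{thm:encode_stab}: applied to the trivial stabilizer $S_0 := \langle Z_1, \dots, Z_{n-k}\rangle$, it provides, for any $T \in \mathsf{Stab}(n,k)$, a Clifford encoding circuit $\UES[T]$ such that $\UES[T] Z_i (\UES[T])^\dagger$ is the $i$-th generator of $T$. Taking $V := \UES[S'] (\UES[S])^\dagger$ then gives $V S V^\dagger = S'$, so the orbit of any $S$ under the action is all of $\mathsf{Stab}(n,k)$.

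Next I would invoke the orbit--stabilizer theorem. Fix any $S \in \mathsf{Stab}(n,k)$ and let $\mathrm{Stab}_{\Cliff_n}(S) := \{W \in \Cliff_n : W S W^\dagger = S\}$ be its point-stabilizer subgroup. By orbit--stabilizer and transitivity, the preimage of any $S' \in \mathsf{Stab}(n,k)$ under the surjection $U \mapsto U S U^\dagger$ is a left coset of $\mathrm{Stab}_{\Cliff_n}(S)$, hence has cardinality $|\Cliff_n|/|\mathsf{Stab}(n,k)|$ independent of $S'$. Therefore, sampling $U \sim \Cliff_n$ uniformly and outputting $U S U^\dagger$ yields the uniform distribution on $\mathsf{Stab}(n,k)$, as claimed.

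I do not anticipate a hard step: the whole argument is essentially the standard group-theoretic fact that a transitive action of a finite group on a finite set pushes the uniform measure on the group to the uniform measure on the set. The only content that is specifically quantum---namely that $\Cliff_n$ does act transitively on $\mathsf{Stab}(n,k)$---is already furnished by the encoding-circuit construction in \Cref{thm:encode_stab}. The mildest subtlety to get right is handling Pauli \emph{signs}: because $\mathsf{Stab}(n,k)$ is defined with signs tracked, I would make sure the encoding-circuit argument delivers generators with the correct signs (the sign-correction step inside the proof of \Cref{thm:encode_stab} already guarantees this).
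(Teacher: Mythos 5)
Your proposal is correct and follows essentially the same route as the paper: establish that $\Cliff_n$ acts transitively on $\mathsf{Stab}(n,k)$ (the paper cites the Gottesman-style reduction of any stabilizer to the canonical all-$Z$ stabilizer via a Clifford composed with a sign-fixing Pauli, which is the same content as the encoding-circuit theorem you invoke), and then conclude uniformity from the group structure. Your orbit--stabilizer/coset-counting step and the paper's left-translation-invariance computation $\Pr[USU^\dag = S_1] = \Pr[(CU)S(CU)^\dag = CS_1C^\dag] = \Pr[USU^\dag = S_2]$ are the same argument in different notation.
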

\begin{proof}
First, we show that the Clifford group $\Cliff_n$ acts transitively on the set of stabilizers $\mathsf{Stab}(n,k)$. Let $S = \langle \vec g_1,\dots,\vec g_{n-k} \rangle$ be an arbitrary stabilizer with $n-k$ generators. From \cite{gottesman_book}, we know that there exists a Clifford operator $C \in \Cliff_n$ and a Pauli $P\in  \bar{\mathcal{P}}_n$ such that the composition of the two operations maps $S$ to the canonical stabilizer $Z$. In particular, we can let $V = PC$ such that
$$
V S V^\dag =  \langle V\vec g_1V^\dag,\dots, V\vec g_{n-k} V^\dag \rangle = \langle Z_1,\dots,Z_{n-k}\rangle.
$$
Likewise, from \cite{gottesman_book}, we also know that once we have the canonical all-Z stabilizer $Z=\langle Z_1,\dots,Z_{n-k}\rangle$, we can obtain any other stabilizer $S' = \langle \vec g_1',\dots,\vec g'_{n-k} \rangle$ via some other composition of operators $W = D Q$, where $D \in \Cliff_n$ and $Q\in  \bar{\mathcal{P}}_n$, i.e., 
$$
S' = \langle \vec g_1',\dots,\vec g'_{n-k} \rangle = \langle WZ_1W^\dag,\dots, W Z_{n-k} W^\dag \rangle.
$$
Therefore, for any pair of distinct stabilizers $S,S' \in \mathsf{Stab}(n,k)$ there exists an operator  $WV$ that maps $S$ to $S'$.  By using the fact that Cliffords are the normalizer of the Pauli group, $WV$ can be realized as a single Pauli operation followed by a single Clifford operation. 

Finally, we show that the probability that a random Clifford applied to an arbitrary stabilizer $S = \langle \vec g_1,\dots,\vec g_{n-k} \rangle$ yields any pair of distinct stabilizers $S_1,S_2$ with exactly the same probability. From before, there exists a Clifford $C \in \Cliff_n$ such that
\begin{align*}
\underset{U \sim \Cliff_n}\Pr[U S U^\dag = S_1]    &= \underset{U \sim \Cliff_n}\Pr[ (C U) S (CU)^\dag = C S_1 C^\dag] \\
&= \underset{U \sim \Cliff_n}\Pr[  (C U) S (CU)^\dag = S_2] \\
&= \underset{U \sim \Cliff_n}\Pr[  U S U^\dag = S_2].
\end{align*}
The last line follows from the fact that $\Cliff_n$ is a group, and hence the uniform distribution over $\Cliff_n$ is Clifford invariant.
\end{proof}

\paragraph{Quantum Gilbert-Varshamov bound.}

In a nutshell, the quantum Gilbert-Varshamov bound \cite{Graeme_thesis,gottesman_book} tells us that a random stabilizer code is both non-degenerate and has a good distance with high probability. This is captured by the following result.

\begin{theorem}[Quantum Gilbert-Varshamov bound,~\cite{Graeme_thesis}\label{thm:GVbound}]
Random stabilizer codes which are specified by a random stabilizer subgroup $S \sim \mathsf{Stab}(n,k)$ are non-degenerate and have distance $d$ with probability at least $1-d\cdot 2^{n H(d/n) } \cdot 3^{d} \cdot 2^{-n+k}$.
\end{theorem}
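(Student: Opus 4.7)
The plan is to bound the probability that a random stabilizer code $S \sim \Stab(n,k)$ is either degenerate or has distance strictly less than $d$. Both failure modes can be subsumed into the single event that some non-identity Pauli $P \in \bar{\mathcal{P}}_n \setminus \{I\}$ of weight $|P| < d$ lies in the normalizer $N(S)$: if $P \in N(S) \setminus S$ with $|P| < d$, then $P$ is an undetectable logical error, so the distance is less than $d$; and if $P \in S \subseteq N(S)$ with $|P| < d$, then choosing low-weight errors $E_a \neq E_b$ with $E_a^\dagger E_b = P$ yields $c_{ab} = 1 \neq 0$ in \Cref{thm:KLconditions}, exhibiting degeneracy. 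I would therefore estimate $\Pr_{S}[\exists\, P : 0 < |P| < d,\ P \in N(S)]$ via a union bound, which reduces the task to (i) counting low-weight Paulis and (ii) bounding the per-Pauli probability.

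The counting is routine. The number of unsigned Paulis of weight exactly $w$ is $\binom{n}{w}\cdot 3^w$, since one picks $w$ of the $n$ qubits to act non-trivially on and assigns one of $\{X,Y,Z\}$ to each. In the regime of interest (namely $d-1 \leq 3n/4$), the summand $\binom{n}{w}\,3^w$ is increasing in $w$, so
\[
\sum_{w=1}^{d-1} \binom{n}{w}\cdot 3^w \;\leq\; d \cdot \binom{n}{d-1}\cdot 3^{d-1} \;\leq\; d \cdot 2^{nH(d/n)} \cdot 3^{d},
\]
where the last inequality uses the standard entropic estimate $\binom{n}{m} \leq 2^{nH(m/n)}$ together with monotonicity of $H$ on $[0,1/2]$.

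For the per-Pauli probability I would leverage Clifford symmetry via \Cref{thm:random-Stab}. Writing $S = U S_0 U^\dagger$ for any fixed reference stabilizer $S_0 \in \Stab(n,k)$ and $U \sim \Cliff_n$, the event $\{P \in N(S)\}$ is equivalent to $\{U^\dagger P U \in N(S_0)\}$. Since $\Cliff_n$ acts transitively on $\bar{\mathcal{P}}_n \setminus \{I\}$ by conjugation---a fact implicit in the proof of \Cref{thm:random-Stab}, in that any non-identity Pauli can be mapped to $Z_1$ by a Clifford---the random conjugate $U^\dagger P U$ is uniformly distributed over $\bar{\mathcal{P}}_n \setminus \{I\}$, so
\[
\Pr_{S \sim \Stab(n,k)}\bigl[P \in N(S)\bigr] \;=\; \frac{|N(S_0)|-1}{|\bar{\mathcal{P}}_n|-1} \;=\; \frac{2^{n+k}-1}{2^{2n}-1} \;\leq\; 2^{-(n-k)},
\]
using the standard counts $|N(S_0)| = 2^{n+k}$ and $|\bar{\mathcal{P}}_n| = 2^{2n}$. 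Multiplying the bounds from the two previous paragraphs yields the target failure probability $d \cdot 2^{nH(d/n)} \cdot 3^d \cdot 2^{-(n-k)}$. The main obstacle is the symmetry argument in the last display; the Clifford-transitivity on unsigned non-identity Paulis that it needs is exactly the ingredient that \Cref{thm:random-Stab} already establishes, so this step should slot in cleanly and the remainder is a routine union bound with binomial estimates.
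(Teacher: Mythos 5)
The paper does not actually prove this statement: it is quoted as an imported result, with a citation to Smith's thesis, and the text following it only interprets the bound via the Knill--Laflamme conditions. So your proposal cannot coincide with ``the paper's proof''; what you have written is the standard first-moment argument, and it checks out. Your reduction of both failure modes (degeneracy and low distance) to the single event that some non-identity Pauli of weight less than $d$ lies in $N(S)$ is the right reformulation, the count $\sum_{w=1}^{d-1}\binom{n}{w}3^w \le d\cdot 2^{nH(d/n)}3^d$ is correct, and the per-Pauli probability $\frac{2^{n+k}-1}{2^{2n}-1}\le 2^{-(n-k)}$ follows exactly as you say from transitivity of Clifford conjugation on $\bar{\mathcal{P}}_n\setminus\{I\}$ combined with \Cref{thm:random-Stab} and the fact that $N(S)$ transforms covariantly, $N(USU^\dag)=UN(S)U^\dag$. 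Two minor caveats, neither of which is a gap: (i) your monotonicity of $w\mapsto\binom{n}{w}3^w$ holds for $w\le(3n-1)/4$ and your use of monotonicity of $H$ needs $d/n\le 1/2$, but outside these regimes the claimed lower bound on the success probability is negative and the theorem is vacuous, so the restriction is harmless (it would be worth saying this explicitly); (ii) when you invoke degeneracy you should note that any Pauli of weight $w\le d-1$ splits as $E_a^\dag E_b$ with both factors of weight at most $\lceil (d-1)/2\rceil$, which is what puts $E_a,E_b$ in the relevant error set for \Cref{thm:KLconditions}. With those remarks added, your argument is a complete and self-contained proof of a statement the paper only cites.
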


The above statement is perhaps best viewed through the lens of the Knill-Laflamme error correction conditions (Theorem \ref{thm:KLconditions}). Suppose that $S = \langle \vec g_1,\dots,\vec g_{n-k}\rangle$ is a non-degenerate stabilizer code with distance $d=2t+1$. Define the set $$
\mathcal{E}^{(t)} = \{ E_1^\dag E_2 \in \mathcal{P}_n \, : \, |E_1|,|E_2| \leq t\}
$$
which consists of weight-$t$ products of Pauli errors. Then, for all pairs of codewords $\ket{\overline{\psi_x}}, \ket{\overline{\psi_y}} \in C(S)$ with $x \neq y$, and for all $E_a^\dag E_b\in \mathcal{E}^{(t)}$, it holds that
$$
\bra{\overline{\psi_x}} E_a^\dag E_b \ket{\overline{\psi_y}}  =0.
$$
One way to see this is as follows. Suppose that $E_a^\dag E_b \notin N(S)$, then there must exist a generator $\vec g_i$ in $S = \langle \vec g_1,\dots,\vec g_{n-k}\rangle$ which anti-commutes with $E_a^\dag E_b$, and thus
\begin{align}\label{eq:anti-commutation-stab}
\vec g_i  E_a^\dag E_b \ket{\overline{\psi_x}}^S = -  E_a^\dag E_b \vec g_i \ket{\overline{\psi_x}}^S = - E_a^\dag E_b \ket{\overline{\psi_x}}^S.  
\end{align}
Using that $\vec g_i^2 = I^{\otimes n}$ together with \Cref{eq:anti-commutation-stab}, this implies that, for $x \neq y$,
\begin{equation}\label{eq:orthogonality}
\bra{\overline{\psi_x}} E_a^\dag E_b \ket{\overline{\psi_y}} = \bra{\overline{\psi_x}} \vec g_i E_a^\dag E_b \vec g_i\ket{\overline{\psi_y}} = -\bra{\overline{\psi_x}} E_a^\dag E_b \ket{\overline{\psi_y}}  =0.
\end{equation}

\section{The Learning Stabilizers with Noise problem}\label{sec:LPN}

In this section, we formally define the \emph{Learning Stabilizers with Noise} (LSN) problem as the natural quantum analog of the $\lpn$ problem. We begin with a set of definitions for the problem (and its variants), and then show that the problem is well-defined (i.e., it admits a unique solution) for appropriate choices of parameters.

\subsection{Definition}

We now provide a formal definition of our learning task.

\begin{definition}[Learning Stabilizers with Noise problem]\label{def:LSN}
Let $k \in \mathbb{N}$ be the security parameter and let $n=\poly(k)$ be an integer. Let $p \in (0,1/2)$ be a parameter. The Learning Stabilizers with Noise ($\mathsf{LSN}_{n,k,\mathcal{D}_p^{\otimes n}}$) problem is to find $x \in \bit^k$ given as input a sample
$$
\big(S\in \Stab(n,k), E \ket{\overline{\psi_x}}^S \big) \, ,
$$
where $S\sim \Stab(n,k)$ is a uniformly random stabilizer (specified in terms of a classical description of $S$), $E \sim \mathcal{D}_p^{\otimes n}$ is a Pauli error with $E \in \bar{\mathcal{P}}_n$,  $x \sim \bit^k$ is a random string
, and $\ket{\overline{\psi_x}}^S \in C(S)$ is the codeword 
\begin{equation*}
\ket{\overline{\psi_x}}^S:= U_{\mathrm{Enc}}^S(\ket{0^{n-k}} \otimes \ket{x})
\end{equation*}
for some canonical encoding circuit $U_{\mathrm{Enc}}^S$ for the stabilizer code associated with $S$. 
We say that a quantum algorithm solves the $\mathsf{LSN}_{n,k,\mathcal{D}_p^{\otimes n}}$ problem if it runs in time $\poly(k)$ and succeeds with probability at least $1/\poly(k)$ over the choice of $S, E$ and $x$, and its internal randomness. 
\end{definition}

Let us first state a few remarks.

\begin{remark}[Density matrix formulation] In \Cref{def:LSN}, the input to the learning algorithm is stated in the form $(S\in \Stab(n,k), E \ket{\overline{\psi_x}}^S)$, 
where $S\sim \Stab(n,k)$, $E \sim \mathcal{D}_p^{\otimes n}$ and $x \sim \bit^k$ is a random string. The pure state $E \ket{\overline{\psi_x}}^S$, however, should rather be understood as a density matrix of the form
$$
\rho_x^S = \mathcal{D}_{p}^{\otimes n}(\proj{\overline{\psi_x}}^S)= \sum_{E \in \bar{\mathcal{P}}_n} \Pr_{E \sim \mathcal{D}_{p}^{\otimes n}}[E] \cdot E \proj{\overline{\psi_x}}^S  E^\dag \, ,
$$
where we think of $\mathcal{D}_{p}^{\otimes n}$ as a product of local depolarizing channels with parameter $p$.
    
\end{remark}

\begin{remark}[Learning vs. decoding]
    The name `Learning Stabilizers with Noise' was chosen to parallel the name of the classical hardness assumption `Learning Parities with Noise'. The word `learning' here should be understood in the sense of Quantum Probably Approximately Correct (PAC) learning, where the example(s) seen by the learner/solver for $\lsn$ is a/are random quantum state(s). Notably, unlike in the setting of tomography, the quantum learner does not have multiple identical copies of the same quantum state. 
\end{remark}

\begin{remark}[Clifford representation]\label{remark:clifford-rep}
Recall that the input of the learner in \Cref{def:LSN} consists of a random stabilizer $S \in \Stab(n,k)$. In \Cref{thm:random-Stab}, we showed that random stabilizer codes can be equivalently described by uniformly random Clifford encoding circuits. Therefore, we can alternatively think of an $\mathsf{LSN}_{n,k,\mathcal{D}_p^{\otimes n}}$ instance as
$$
\left(C \in \Cliff_n, \, E \, C (\ket{0^{n-k}} \otimes \ket{x})\right)  
$$
where $C \sim \Cliff_n$ is a random $n$-qubit Clifford operator and the first argument refers to a classical description of $C$, $E \sim \mathcal{D}_p^{\otimes n}$ is an $n$-qubit Pauli error, and $x \sim \bit^k$ is a random string.
\end{remark}

\begin{remark}[Worst-case vs. average-case]
The computational task of solving $\mathsf{LSN}_{n,k,\mathcal{D}_p^{\otimes n}}$ in \Cref{def:LSN} is an average-case problem in the sense that the success probability of any algorithm is measured on average over the choice of stabilizer code $S$, error $E$ and secret $x$. We also consider the worst-case variant, where $S$ and $x$ and $E$ are not assumed to be random and instead chosen adversarially. We say that an algorithm solves the problem in the worst-case with probability $\delta >0$, if the algorithm succeeds at finding $x$ with probability $\delta$ when challenged on any instance of the problem---even for adversarial choices of stabilizer code $S$, error $E$ and secret $x$.
\end{remark}

\begin{remark}[Polynomial vs. quasi-polynomial hardness]\label{remark:quasi} In \Cref{def:LSN}, we considered the "polynomial hardness" of the LSN problem, i.e., we assume that a successful solver must run in time $\poly(k)$ and succeed with probability at least $1/\poly(k)$. In \Cref{sec:worst-avg-reduction}, we also establish a connection to "quasi-polynomial hardness" of the problem, where we assume that a successful solver must run in quasi-polynomial-time and must succeed with inverse-quasi-polynomial probability in $k$. 
\end{remark}

\paragraph{General variants.} Recall that the input in \Cref{def:LSN} consists of a sample
$$
\big(S, E \ket{\overline{\psi_x}}^S \big) \,\,\sim \,\,\mathsf{LSN}_{n,k,\mathcal{D}_p^{\otimes n}}
$$
where $S\sim \Stab(n,k)$ is a (classical description of) uniformly random stabilizer, where $E \sim \mathcal{D}_p^{\otimes n}$ is a Pauli error $E \in \bar{\mathcal{P}}_n$, where $x \sim \bit^k$ is a random string. Occasionally, we also consider more general variants of the problem, denoted by $\lsn_{n,k,\mathcal{N},\mathcal{S},\mathcal{I}}$, that feature samples
$$
\big(S, E \ket{\overline{\psi_x}}^S \big) \,\,\sim \,\,\lsn_{n,k,\mathcal{N},\mathcal{S},\mathcal{I}}
$$
which are captured by the following set of distributions (which depend on $n$ and $k$):
\begin{itemize}
    \item $\mathcal{N}$ is a \emph{general noise distribution} with support over $n$-qubit Pauli errors $E \in \bar{\mathcal{P}}_n$.

    \item $\mathcal{S}$ is a \emph{general distribution over Stabilizer codes}, either with support over the set of stabilizers $S \in \Stab(n,k)$, or over Clifford encoding circuits in $C\in\Cliff_n$.

    \item $\mathcal{I}$ is a \emph{general distribution over input strings} of the form $x \in \bit^k$.
\end{itemize}
Depending on the choice of distributions $\mathcal{N}$, $\mathcal{S}$ and $\mathcal{I}$, the learning task may either become easier or harder than the canonical learning problem in \Cref{def:LSN}.

\subsection{Existence of Unique Solutions}
\label{sec:unique-sol}

We now investigate for which parameter regime the learning problem $\mathsf{LSN}_{n,k,\mathcal{D}_p^{\otimes n}}$ in \Cref{def:LSN} is well-defined and can be solved information-theoretically. 
In anticipation of \Cref{sec:complexity} and \Cref{sec:commitments}, we show the existence of a unique solution by phrasing the $\mathsf{LSN}_{n,k,\mathcal{D}_p^{\otimes n}}$ problem as an \emph{Uhlmann transformation problem}. This can be understood as the problem of synthesizing the unitary in Uhlmann's theorem (\Cref{thm:uhlmann}). 

\paragraph{Uhlmann transformations and the LSN problem.}

To argue that the $\mathsf{LSN}_{n,k,\mathcal{D}_p^{\otimes n}}$ problem has a unique solution, we appeal to Uhlmann's theorem.
Suppose we are given as input an instance with repsect to the Clifford representation, 
$$
\big(C \in \Cliff_n, E \, C (\ket{0^{n-k}} \otimes \ket{ x})\big)  
$$
where $C \sim  \Cliff_n$ is a random $n$-qubit Clifford, $E \sim \mathcal{D}_p^{\otimes n}$ is an $n$-qubit Pauli error from a local depolarizing channel, and $x \sim \bit^k$ is a random $k$-bit secret. We can represent the density matrix corresponding to the quantum part of the instance\footnote{We also append an ancilla register in the state $\ket{0}$ for convenience.} as the result of discarding register $\mathsf{A}$ of the purification,
\begin{align}\label{eq:Q0}
\ket{Q^0}_{\mathsf{AB}} &= 
\sqrt{2^{-k}}\sum_{x} \sum_{E_a} \sqrt{\Pr_{E_a \sim \mathcal{D}_{p}^{\otimes n} }[E_a]} \,\Big(\ket{x} \otimes \ket{a}\Big)_{\mathsf{A}} \otimes \left(E_{a} \, C (\ket{0^{n-k}} \otimes \ket{x}) \otimes \ket{0}\right)_{\mathsf{B}}.    
\end{align}
For the sake of the proof, we also consider the following bipartite state given by
\begin{align}\label{eq:Q1}
\ket{Q^1}_{\mathsf{AB}} &= \sqrt{2^{-k}}\sum_{x} \sum_{E_a} \sqrt{\Pr_{E_a \sim \mathcal{D}_{p}^{\otimes n} }[E_a]} \,\Big(\ket{x} \otimes \ket{a}\Big)_{\mathsf{A}} \otimes \left(\ket{0^{n-k}} \otimes \ket{x} \otimes \ket{a}\right)_{\mathsf{B}}.    
\end{align}
 Using the quantum Gilbert-Varshamov bound, we can argue that the fidelity between the reduced states $Q_{\mathsf{A}}^0$ and $Q_{\mathsf{A}}^1$ on register $\mathsf{A}$ is near maximal---provided that $n$ is slightly larger than $k$, and that $p \in (0,1/2)$ is a sufficiently small constant. Therefore, by Uhlmann's theorem, there exists a unitary $U \in \mathrm{L}(\algo H_{\mathsf{B}})$ which acts on the $\mathsf{B}$ register and maps $\ket{Q^0}_{\mathsf{AB}}$ to another state which has near maximal overlap with $\ket{Q^1}_{\mathsf{AB}}$. In other words, the Uhlmann unitary $U$ allows us to solve $\mathsf{LSN}_{n,k,\mathcal{D}_p^{\otimes n}}$ and to recover the $x$ with overwhelming probability---thereby proving uniqueness.

To make this explicit, we first show the following technical lemma.

\begin{lemma}\label{lem:fidelity-lemma}
Let $n,k \in \mathbb{N}$ and let $\ket{Q^0}_{\mathsf{AB}}$ and $\ket{Q^1}_{\mathsf{AB}}$
be the bipartite states in Eq.~\eqref{eq:Q0} and Eq.~\eqref{eq:Q1}, respectively, for some $p \in (0,1/2)$ and Clifford $C \in \Cliff_n$. Then, with probability at least $ 1-3 np \cdot 2^{n H(3p) } \cdot 3^{3np} \cdot 2^{-n+k}$ over the choice of the random Clifford $C$, it holds that
$$
\mathrm{F}(Q_{\mathsf{A}}^0,Q_{\mathsf{A}}^1) \geq 1-4 \cdot e^{-\frac{np}{24}}.
$$
\end{lemma}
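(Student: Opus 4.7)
The plan is to reduce the fidelity bound to the Knill--Laflamme conditions (\Cref{thm:KLconditions}) combined with a Chernoff tail on the error weight. Tracing out register $\mathsf{B}$ in both purifications, I would first compute
\begin{align*}
Q^0_{\mathsf{A}} &= 2^{-k}\!\!\sum_{x,y,a,b}\!\sqrt{\Pr[E_a]\Pr[E_b]}\;\bra{\overline{\psi_y}}E_b^\dagger E_a\ket{\overline{\psi_x}}\;\ket{x,a}\!\bra{y,b},\\
Q^1_{\mathsf{A}} &= 2^{-k}\!\sum_{x,a}\Pr[E_a]\;\ket{x,a}\!\bra{x,a},
\end{align*}
where $\ket{\overline{\psi_x}} := C(\ket{0^{n-k}}\otimes\ket{x})$. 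Thus $Q^1_{\mathsf{A}}$ is diagonal, and $Q^0_{\mathsf{A}}$ would match it \emph{exactly} on those matrix entries where the Knill--Laflamme orthogonality $\bra{\overline{\psi_y}}E_b^\dagger E_a\ket{\overline{\psi_x}}=\delta_{xy}\delta_{ab}$ holds. The remaining task is to ensure that orthogonality holds for a ``bulk'' set of Paulis and that the tail outside the bulk is negligible.

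To secure the orthogonality, I would invoke \Cref{thm:random-Stab} (random Cliffords induce uniformly random stabilizer codes) and plug $d = 3np+1$ into the quantum Gilbert--Varshamov bound (\Cref{thm:GVbound}); this delivers non-degeneracy together with distance at least $3np+1$ with probability at least $1 - 3np\cdot 2^{nH(3p)}\cdot 3^{3np}\cdot 2^{-n+k}$, which is exactly the ``good Clifford'' probability claimed in the lemma. Conditioning on this event, any Paulis $E_a,E_b$ with $|E_a|,|E_b|\leq \tfrac{3}{2}np$ satisfy $|E_b^\dagger E_a|\leq 3np < d$, so \Cref{thm:KLconditions} combined with non-degeneracy forces $\bra{\overline{\psi_y}}E_b^\dagger E_a\ket{\overline{\psi_x}} = \delta_{xy}\delta_{ab}$.

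Next I would truncate to the low-weight part. Let $\Pi := \sum_{a:\,|E_a|\leq 3np/2}\ket{a}\!\bra{a}$ and $p_{\mathrm{good}} := \Pr_{E\sim \mathcal{D}_p^{\otimes n}}[|E|\leq 3np/2]$. Since $|E|$ is distributed as $\mathsf{Bin}_{n,p}$, a standard multiplicative Chernoff bound gives $p_{\mathrm{good}}\geq 1 - e^{-np/12}$. Defining $\ket{\tilde Q^b} := (\id\otimes \Pi\otimes \id)\ket{Q^b}/\sqrt{p_{\mathrm{good}}}$ with $\Pi$ acting on the ancilla labeling $a$ inside $\mathsf{A}$, the identity from the previous paragraph forces every surviving off-diagonal entry of $\tilde Q^0_{\mathsf{A}}$ to vanish, and a direct calculation gives
\[
\tilde Q^0_{\mathsf{A}} \;=\; \tilde Q^1_{\mathsf{A}} \;=\; \frac{1}{p_{\mathrm{good}}}\,2^{-k}\!\!\sum_{x,\,a:\,|E_a|\leq 3np/2}\!\!\Pr[E_a]\;\ket{x,a}\!\bra{x,a}.
\]

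To assemble the fidelity bound, note that $\langle Q^b|\tilde Q^b\rangle = \sqrt{p_{\mathrm{good}}}$, so by Fuchs--van de Graaf and monotonicity of trace distance under partial trace, $\delta_{\mathsf{TD}}(Q^b_{\mathsf{A}},\tilde Q^b_{\mathsf{A}}) \leq \sqrt{1-p_{\mathrm{good}}} \leq e^{-np/24}$ for each $b\in\{0,1\}$. Two applications of the triangle inequality give $\delta_{\mathsf{TD}}(Q^0_{\mathsf{A}},Q^1_{\mathsf{A}})\leq 2e^{-np/24}$, and Fuchs--van de Graaf in the reverse direction then yields $\mathrm{F}(Q^0_{\mathsf{A}},Q^1_{\mathsf{A}})\geq (1 - 2e^{-np/24})^2 \geq 1 - 4e^{-np/24}$, as claimed. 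The only genuine care needed is matching the Knill--Laflamme distance threshold ($3np+1$) to the Chernoff cutoff ($3np/2$) so that the \emph{product} $E_b^\dagger E_a$ of two low-weight errors remains correctable; everything else is routine distance-measure bookkeeping.
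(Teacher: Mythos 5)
Your proposal is correct and follows essentially the same route as the paper's proof: truncate to Pauli errors of weight at most $\tfrac{3}{2}np$, invoke the quantum Gilbert--Varshamov bound plus the Knill--Laflamme conditions to identify the truncated reduced states, control the truncation error by a Chernoff bound, and assemble the fidelity bound via triangle inequality and Fuchs--van de Graaf. The only cosmetic difference is that you bound $\delta_{\mathsf{TD}}(Q^b_{\mathsf{A}},\tilde Q^b_{\mathsf{A}})$ via the pure-state overlap $\langle Q^b|\tilde Q^b\rangle=\sqrt{p_{\mathrm{good}}}$ where the paper invokes the Gentle Measurement Lemma; for a projector acting on a pure state these are the same bound.
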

\begin{proof}
Before we bound the fidelity, we first make a couple of observations.
Define the projector onto system $\mathsf{A}$ given by
\begin{align}
\boldsymbol{\Pi}_{n,p} = \id_k \otimes \sum_{a \, : \, |E_a| \leq \frac{3}{2}np} \proj{a}.
\end{align}
Note that $\boldsymbol{\Pi}_{n,p}$ projects onto the support of all $n$-qubit Pauli operators of weight at most $\frac{3}{2}np$. For convenience, we also define the reduced density matrices
\begin{align*}
\hat{Q}^b_{\mathsf{A}} :=\frac{\mathrm{Tr}_{\mathsf{B}}\left[(\boldsymbol{\Pi}_{n,p} \otimes \id_{\mathsf{B}})Q^b_{\mathsf{AB}}(\boldsymbol{\Pi}_{n,p} \otimes \id_{\mathsf{B}})\right]}{\mathrm{Tr}[(\boldsymbol{\Pi}_{n,p} \otimes \id_{\mathsf{B}})Q^b_{\mathsf{AB}}(\boldsymbol{\Pi}_{n,p} \otimes \id_{\mathsf{B}})]}, \quad\quad \text{ for } b \in \bit.
\end{align*}

First, we observe the following about the reduced state $Q^0_{\mathsf{A}} = \mathrm{Tr}_{\mathsf{B}}\left[Q^0_{\mathsf{AB}}\right]$:
\begin{align*}
&\delta_{\mathsf{TD}}\left(\mathrm{Tr}_{\mathsf{B}}\left[Q^0_{\mathsf{AB}}\right], \hat{Q}^0_{\mathsf{A}} \right) \\
&\leq \delta_{\mathsf{TD}}\left(Q^0_{\mathsf{AB}}, \frac{(\boldsymbol{\Pi}_{n,p} \otimes \id_{\mathsf{B}})Q^0_{\mathsf{AB}}(\boldsymbol{\Pi}_{n,p} \otimes \id_{\mathsf{B}})}{\mathrm{Tr}[(\boldsymbol{\Pi}_{n,p} \otimes \id_{\mathsf{B}})Q^0_{\mathsf{AB}}(\boldsymbol{\Pi}_{n,p} \otimes \id_{\mathsf{B}})]} \right) && \quad (\text{Monotinicity of $\delta_{\mathsf{TD}}$})\\
&\leq  \sqrt{1- \mathrm{Tr}[(\boldsymbol{\Pi}_{n,p} \otimes \id_{\mathsf{B}})Q^0_{\mathsf{AB}}]} && (\text{Gentle measurement})\\
&=   \sqrt{\Pr_{E \sim \mathcal{D}_{p}^{\otimes n} }\left[|E| > \frac{3}{2}np\right]}\\
&\leq  \exp\left(-\frac{np}{24} \right). && (\text{Chernoff bound})
\end{align*}
Next, we make the following observation. By the quantum Gilbert-Varshamov, it follows that a random stabilizer code is non-degenerate and has distance at least $d=3np+1$ with probability at least $ 1-3 np \cdot 2^{n H(3p) } \cdot 3^{3np} \cdot 2^{-n+k}$ over the choice of encoding Clifford $C$, in which case for any pair of codewords with $x,y \in \bit^k$:
\begin{align}
\bra{\overline{\psi_x}} E_a^\dag E_b \ket{\overline{\psi_y}}  =0
\end{align}
by the Knill-Laflamme conditions---provided the errors $E_a, E_b$ have weight at most $|E_a|, |E_b| \leq \frac{3}{2}np$. Thus,
the aforementioned reduced states must be identical:
\begin{align}\label{eq:knill-lafrlamme-revisited}
\hat{Q}^0_{\mathsf{A}} = \hat{Q}^1_{\mathsf{A}}.   
\end{align}
To complete the proof, we now make another observation about the reduced state $Q^1_{\mathsf{A}} = \mathrm{Tr}_{\mathsf{B}}\left[Q^1_{\mathsf{AB}}\right]$. Using a similar approach as before, we obtain the bound
\begin{align*}
\delta_{\mathsf{TD}}\left( \hat{Q}^0_{\mathsf{A}},Q^1_{\mathsf{A}}\right)
&\leq \delta_{\mathsf{TD}}\left( \hat{Q}^0_{\mathsf{A}},\hat{Q}^1_{\mathsf{A}} \right) + \delta_{\mathsf{TD}}\left( \hat{Q}^1_{\mathsf{A}},Q^1_{\mathsf{A}} \right) && (\text{Triangle ineq.})\\
&\leq \, 0 \, + \, \delta_{\mathsf{TD}}\left( \hat{Q}^1_{\mathsf{A}},Q^1_{\mathsf{A}} \right) &&(\text{By Equation~\eqref{eq:knill-lafrlamme-revisited}})\\
&\leq  \sqrt{1- \mathrm{Tr}[(\boldsymbol{\Pi}_{n,p} \otimes \id_{\mathsf{B}})Q^0_{\mathsf{AB}}]} && (\text{Gentle measurement})\\
&=  \sqrt{\Pr_{E \sim \mathcal{D}_{p}^{\otimes n} }\left[|E| > \frac{3}{2}np\right]}\\
&\leq  \exp\left(-\frac{np}{24} \right). && (\text{Chernoff bound})
\end{align*}
Putting everything together, we can now apply the Fuchs-van de Graaf inequality, followed by the triangle inequality, followed by Bernoulli's inequality, to lower bound the fidelity between the reduced states as follows:
\begin{align*}
\mathrm{F}(Q^0_{\mathsf{A}},Q^1_{\mathsf{A}}) &\geq \left(1-\delta_{\mathsf{TD}}\left(Q^0_{\mathsf{A}},Q^1_{\mathsf{A}} \right) 
\right)^2\\
&\geq  \Bigg(1-\delta_{\mathsf{TD}}\left(\mathrm{Tr}_{\mathsf{B}}\left[Q^0_{\mathsf{AB}}\right], \hat{Q}^0_{\mathsf{A}} \right)-\delta_{\mathsf{TD}}\left( \hat{Q}^0_{\mathsf{A}},\mathrm{Tr}_{\mathsf{B}}\left[Q^1_{\mathsf{AB}}\right] \right)
\Bigg)^2\\
&\geq \left(1-2 \exp\left(-\frac{np}{24} \right)\right)^2 \,\,\,\geq \,\,\, 1-4 \exp\left(-\frac{np}{24} \right).
\end{align*}    
\end{proof}

Using \Cref{lem:fidelity-lemma}, we can now argue that $\mathsf{LSN}_{n,k,\mathcal{D}_p^{\otimes n}}$ admits unique solutions with overwhelming probability---provided that $n$ is slightly larger than $k$, say $n \geq 8k$, and that $p \in (0,1/2)$ is a sufficiently small constant, for example $p = 0.05$.

\begin{lemma}[Existence of unique solutions]\label{lem:existence-sol} Let $n,k \in \mathbb{N}$ be integers and let $p \in (0,1/2)$ be a parameter.
Then, the $\mathsf{LSN}_{n,k,\mathcal{D}_p^{\otimes n}}$ problem admits a unique solution and can be solved information-theoretically with probability at least $1-\delta$, where
$$ 
\delta \, \leq \, 3 np \cdot 2^{n H(3p) } \cdot 3^{3np} \cdot 2^{-n+k} + 2 \cdot e^{-\frac{np}{48}}. 
$$
\end{lemma}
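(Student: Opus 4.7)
The plan is to combine Lemma~\ref{lem:fidelity-lemma} with Uhlmann's theorem to exhibit an (information-theoretic) decoder that recovers $x$ with the claimed probability. First, I would condition on the event that the random Clifford $C$ induces a stabilizer code satisfying the hypotheses of Lemma~\ref{lem:fidelity-lemma}, which occurs with probability at least $1 - 3np \cdot 2^{nH(3p)} \cdot 3^{3np} \cdot 2^{-n+k}$. On this event, the reduced states $Q^0_{\mathsf{A}}$ and $Q^1_{\mathsf{A}}$ of the purifications defined in Equations~\eqref{eq:Q0} and \eqref{eq:Q1} satisfy $\mathrm{F}(Q^0_{\mathsf{A}}, Q^1_{\mathsf{A}}) \geq 1 - 4 e^{-np/24}$.

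Next, I would invoke Uhlmann's theorem (Theorem~\ref{thm:uhlmann}) to produce a unitary $U \in \linear(\mathcal{H}_{\mathsf{B}})$ acting only on register $\mathsf{B}$ such that
\[
\left|\bra{Q^1}_{\mathsf{AB}} (\id_{\mathsf{A}} \otimes U_{\mathsf{B}}) \ket{Q^0}_{\mathsf{AB}}\right|^2 \,\geq\, \mathrm{F}(Q^0_{\mathsf{A}}, Q^1_{\mathsf{A}}) \,\geq\, 1 - 4 e^{-np/24}.
\]
Since $U$ acts only on $\mathsf{B}$, it can be applied directly to the input state $E \, C(\ket{0^{n-k}} \otimes \ket{x}) \otimes \ket{0}$ without access to the (fictitious) purification on $\mathsf{A}$. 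The decoder then measures the resulting state in the computational basis and outputs the bits residing in the sub-register that corresponds to the slot holding $\ket{x}$ within $\ket{Q^1}_{\mathsf{AB}}$.

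To analyze the success probability, I would observe that the reduced state $Q^1_{\mathsf{B}}$ is, by construction, a classical mixture over the computational basis vectors $\ket{0^{n-k}} \otimes \ket{x} \otimes \ket{a}$ with the correct joint marginal on $(x,a)$; hence measuring $Q^1_{\mathsf{B}}$ in the computational basis recovers the true secret $x$ with probability exactly $1$ (in expectation over the sample randomness). By Fuchs-van de Graaf applied to the pure-state overlap above, together with monotonicity of trace distance under partial trace, the decoded state $(I_{\mathsf{A}} \otimes U_{\mathsf{B}})\ket{Q^0}\bra{Q^0}(I_{\mathsf{A}} \otimes U_{\mathsf{B}})^\dagger$ restricted to $\mathsf{B}$ is within trace distance $\sqrt{4 e^{-np/24}} = 2 e^{-np/48}$ of $Q^1_{\mathsf{B}}$. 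Therefore, conditional on the Gilbert-Varshamov event, the decoder outputs the correct $x$ with probability at least $1 - 2 e^{-np/48}$.

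Finally, a union bound over the two failure events yields the claimed overall failure probability
\[
\delta \,\leq\, 3 np \cdot 2^{n H(3p)} \cdot 3^{3np} \cdot 2^{-n+k} \,+\, 2 \cdot e^{-np/48},
\]
establishing both uniqueness of the solution and the existence of an information-theoretic decoder. The most delicate step will be justifying that measuring only register $\mathsf{B}$ (without access to the purifying register $\mathsf{A}$) correctly recovers $x$; this works because $\ket{Q^1}$ is constructed so that $x$ appears coherently in both registers, so applying the Uhlmann unitary to $\mathsf{B}$ and reading off the appropriate sub-register suffices.
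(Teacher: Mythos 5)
Your approach is the same as the paper's: condition on the Gilbert--Varshamov event, apply Lemma~\ref{lem:fidelity-lemma} to get $\mathrm{F}(Q^0_{\mathsf{A}},Q^1_{\mathsf{A}})\geq 1-4e^{-np/24}$, invoke Uhlmann's theorem to obtain a unitary on $\mathsf{B}$, apply it to the input plus an ancilla, and convert the overlap to a trace-distance bound of $2e^{-np/48}$ via Fuchs--van de Graaf before union bounding. That matches the paper's proof step for step.

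However, the final accounting step as you wrote it does not go through. You trace out $\mathsf{A}$ and argue that the decoded state restricted to $\mathsf{B}$ is within trace distance $2e^{-np/48}$ of $Q^1_{\mathsf{B}}$, and then conclude that the decoder outputs the \emph{correct} $x$. But $Q^1_{\mathsf{B}}$ is just the uniform mixture over $\ket{0^{n-k}}\otimes\ket{x}\otimes\ket{a}$; a state whose $\mathsf{B}$-marginal exactly equals $Q^1_{\mathsf{B}}$ could still be completely uncorrelated with the instance $(x,a)$ that was actually sampled, so closeness of marginals on $\mathsf{B}$ alone says nothing about decoding correctness. The fix is to \emph{keep} the purifying register $\mathsf{A}$, which records the sampled $(x,a)$: the joint states $(\id_{\mathsf{A}}\otimes U_{\mathsf{B}})\ket{Q^0}$ and $\ket{Q^1}$ are within trace distance $2e^{-np/48}$, and in $\ket{Q^1}$ the computational-basis contents of $\mathsf{A}$ and $\mathsf{B}$ are perfectly correlated. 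Measuring $\mathsf{A}$ first collapses $\ket{Q^0}$ to the actual input instance (and commutes with $U_{\mathsf{B}}$), so the joint distribution of (sampled instance, decoder output) is $2e^{-np/48}$-close in total variation to one where they always agree; this is the statement you need, and it is what the paper's per-instance phrasing ("within trace distance at most $2e^{-np/48}$ of the state $\ket{0^{n-k}}\otimes\ket{x}\otimes\ket{a}$", averaged over the sample) is implicitly using. With that correction the rest of your argument, including the union bound, is fine.
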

\begin{proof}
Suppose we are given as input an instance in the Clifford representation, 
\begin{align}\label{eq:lsn-instance-cliff}
\big(C \in \Cliff_n, E_a \, C (\ket{0^{n-k}} \otimes \ket{ x})\big)  
\end{align}
where $C \sim  \Cliff_n$ is a random $n$-qubit Clifford, $E_a \sim \mathcal{D}_p^{\otimes n}$ is an $n$-qubit Pauli error from a local depolarizing channel, and $x \sim \bit^k$ is a random $k$-bit secret. 

Let $\ket{Q^0}_{\mathsf{AB}}$ and $\ket{Q^1}_{\mathsf{AB}}$
be the bipartite states in Eq.~\eqref{eq:Q0} and Eq.~\eqref{eq:Q1}, respectively, for some $p \in (0,1/2)$. Note that $\ket{Q^0}_{\mathsf{AB}}$ depends on the Clifford $C \in \Cliff_n$. Then, \Cref{lem:fidelity-lemma} shows that, with probability at least $ 1-3 np \cdot 2^{n H(3p) } \cdot 3^{3np} \cdot 2^{-n+k}$ over the choice of the random Clifford $C$, it holds that $\mathrm{F}(Q_{\mathsf{A}}^0,Q_{\mathsf{A}}^1) \geq 1-4 \cdot e^{-\frac{np}{24}}$.
Therefore, by Uhlmann's theorem, there exists a unitary $U \in \mathrm{L}(\algo H_{\mathsf{B}})$ which acts on the $\mathsf{B}$ register and maps $\ket{Q^0}_{\mathsf{AB}}$ to another state which achieves an overlap of at least $4\cdot e^{-\frac{np}{24}}$ with $\ket{Q^1}_{\mathsf{AB}}$. Applying the Uhlmann unitary $U$ on the quantum part of the input from \Cref{eq:lsn-instance-cliff} together with an additional ancilla register $\ket{0}$, i.e.,
$$
E_a C (\ket{0^{n-k}} \otimes \ket{ x}) \otimes \ket{0} \quad \mapsto \quad U \left( E_a \, C (\ket{0^{n-k}} \otimes \ket{ x}) \otimes \ket{0} \right)
$$
we obtain a state which---via Fuchs-van De Graaf---is within trace distance at most $2e^{-\frac{np}{48}}$ of the state $\ket{0^{n-k}} \otimes \ket{x} \otimes \ket{a}$.
In other words, the Uhlmann unitary $U$ allows us to solve $\mathsf{LSN}_{n,k,\mathcal{D}_p^{\otimes n}}$ and to recover the $x$ with the desired success probability.
\end{proof}

\subsection{Multi-Shot Variant}

Recall that the quantum learning problem $\mathsf{LSN}_{n,k,\mathcal{D}_p^{\otimes n}}$ in \Cref{def:LSN} only involves a single quantum sample of the form
$$
\big(S, E \ket{\overline{\psi_x}}^S \big) \,\,\sim \,\,\mathsf{LSN}_{n,k,\mathcal{D}_p^{\otimes n}}.
$$
One may reasonably ask: does the learning problem become easier if the learner instead receives many independently chosen samples $\{S_i, E_i \ket{\overline{\psi_x}}^{S_i}\}_{i \in [m]}$, where the secret $x$ remains the same throughout each sample? This motivates us to consider a multi-shot variant of the LSN problem, which we study in this section. Note that the multi-sample learning task also shows up naturally in the related $\lpn$ problem (see \Cref{sec:reduction-LPN}).
We define the multi-shot variant of the $\mathsf{LSN}_{n,k,\mathcal{D}_p^{\otimes n}}$ as follows.

\begin{definition}[Multi-Shot Learning Stabilizers with Noise]\label{prob:LSN_multi}
Let $k \in \mathbb{N}$ be the security parameter and let $n=\poly(k)$ be an integer. Let $p \in (0,1/2)$ be a parameter. The multi-shot Learning Stabilizers with Noise ($\mathsf{MSLSN}_{m,n,k,\mathcal{D}_p^{\otimes n}}$) problem is to find $x \in \bit^k$ given
$$
\big\{S_i\in \Stab(n,k), E_i \ket{\overline{\psi_x}}^{S_i} \big\}_{i=1}^m \, ,
$$
where, for each index $i \in [m]$, $S_i\sim \Stab(n,k)$ is a uniformly random stabilizer and $E_i \sim \mathcal{D}_p^{\otimes n}$ is a Pauli error $E \in \bar{\mathcal{P}}_n$, and where $x \sim \bit^k$ is a random string.
\end{definition}

Depending on the application, it may be more natural to think of $m$, the sample complexity, as a parameter that one wishes to minimize, rather than as a fixed parameter. As previously mentioned in \Cref{remark:clifford-rep}, we can equivalently use the Clifford representation and think of an $\mathsf{MSLSN}_{m,n,k,\mathcal{D}_p^{\otimes n}}$ instance a a tuple
$$
\left\{C_i \in \Cliff_n, \, E_i \, C_i (\ket{0^{n-k}} \otimes \ket{x})\right\}_{i=1}^m \, ,
$$
where $C_i \sim \Cliff_n$ and $E_i \sim \mathcal{D}_p^{\otimes n}$, for each index $i \in [m]$, and where $x \sim \bit^k$.  

It is not hard to see that $\mathsf{MSLSN}_{m,n,k,\mathcal{D}_p^{\otimes n}}$ reduces to (a special variant of) the single-shot problem in \Cref{def:LSN}. Indeed, we can show the following.

\begin{lemma}[$\mathsf{MSLSN}_{m,n,k,\mathcal{D}_p^{\otimes n}}$ reduces to a special variant of $\lsn$\label{lem:MSLSNtoLSN}]
Let $k \in \mathbb{N}$, let $n=\poly(k)$ be an integer and $p \in (0,1/2)$ be a parameter.
Then, there exist distributions  $\mathcal{S}$ and $\mathcal{I}$ which depend on $m,n,k$ and $p$ and have support over $\Cliff_{mn}$ and $\bit^{mk}$, respectively, such that
$\mathsf{MSLSN}_{m,n,k,\mathcal{D}_p^{\otimes n}}$ reduces to $\lsn_{mn,mk,\mathcal{D}_p^{\otimes mn},\mathcal{S},\mathcal{I}}$.
\end{lemma}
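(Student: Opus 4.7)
The plan is a straightforward bundling reduction: take the $m$ independent samples of $\mathsf{MSLSN}_{m,n,k,\mathcal{D}_p^{\otimes n}}$ and repackage them as a single large $\lsn$ instance on $mn$ qubits encoding an $mk$-bit ``secret,'' where the induced distributions over Cliffords and secrets are the ``special'' distributions $\mathcal{S}$ and $\mathcal{I}$.

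Concretely, by \Cref{remark:clifford-rep}, the input to $\mathsf{MSLSN}_{m,n,k,\mathcal{D}_p^{\otimes n}}$ can be written as $\{C_i\in \Cliff_n, E_i\, C_i(\ket{0^{n-k}}\otimes \ket{x})\}_{i=1}^m$, where the $C_i$'s are i.i.d.\ uniform Cliffords, the $E_i$'s are i.i.d.\ Pauli errors from $\mathcal{D}_p^{\otimes n}$, and the secret $x \sim \bit^k$ is shared across all samples. Tensoring the samples gives the $mn$-qubit state $(E_1\otimes\cdots\otimes E_m)(C_1\otimes\cdots\otimes C_m)(\ket{0^{n-k}}\otimes\ket{x})^{\otimes m}$. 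To put this into the canonical $\lsn$ form $E'\cdot C'\cdot(\ket{0^{mn-mk}}\otimes \ket{x'})$, I would apply a fixed $mn$-qubit permutation operator $\mathcal{Q}(\pi)$ that gathers the $m(n-k)$ zero-qubits at the front and the $mk$ secret-qubits at the back. Since $\mathcal{Q}(\pi)^\dag \mathcal{Q}(\pi)=\id$ and $\mathcal{Q}(\pi)(\ket{0^{n-k}}\otimes\ket{x})^{\otimes m}=\ket{0^{mn-mk}}\otimes\ket{x}^{\otimes m}$, setting
\[
C' := (C_1\otimes\cdots\otimes C_m)\cdot \mathcal{Q}(\pi)^\dag, \quad E' := E_1\otimes\cdots\otimes E_m, \quad x' := \underbrace{x\|x\|\cdots\|x}_{m\text{ copies}}
\]
yields $E'\cdot C'\cdot (\ket{0^{mn-mk}}\otimes \ket{x'}) = (E_1\otimes\cdots\otimes E_m)(C_1\otimes\cdots\otimes C_m)(\ket{0^{n-k}}\otimes\ket{x})^{\otimes m}$, as desired. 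Because $\mathcal{Q}(\pi)$ is a Clifford (by \Cref{lem:plcgroup}) and tensor products of Cliffords are Cliffords, $C'\in\Cliff_{mn}$, and we can write down its classical description efficiently from the $C_i$'s.

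Now define $\mathcal{S}$ to be the distribution over $\Cliff_{mn}$ induced by sampling $C_1,\dots,C_m\sim \Cliff_n$ i.i.d.\ and forming $(C_1\otimes\cdots\otimes C_m)\cdot\mathcal{Q}(\pi)^\dag$; define $\mathcal{I}$ to be the distribution over $\bit^{mk}$ induced by sampling $x\sim\bit^k$ and outputting $x\|x\|\cdots\|x$. The noise distribution is the standard local depolarizing channel $\mathcal{D}_p^{\otimes mn}$, which is exactly the distribution of $E'$ since $E'$ is a tensor product of $m$ independent $\mathcal{D}_p^{\otimes n}$ samples. Hence the bundled sample $(C', E'\cdot C'(\ket{0^{mn-mk}}\otimes\ket{x'}))$ is distributed exactly according to $\lsn_{mn,mk,\mathcal{D}_p^{\otimes mn},\mathcal{S},\mathcal{I}}$.

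For the reduction itself: given an oracle $\mathcal{A}$ for $\lsn_{mn,mk,\mathcal{D}_p^{\otimes mn},\mathcal{S},\mathcal{I}}$, the reducer takes the $m$ MSLSN samples, assembles $C'$ and the $mn$-qubit quantum register as above, invokes $\mathcal{A}$ to obtain $x'\in\bit^{mk}$, and outputs the first $k$ bits of $x'$ (any $k$-bit block works, since $x'=x^m$). Success probability is preserved exactly because the instance distribution seen by $\mathcal{A}$ is by construction $\lsn_{mn,mk,\mathcal{D}_p^{\otimes mn},\mathcal{S},\mathcal{I}}$. There is no real obstacle here; the only subtlety worth flagging is that $\mathcal{S}$ and $\mathcal{I}$ are \emph{highly non-uniform} (tensor-product Cliffords with a fixed permutation; secret supported on only $2^k$ of the $2^{mk}$ possible strings), which is precisely why the conclusion is stated only for this \emph{special variant} of $\lsn$ rather than for the canonical uniform problem.
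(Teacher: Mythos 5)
Your proposal is correct and is essentially the paper's own argument: bundle the $m$ samples via a fixed permutation of the $mn$ qubits, let $\mathcal{S}$ be the induced tensor-product-Clifford distribution and $\mathcal{I}$ the repeated-secret distribution $x^m$, and observe that the noise remains $\mathcal{D}_p^{\otimes mn}$. The only (immaterial) difference is that you absorb $\mathcal{Q}(\pi)^\dag$ into the encoding Clifford and leave the error operators in place, whereas the paper conjugates both the Clifford and the error by $\mathcal{Q}(\pi)$ and then invokes permutation-invariance of $\mathcal{D}_p^{\otimes mn}$.
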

\begin{proof}
Suppose we are given as input an $\mathsf{MSLSN}_{m,n,k,\mathcal{D}_p^{\otimes n}}$ instance of the form
$$
\left\{C_i \in \Cliff_n, \, E_i \, C_i (\ket{0^{n-k}} \otimes \ket{x})\right\}_{i=1}^m \, ,
$$
where $C_i \sim \Cliff_n$ and $E_i \sim \mathcal{D}_p^{\otimes n}$, for each index $i \in [m]$, and where $x \sim \bit^k$. 

Consider the reduction which proceeds as follows:
\begin{enumerate}
    \item Let $\pi \in \frak{S}_{mn}$ be the permutation on $mn$ elements with permutation operator
    $$
Q(\pi) \left( (\ket{0^{n-k}} \otimes \ket{x}) \otimes \dots \otimes (\ket{0^{n-k}} \otimes \ket{x}) \right) = \ket{0^{m(n-k)}} \otimes \ket{x^{m}} \,
$$
where $x^m = x \cdots x$ such that $x$ is repeated $m$ times.

\item Run the solver for the $\lsn_{mn,mk,\mathcal{N},\mathcal{S},\mathcal{I}}$ problem on input
$$
\left(Q(\pi) \big(\bigotimes_{i \in [m]} C_i \big)Q(\pi)^\dag, \,Q(\pi) \bigotimes_{i \in [m]} E_i \, C_i (\ket{0^{n-k}} \otimes \ket{x}) \right).
$$
\end{enumerate}
Here, we define the corresponding distributions $\mathcal{N}$, $\mathcal{S}$ and $\mathcal{I}$ as follows:
\begin{itemize}
    \item $\mathcal{N}$ is the product distribution $\mathcal{D}_p^{\otimes mn}$ over local depolarizing channels.

    \item $\mathcal{S}$ is the distribution over Stabilizer codes which outputs Cliffords of the form
    $$
Q(\pi) \big(\bigotimes_{i \in [m]} C_i \big)Q(\pi)^\dag, \quad\quad \text{ for } C_i \sim \Cliff_n, \, \forall i \in [m].
$$

\item $\mathcal{I}$ is the distribution over $\bit^{mk}$ which first samples $x \sim \bit^k$, and then outputs $x^m = x \cdots x$ where $x$ is repeated $m$ times.
\end{itemize}
We now argue that our reduction allows us to solve $\mathsf{MSLSN}_{m,n,k,\mathcal{D}_p^{\otimes n}}$---provided that we have a solver for $\lsn_{mn,mk,\mathcal{N},\mathcal{S},\mathcal{I}}$. To this end, we observe that
\begin{align*}
&Q(\pi) \bigotimes_{i \in [m]} E_i \, C_i (\ket{0^{n-k}} \otimes \ket{x})\\
&= Q(\pi) \left(\bigotimes_{i \in [m]} E_i \, C_i \right) Q(\pi)^\dag Q(\pi)\left( (\ket{0^{n-k}} \otimes \ket{x}) \otimes \dots \otimes (\ket{0^{n-k}} \otimes \ket{x}) \right)\\
&=  Q(\pi) \left(\bigotimes_{i \in [m]} E_i \, C_i \right) Q(\pi)^\dag\left(\ket{0^{m(n-k)}} \otimes \ket{x^{m}}\right) \\
&=  Q(\pi) \left(\bigotimes_{i \in [m]} E_i \right) Q(\pi)^\dag 
Q(\pi)\left(\bigotimes_{i \in [m]}  C_i \right)Q(\pi)^\dag 
\left(\ket{0^{m(n-k)}} \otimes \ket{x^{m}}\right).
\end{align*}
Since the distribution $\mathcal{D}_p^{\otimes mn}$ is invariant under permutations via $Q(\pi)$, our reduction produces an instance of $\lsn_{mn,mk,\mathcal{N},\mathcal{S},\mathcal{I}}$, as desired.
\end{proof}

\section{Reduction from Learning Parity with Noise}\label{sec:reduction-LPN}

In this section, we show that LSN is a rich assumption which captures the classical $\lpn$ problem as a special case. Despite of more than a quarter of a century of study, the fastest
known (classical or quantum) algorithms for $\lpn$ still run in exponential time~\cite{BKW03}. Because the LSN problem (in some sense) subsumes $\lpn$, this can be seen as additional evidence for the average-case hardness of our learning problem.

In fact, our also results suggest that the LSN problem---precisely because it is an inherently quantum assumption---may in fact be even harder to break than its classical counterpart, which makes it particularly appealing as a basis of hardness in quantum cryptography.

\subsection{Learning Parity with Noise}

Recall that the $\lpn$ assumption~\cite{BFKL93} says that it is computationally difficult to decode a random linear code under Bernoulli noise. 

\begin{definition}[Learning Parity with Noise]\label{def:LPN}

Let $n,k \in \mathbb{N}$ and let $p \in (0,1/2)$ be a parameter. The Learning Parity with Noise ($\mathsf{LPN}_{n,k,\mathsf{Ber}_{p}^{\otimes n}}$) problem is to find $\vec x$ given as input
$$
(\vec A \sim \Z_2^{n\times k},\vec A \cdot \vec x +  \vec e \Mod{2}) 
$$
where $\vec x \sim \Z_2^{k}$ and $ \vec e \sim \Ber_p^{\otimes n}$ is a random Bernoulli error term.
\end{definition}

The $\lpn$ decoding problem is believed to be hard against both classical and quantum algorithms running in time $\poly(k)$ in the 
\emph{constant-noise} regime, where $p \in (0,1/2)$ is a constant and $n = \poly(k)$. In this regime, the celebrated BKW algorithm~\cite{BKW03} solves $\lpn$ with time/sample  complexity given by $O(2^{k/\log k})$.

Next, we study the relationship between $\lpn$ and LSN.

\subsection{Quantum Reduction to LSN}\label{subsec:reduction-LPN}

The goal of this section is make the connection between LSN and $\lpn$ more explicit; in particular, we show that $\lpn$ reduces to a special case of LSN. Specifically, we show that $\mathsf{LPN}_{n,k,\mathsf{Ber}_{p}^{\otimes n}}$ reduces to $\lsn_{n,k,\mathcal{N},\mathcal{S},\mathcal{I}}$ with respect to the following set of distributions:
\begin{itemize}
    \item $\mathcal{N}$ is the product $\mathcal{F}_{p}^{\otimes n}$ of bit-flip noise channels $\mathcal{F}_{p}$ with parameter $p >0$.

    \item $\mathcal{S}$ is the distribution over $\Cliff_n$ which samples $C \sim \algo S$ as follows: first, sample a random matrix $\vec A \in \Z_2^{n \times k}$ of full column-rank, and let $C$ be the corresponding matrix-multiplication Clifford operator $C: \ket{0^{n-k}} \otimes \ket{\vec x} \rightarrow \ket{\vec A \cdot\vec x \Mod{2}}$.

    \item $\mathcal{I}$ is the uniform distribution over $x \in \bit^k$.
\end{itemize}

Finally, we ask: can we also provide hardness results in the multi-shot setting? We give a quantum reduction from LPN which applies in both settings.

\paragraph{Single-shot variant.}

 First, we focus on the single-shot setting and show the following theorem relating LSN and $\lpn$.

\begin{theorem}[$\lpn$ reduces to LSN] Let $n,k \in \mathbb{N}$ be integers and let $p \in (0,1/2)$ be a parameter. 
Suppose there exists an algorithm $\mathcal{A}$ that runs in time $T$ and solves $\lsn_{n,k,\mathcal{N},\mathcal{S},\mathcal{I}}$ with probability $\epsilon$.
Then, there exists an algorithm $\mathcal{B}$ which runs in time $\poly(k,T)$
and solves
$\mathsf{LPN}_{n,k,\mathsf{Ber}_{p}^{\otimes n}}$ with probability at least $\epsilon \cdot \left( 1 - k \cdot 2^{k-n-1}\right).
$
\end{theorem}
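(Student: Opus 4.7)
The plan is to directly lift each classical $\lpn$ sample into a quantum $\lsn$ sample by exploiting the identity highlighted in the introduction, namely
\begin{equation*}
\ket{\vec A \cdot \vec x + \vec e \Mod{2}} \;=\; X^{\vec e} \, U_{\vec A}\bigl(\ket{0^{n-k}} \otimes \ket{\vec x}\bigr),
\end{equation*}
where $U_{\vec A}$ is the Clifford implementing matrix multiplication by $\vec A$. Concretely, the reduction $\mathcal{B}$ does the following on input $(\vec A, \vec y)$ with $\vec y = \vec A \cdot \vec x + \vec e \Mod{2}$: it first checks (by Gaussian elimination over $\mathbb{F}_2$) whether $\vec A$ has full column rank; if not, it outputs an arbitrary string and fails. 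Otherwise, it classically compiles a $\mathrm{CNOT}$-only description of $U_{\vec A}$ in time $\poly(n,k)$ (e.g.\ via \Cref{thm:encode_stab} applied to the stabilizer $S_{\vec A} = \langle U_{\vec A} Z_i U_{\vec A}^\dag\rangle_i$, or directly by reversible linear-circuit synthesis), prepares the $n$-qubit computational basis state $\ket{\vec y}$, and invokes $\mathcal{A}$ on the pair $\bigl(U_{\vec A},\,\ket{\vec y}\bigr)$, returning whatever $\mathcal{A}$ returns.

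Next I would verify that, conditional on $\vec A$ having full column rank, the instance handed to $\mathcal{A}$ is distributed exactly as $\lsn_{n,k,\mathcal{N},\mathcal{S},\mathcal{I}}$. Three matchings need to be checked: (i) the Clifford $U_{\vec A}$ is drawn according to $\mathcal{S}$, since $\mathcal{S}$ is by definition the pushforward of the uniform distribution on full-column-rank $\vec A \in \Z_2^{n\times k}$; (ii) the secret $\vec x$ is uniform on $\bit^k$, matching $\mathcal{I}$; and (iii) by the displayed identity, $\ket{\vec y} = X^{\vec e}\, U_{\vec A}(\ket{0^{n-k}}\otimes\ket{\vec x})$, and the bit-flip Pauli error $X^{\vec e}$ with $\vec e \sim \Ber_p^{\otimes n}$ is precisely the output of $\mathcal{F}_p^{\otimes n} = \mathcal{N}$ viewed as a random Pauli channel (see \Cref{eq:nbitflip}). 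Thus $\mathcal{A}$ sees a correctly distributed $\lsn$ sample and returns $\vec x$ with probability $\epsilon$, conditional on full column rank.

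Finally I would lower-bound the full-column-rank event. For a uniformly random $\vec A \in \Z_2^{n\times k}$, the probability that its $i$-th column lies in the span of the previous $i-1$ columns is $2^{i-1}/2^n = 2^{i-1-n}$. By a union bound over $i \in [k]$, and upper-bounding the geometric sum by $k$ times its largest term,
\begin{equation*}
\Pr\bigl[\vec A \text{ is rank deficient}\bigr] \;\leq\; \sum_{i=1}^{k} 2^{i-1-n} \;\leq\; k \cdot 2^{k-n-1}.
\end{equation*}
Combining with the previous paragraph, $\mathcal{B}$ succeeds with probability at least $\epsilon \cdot (1 - k \cdot 2^{k-n-1})$, and its runtime is $\poly(k) + T$ as claimed.

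There is no genuinely hard step here; the whole proof is essentially bookkeeping around the $\ket{\vec A \vec x + \vec e}$ identity. The only subtlety to be careful about is the conditioning: one must note that conditioning a uniform $\vec A$ on having full column rank yields exactly the sampler used to define $\mathcal{S}$, so the success probability of $\mathcal{A}$ on the reduced instance really is $\epsilon$ (rather than some a priori different quantity) — and that the bit-flip channel $\mathcal{F}_p^{\otimes n}$ is the correct noise distribution $\mathcal{N}$ to plug into the special LSN variant being targeted here.
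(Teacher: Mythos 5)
Your proposal is correct and follows essentially the same route as the paper: check that $\vec A$ has full column rank, reinterpret $\ket{\vec A\vec x+\vec e}$ as $X^{\vec e}U_{\vec A}(\ket{0^{n-k}}\otimes\ket{\vec x})$, feed this to the LSN solver, and multiply $\epsilon$ by the full-rank probability. The only cosmetic difference is that you bound the rank-deficiency probability by a union bound over columns, while the paper writes the exact product $\prod_{i=1}^k(1-2^{i-n-1})$ and applies Bernoulli's inequality; both yield the stated bound $1-k\cdot 2^{k-n-1}$.
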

\begin{proof}
Suppose we are given as input an instance
$$
(\vec A \in \Z_2^{n\times k},\vec A \cdot \vec s + \vec e \Mod{2}) \quad \sim \quad \mathsf{LPN}_{n,k,\mathsf{Ber}_{p}^{\otimes n}} 
$$
where $\vec A \sim \Z_2^{n\times k}$, $\vec s \sim \Z_2^{k}$ is a secret vector and $\vec e \sim \Ber_p^{\otimes n}$ is a random Bernoulli error term. Consider the reduction $\mathcal{B}$ which proceeds as follows:
\begin{itemize}
    \item If $\mathrm{col}\mbox{-}\mathrm{rank}(\vec A) < k$, $\mathcal{B}$ aborts.

    \item Else, if $\mathrm{col}\mbox{-}\mathrm{rank}(\vec A) = k$, $\mathcal{B}$ runs $\mathcal{A}$ on input
    $$
(U_{\vec A}, \ket{\vec A\cdot  \vec s + \vec e \Mod{2}})
$$
where $U_{\vec A}$ is the Clifford encoding circuit\footnote{Note that $U_{\vec A}$ can be described in terms of $\mathrm{CNOT}$ gates~\cite{10.5555/2011763.2011767}, which are themselves Clifford gates.} encoding operation
$$
U_{\vec A}: \ket{0^{n-k}} \otimes \ket{\vec x} \rightarrow \ket{\vec A \cdot\vec x \Mod{2}}
$$
which is an injective matrix multiplication for any vector $\vec x \in \Z_2^k$. In other words, $U_{\vec A}$ gives rise to the stabilizer subgroup $S_{\vec A} = \langle U_{\vec A}Z_1 U_{\vec A}^\dag, \dots, U_{\vec A}Z_{n-k} U_{\vec A}^\dag \rangle$.
\end{itemize}
Let is now analyze the probability that $\mathcal{B}$ succeeds. First, we observe that
$$
\underset{ \vec A \sim \Z_2^{n \times k}}{\Pr}[\mathrm{col}\mbox{-}\mathrm{rank}(\vec A) =k] = \prod_{i=1}^k \left(1 - 2^{i-n-1}\right) \geq \left(1 - 2^{k-n-1}\right)^k \geq 1 - k \cdot 2^{k-n-1}.
$$
Here, the last inequality follows from Bernoulli's inequality.
In other words, a uniformly random matrix $ \vec A \sim \Z_2^{n \times k}$ has full column-rank with overwhelming probability, provided that $n$ is only slightly larger than $k$.
We can interpret the noisy sample $\vec A\cdot \vec s + \vec e \Mod{2}$ 
as an ensemble of $n$-qubit pure states
\begin{align*}
\ket{\vec A\cdot \vec s + \vec e \Mod{2}}  &= X^{\vec e} \ket{\vec A\cdot  \vec s \Mod{2}}\\
&= X^{\vec e} U_{\vec A}\left(\ket{0^{n-k}} \otimes \ket{\vec s}\right) \, ,
\end{align*}
where $X^{\vec e} = X^{e_1} \otimes \dots \otimes X^{e_n}$ is a product of Pauli-X operators. Note that, since the error $\vec e \sim \Ber_p^{\otimes n}$ comes from a Bernoulli distribution, it follows that $X^{\vec e} \sim \mathcal{F}_p^{\otimes n}$ corresponds to an $n$-qubit bit-flip error with parameter $p$.

Therefore, we conclude that $\mathcal{B}$ runs in time $\poly(k,T)$ and solves
$\mathsf{LPN}_{n,k,\mathsf{Ber}_{p}^{\otimes n}}$ with probability at least $\epsilon \cdot \prod_{i=1}^k \left(1 - 2^{i-n-1}\right)$.
\end{proof}

\paragraph{Multi-shot variant.}

Finally, we consider the multi-shot setting and show the following theorem which relates $\lpn$ and MSLSN.

\begin{theorem}[$\lpn$ reduces to MSLSN] Let $m,n,k \in \mathbb{N}$ be integers and $p \in (0,1/2)$. 
Suppose there exists an algorithm $\mathcal{A}$ that runs in time $T$ and solves $\mathsf{MSLSN}_{m,n,k,\mathcal{N},\mathcal{S},\mathcal{I}}$ with probability $\epsilon$.
Then, there exists an algorithm $\mathcal{B}$ which runs in time $\poly(k,T)$
and solves
$\mathsf{LPN}_{nm,k,\mathsf{Ber}_{p}^{\otimes nm}}$ with probability at least $\epsilon  \cdot \left( 1 -  m\cdot k \cdot2^{k-n-1} \right)$.
\end{theorem}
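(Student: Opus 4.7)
The plan is to mirror the proof of the single-shot reduction, partitioning the $\mathsf{LPN}_{nm,k,\mathsf{Ber}_{p}^{\otimes nm}}$ instance into $m$ independent blocks of size $n$, each of which we interpret as a bit-flipped stabilizer codeword for a uniformly random $\mathrm{CNOT}$-encoding Clifford. The $\mathsf{MSLSN}$ solver is then invoked on the concatenation of these $m$ samples, all of which share the same secret $\vec{s}$.

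Concretely, on input $(\vec{A}, \vec{A}\cdot \vec{s} + \vec{e} \Mod{2})$ with $\vec{A} \in \Z_2^{nm\times k}$ and $\vec{e} \sim \mathsf{Ber}_p^{\otimes nm}$, the reduction $\mathcal{B}$ parses the rows into $m$ blocks, obtaining matrices $\vec{A}_1,\dots,\vec{A}_m \in \Z_2^{n \times k}$ and error vectors $\vec{e}_1,\dots,\vec{e}_m \in \Z_2^n$ such that the $i$-th block of the noisy codeword equals $\vec{A}_i \cdot \vec{s} + \vec{e}_i \Mod{2}$. If any block $\vec{A}_i$ fails to have full column-rank, $\mathcal{B}$ aborts; otherwise $\mathcal{B}$ constructs the Clifford $U_{\vec{A}_i}$ defined by $\ket{0^{n-k}} \otimes \ket{\vec{x}} \mapsto \ket{\vec{A}_i \cdot \vec{x} \Mod{2}}$ for each $i$, prepares the states $\ket{\vec{A}_i \cdot \vec{s} + \vec{e}_i \Mod{2}} = X^{\vec{e}_i} U_{\vec{A}_i}(\ket{0^{n-k}} \otimes \ket{\vec{s}})$, and runs $\mathcal{A}$ on the tuple $\{(U_{\vec{A}_i}, X^{\vec{e}_i} U_{\vec{A}_i}(\ket{0^{n-k}} \otimes \ket{\vec{s}}))\}_{i=1}^m$.

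Correctness follows from three observations, each of which is already in place from the single-shot proof. First, because the rows of $\vec{A}$ are uniform and independent, the blocks $\vec{A}_1,\dots,\vec{A}_m$ are i.i.d.\ uniform over $\Z_2^{n \times k}$, so each $U_{\vec{A}_i}$ induces a code from the distribution $\mathcal{S}$. Second, since $\vec{e} \sim \mathsf{Ber}_p^{\otimes nm}$ factorizes as a product distribution, the induced Pauli errors $X^{\vec{e}_1},\dots,X^{\vec{e}_m}$ are i.i.d.\ samples from $\mathcal{F}_p^{\otimes n}$, which is exactly the noise channel $\mathcal{N}$ used in the MSLSN variant reduced from LPN. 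Third, the same secret $\vec{s} \sim \Z_2^k$ is shared across all $m$ samples, matching the definition of $\mathsf{MSLSN}$.

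The only remaining point is the abort probability. By Bernoulli's inequality, a single block has full column-rank with probability at least $1 - k \cdot 2^{k-n-1}$, and by a union bound over the $m$ independent blocks, all blocks are simultaneously full-rank with probability at least $1 - m \cdot k \cdot 2^{k-n-1}$. Combining this with the success probability $\epsilon$ of $\mathcal{A}$ on the resulting (honestly distributed) $\mathsf{MSLSN}$ instance yields the claimed bound $\epsilon \cdot (1 - m \cdot k \cdot 2^{k-n-1})$. There is no serious obstacle here; the only subtlety worth double-checking is that the $m$ Cliffords $U_{\vec{A}_i}$ really are i.i.d.\ from the distribution $\mathcal{S}$ conditioned on full column-rank, which is immediate because the blocks of $\vec{A}$ are themselves independent and identically distributed.
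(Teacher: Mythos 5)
Your proposal is correct and follows essentially the same route as the paper: split the $nm$-row LPN instance into $m$ blocks, abort if any block is column-rank deficient, feed the $m$ resulting noisy codewords (sharing the secret $\vec{s}$) to the $\mathsf{MSLSN}$ solver, and bound the abort probability by $m\cdot k\cdot 2^{k-n-1}$. The only cosmetic difference is that you obtain this bound via a union bound over blocks rather than the paper's direct product computation $\bigl(\prod_{i=1}^{k}(1-2^{i-n-1})\bigr)^{m}\geq 1-mk\cdot 2^{k-n-1}$, which gives the same result.
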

\begin{proof}
Suppose we are given as input an instance
$$
(\vec A \sim \Z_2^{nm\times k} ,\vec A \cdot \vec s + \vec e \Mod{2}) \quad \sim \quad \mathsf{LPN}_{nm,k,\mathsf{Ber}_{p}^{\otimes nm}} 
$$
where $\vec A =[\vec A_1 | \cdots | \vec A_m]^\intercal \in \Z_2^{nm\times k}$ consists of $\vec A_i \sim \Z_2^{n\times k}$, $\vec s \sim \Z_2^{k}$ is a secret vector and $\vec e = [\vec e_1| \cdots | \vec e_m]^\intercal \sim \Ber_p^{\otimes nm}$ consists of random Bernoulli error terms $\vec e_i \sim \Ber_p^{\otimes n}$. Consider the reduction $\mathcal{B}$ which proceeds as follows:
\begin{itemize}
    \item If there exists an index $i \in [m]$ such that $\mathrm{col}\mbox{-}\mathrm{rank}(\vec A_i) < k$, $\mathcal{B}$ aborts.

    \item Else, if $\mathrm{col}\mbox{-}\mathrm{rank}(\vec A_i) = k$ for all $i \in [m]$, $\mathcal{B}$ runs $\mathcal{A}$ on input
    $$
\Big\{(U_{\vec A_i}, \ket{\vec A_i\cdot  \vec s + \vec e_i \Mod{2}})\Big\}_{i \in [m]}
$$
where $U_{\vec A_i}$ is the Clifford encoding circuit
$$
U_{\vec A_i}: \ket{0^{n-k}} \otimes \ket{\vec x} \rightarrow \ket{\vec A_i \cdot\vec x \Mod{2}}
$$
which is an injective matrix multiplication for any vector $\vec x \in \Z_2^k$.
\end{itemize}
Let is now analyze the probability that $\mathcal{B}$ succeeds. Again, we observe that
\begin{align*}
&\underset{ \vec A =[\vec A_1 | \cdots | \vec A_m]^\intercal\sim \Z_2^{nm \times k}}{\Pr}\Big[\forall i \in [m] \, : \,\mathrm{col}\mbox{-}\mathrm{rank}(\vec A_i) =k \Big]\\
&=\left(\prod_{i=1}^k \left(1 - 2^{i-n-1}\right)\right)^m
\geq \left(1 - 2^{k-n-1}\right)^{mk} \geq 1 - m \cdot k \cdot 2^{k-n-1}.
\end{align*}
Therefore, we conclude that $\mathcal{B}$ runs in time $\poly(k,m,T)$ and solves
$\mathsf{LPN}_{nm,k,\mathsf{Ber}_{p}^{\otimes nm}}$ with probability at least $\epsilon \cdot \left(1 - m \cdot k \cdot 2^{k-n-1}\right)
$.
\end{proof}

\section{Quantum Algorithms for Learning Stabilizers with Noise}
\label{sec:algorithms}

In this section, we give both
polynomial-time and exponential-time quantum algorithms for solving LSN in various depolarizing noise regimes.

\subsection{Single-Shot Decoding for Extremely Low Noise Rates}

In this section, we first consider an extremely low noise regime in which the LSN problem becomes computationally tractable. Specifically, we consider the $\mathsf{LSN}_{n,k,\mathcal{D}_{p}^{\otimes n}}$ problem with parameter $p=1/n$ and $n 
=\poly(k)$. We show that a simple projection onto the stabilizer code space suffices to solve $\mathsf{LSN}_{n,k,\mathcal{D}_{p}^{\otimes n}}$ in time $O(n^3)$ with good success probability. 

\begin{theorem}[Single-Shot Decoding for Extremely Low Noise Rates] Let $n,k \in \mathbb{N}$ be integers with $n \geq k$ and $n = \poly(k)$. Let $p =\frac{1}{n}$ be a noise parameter. Then, Algorithm $1$ runs in time $O(n^3)$ and solves the $\mathsf{LSN}_{n,k,\mathcal{D}_{p}^{\otimes n}}$ problem with probability at least $0.25$.
\end{theorem}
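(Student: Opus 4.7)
The plan is to analyze what I expect Algorithm~1 to be, namely: given the input $(S, E\ket{\overline{\psi_x}}^S)$, first classically compute an encoding Clifford $U_{\mathrm{Enc}}^S$ and its inverse via \Cref{thm:encode_stab}; apply $(U_{\mathrm{Enc}}^S)^\dagger$ to the noisy codeword; measure all $n$ qubits in the computational basis; and output the last $k$ bits as the guess for $x$, provided the first $n-k$ bits (the ``syndrome bits'') are all zero (otherwise declare failure, or output an arbitrary guess). Equivalently, one may first measure the $n-k$ stabilizer generators, post-select on the trivial syndrome, and then extract the logical content via $(U_{\mathrm{Enc}}^S)^\dagger$.

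Correctness in the noiseless case is immediate: when $E = I^{\otimes n}$, applying $(U_{\mathrm{Enc}}^S)^\dagger$ returns the state to $\ket{0^{n-k}} \otimes \ket{x}$, so the measurement yields $0^{n-k}\|x$ deterministically and the algorithm outputs the correct $x$. The main step of the proof is then to lower bound the probability that the sampled error is trivial. Since $\mathcal{D}_p^{\otimes n}$ produces $E = I^{\otimes n}$ with probability exactly $(1-p)^n$, and $p = 1/n$, we obtain
\[
\Pr_{E \sim \mathcal{D}_p^{\otimes n}}[E = I^{\otimes n}] \,=\, \left(1 - \tfrac{1}{n}\right)^n.
\]
Since $n \mapsto (1 - 1/n)^n$ is increasing on integers $n \geq 1$ and converges to $1/e \approx 0.368$, while equalling $1/4$ at $n = 2$, we have $(1-1/n)^n \geq 1/4$ for every $n \geq 2$. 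Because $n = \poly(k)$ with $n \geq k$ and nontrivial $p \in (0,1/2)$ forces $n \geq 3$, this bound applies, and the algorithm succeeds conditional on the ``no error'' event. The total success probability is therefore at least $0.25$; any extra contribution from non-identity errors whose conjugate $(U_{\mathrm{Enc}}^S)^\dagger E U_{\mathrm{Enc}}^S$ happens to be purely $Z$-type on the first $n-k$ qubits and trivial on the last $k$ qubits only increases it, and may be ignored for the lower bound.

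For the runtime, by \Cref{thm:encode_stab} an encoding Clifford $U_{\mathrm{Enc}}^S$ consisting of $O(n^2)$ standard Clifford gates can be constructed from a description of $S$ in classical time $O(n^3)$, and its inverse is obtained by reversing and daggering gates at no extra asymptotic cost. Applying this circuit to the quantum input and performing the computational-basis measurement then takes $O(n^2)$ quantum operations. The bottleneck is thus the classical circuit synthesis, giving an overall runtime of $O(n^3)$.

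The analysis has essentially no hard step: the only ``obstacle'' is the elementary monotonicity check $(1-1/n)^n \geq 1/4$ for $n \geq 2$, which follows from the base case and the standard monotonicity of $(1-1/n)^n$. Notably, we do not need the quantum Gilbert--Varshamov bound or Knill--Laflamme for this theorem, since we only claim a $0.25$ bound and rely entirely on the $E = I^{\otimes n}$ event; code distance and non-degeneracy of the random stabilizer code would enter only if one wanted to sharpen the success probability by also crediting weight-$1$ errors that leave the logical content invariant.
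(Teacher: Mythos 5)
Your proposal is correct and follows essentially the same route as the paper: lower-bound the success probability of Algorithm 1 by the probability that the depolarizing channel applies no error, then show this is at least $0.25$ when $p=1/n$. The only cosmetic difference is that the paper bounds the no-error probability as $\left(1-\tfrac{3p}{4}\right)^n$ and applies Bernoulli's inequality, while you use $(1-p)^n$ (which matches the stated Kraus representation) together with monotonicity of $(1-1/n)^n$; both yield the same conclusion.
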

\begin{proof}
Suppose we are given as input an instance of the $\mathsf{LSN}_{n,k,\mathcal{D}}$ problem, i.e., $$
\big(S\in \Stab(n,k),\mathcal{D}_{p}^{\otimes n}(\proj{\overline{\psi_x}}^S) \big) \, \sim \mathsf{LSN}_{n,k,\mathcal{D}}$$
where $S \sim \Stab(n,k)$ describes a random stabilizer code and $x \sim \bit^k$ is a random element. Let $\rho_x^S \leftarrow \mathcal{D}_{p}^{\otimes n}(\proj{\overline{\psi_x}}^S)$ denote the ensemble instance.
Then, the probability that Algorithm $1$ succeeds is trivially bounded below by the probability that the depolarizing noise channel is error-less, and thus
%\alex{This analysis is very lossy. If we have time we should do it properly using Pauli mixing}
\begin{align*}
\underset{x \sim \bit^k}{\mathbb{E}}\mathrm{Tr}\left[(\id_{n-k} \otimes \proj{x}) (U_{\mathrm{Enc}}^S)^\dag \rho_x^S U_{\mathrm{Enc}}^S\right] \geq \left(1-\frac{3p}{4}\right)^n
\geq 1 - n \cdot \frac{3p}{4} \, \geq \, 0.25 \, ,
\end{align*}
where the second inequality follows from Bernoulli's inequality, and the last inequality follows from our assumption that $p=1/n$.
\end{proof}

\begin{algorithm}[t]
\DontPrintSemicolon
\SetAlgoLined
\label{alg:projection-LSN}
\KwIn{Instance $(S\in \Stab(n,k),\mathcal{D}_{p}^{\otimes n}(\proj{\overline{\psi_x}}^S)) \sim \mathsf{LSN}_{n,k,\mathcal{D}_{p}^{\otimes n}}$.}
    
\KwOut{A string $x' \in \bit^k$.}

Let $\rho_x^S \leftarrow \mathcal{D}_{p}^{\otimes n}(\proj{\overline{\psi_x}}^S)$ denote the ensemble instance.

Use the algorithm in \Cref{thm:encode_stab} to find an encoding Clifford $U_{\mathrm{Enc}}^S$ associated with the stabilizer subgroup $S\in \Stab(n,k)$.

Compute $(U_{\mathrm{Enc}}^S)^\dag \rho_x^S U_{\mathrm{Enc}}^S$ and measure in the computational basis.

Output the string $x'$ corresponding to the last $k$ bits of the measurement outcome.

\caption{Projection onto the Stabilizer Codespace}
\end{algorithm}

\subsection{Single-Shot Decoding for Low Constant Noise Rates}

Recall that LSN is the following decoding problem: given as input an instance
$$
\big(S\in \Stab(n,k), E \ket{\overline{\psi_x}}^S \big) \, ,
$$
where $S\sim \Stab(n,k)$ is a random stabilizer code and $E \sim \mathcal{D}_p^{\otimes n}$ is a random $n$-qubit Pauli error $E \in \bar{\mathcal{P}}_n$, and where $x \sim [2^k]$ is a random element. In other words, we have a noisy state discrimination task $\{\rho_x\}_{x \in [2^k]}$ where $\rho_x$ is the mixed state 
\begin{equation}\label{eq:rhox}
\rho_x = \mathcal{D}_{p}^{\otimes n}(\proj{\overline{\psi_x}})= \sum_{E \in \bar{\mathcal{P}}_n} \Pr_{E \sim \mathcal{D}_{p}^{\otimes n}}[E] \cdot E \proj{\overline{\psi_x}}  E^\dag.
\end{equation}
Our algorithm for solving LSN is to implement a \emph{Pretty Good Measurement} (PGM) \cite{barnum2000reversingquantumdynamicsnearoptimal,Montanaro_2007}, but with a twist that enables us to bound its success probability: we will implement an approximation of the PGM that works for {\em truncated depolarizing noise}, noise from which we have culled the highest weight (and thus most destructive) errors. The reason this works is that the PGM is actually the optimal measurement for orthogonal states and discriminates them perfectly. Using the Gilbert-Varshamov bound, we are able to harness the fact that under truncated depolarizing noise, encoded orthogonal states remain approximately orthogonal, and thus the PGM still works well for the task.\footnote{This can be understood in another way: the purpose of a good error-correcting code is to encode quantum information in subspaces that do not contract much under quantum channels (i.e. two initial orthogonal states remain approximately orthogonal under the action of quantum channels)---a random stabilizer code is an example of such a code.}  

%Intuitively, this follows from concentration: random stabilizer codes have good error correcting properties: 1) they are non-degenerate codes (whp) and 2) can correct many errors. Because the depolarizing noise channel of the canonical LSN problem produces low-weight errors (within the QEC threshold for decoding if we use appropriate error parameters), this implies that the noise channel takes (orthogonal) code vectors to (orthogonal) code vectors. Therefore, with high probability over the problem instance, we only need to distinguish between orthogonal states. Here the PGM is optimal and only requires a single sample!
% \alex{Add Watrous SDP intepretation of PGM here? Mention that the PGM is optimal for orthogonal states}

\paragraph{Pretty Good Measurement.} We recall the following result by Montanaro: 
\begin{lemma}[\cite{Montanaro_2007}]\label{lem:PGM-analysis}
Let $\{\rho_x\}$ be an ensemble of quantum states and let $\boldsymbol{\Lambda}_x = \boldsymbol{\Sigma}^{-\frac{1}{2}} \rho_x \boldsymbol{\Sigma}^{-\frac{1}{2}}$ with $\boldsymbol{\Sigma} = \sum_x \rho_x$,  where inverses of $\boldsymbol{\Sigma}$ are taken with respect to its support. Then, the worst-case error of the pretty good measurement ensemble $\{\boldsymbol{\Lambda}_x\}$ is at most
$$
p_{\text{err}} = \max_x \left(1 - \Tr{\boldsymbol{\Lambda}_x \rho_x} \right) \, \leq \, \sum_{x \neq y} \sqrt{\mathsf{F}(\rho_x,\rho_y)}
$$
where $\mathsf{F}(\rho_x,\rho_y)$ denotes the fidelity between $\rho_x$ and $\rho_y$. Moreover, the PGM is optimal if the states in $\{\rho_x\}$ are pair-wise orthogonal, as then $\mathsf{F}(\rho_x,\rho_y) = 0.$
\end{lemma}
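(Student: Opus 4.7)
The plan is to derive a closed-form expression for $\mathrm{Tr}[\boldsymbol{\Lambda}_x \rho_x]$ in terms of a block-operator ``Gram matrix'' built from the $\sqrt{\rho_y}$, and then control the off-diagonal blocks of its square root via the Powers--Stormer inequality. First I would introduce the operator $M := \sum_y \sqrt{\rho_y}\otimes \langle y| : \mathcal{H}\otimes\mathbb{C}^{|X|} \to \mathcal{H}$, noting that $MM^\dagger = \boldsymbol{\Sigma}$ and $\mathbf{G} := M^\dagger M$ is the PSD block-operator matrix with $(x,y)$-block $\sqrt{\rho_x}\sqrt{\rho_y}$. Polar-decomposing $M = U\, \mathbf{G}^{1/2}$, where $U$ is a partial isometry with $U^\dagger U$ equal to the projector onto the support of $\mathbf{G}$, gives $\boldsymbol{\Sigma}^{-1/2} = U\, \mathbf{G}^{-1/2}\, U^\dagger$ on the support of $\boldsymbol{\Sigma}$. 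A short calculation using $U^\dagger M = \mathbf{G}^{1/2}$ then identifies
\[
\sqrt{\rho_x}\,\boldsymbol{\Sigma}^{-1/2}\sqrt{\rho_x} \;=\; (\mathbf{G}^{1/2})_{xx},
\]
from which cyclicity of the trace yields $\mathrm{Tr}[\boldsymbol{\Lambda}_x \rho_x] = \big\|(\mathbf{G}^{1/2})_{xx}\big\|_2^{\,2}$.

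Next I would set $\mathbf{E} := \mathbf{G}^{1/2}$, which is Hermitian with $\mathbf{E}^2 = \mathbf{G}$. Reading off the $(x,x)$-block of $\mathbf{E}^2 = \mathbf{G}$ gives $\rho_x = \mathbf{E}_{xx}^{\,2} + \sum_{y \neq x}\mathbf{E}_{xy}\mathbf{E}_{xy}^\dagger$, and taking traces produces the clean identity
\[
1 - \mathrm{Tr}[\boldsymbol{\Lambda}_x \rho_x] \;=\; \sum_{y \neq x}\big\|\mathbf{E}_{xy}\big\|_2^{\,2}.
\]
To bound the off-diagonal blocks, let $\mathbf{D} := \sum_x \rho_x \otimes |x\rangle\langle x|$; then $\sqrt{\mathbf{D}}$ is block-diagonal with entries $\sqrt{\rho_x}$, so the Powers--Stormer inequality $\|\sqrt{A}-\sqrt{B}\|_2^{\,2} \leq \|A-B\|_1$ applied to the PSD pair $(\mathbf{G},\mathbf{D})$ gives
\[
\sum_{x}\sum_{y\neq x}\big\|\mathbf{E}_{xy}\big\|_2^{\,2} \;\leq\; \|\mathbf{E}-\sqrt{\mathbf{D}}\|_2^{\,2} \;\leq\; \|\mathbf{G}-\mathbf{D}\|_1 \;\leq\; \sum_{x\neq y}\|\sqrt{\rho_x}\sqrt{\rho_y}\|_1 \;=\; \sum_{x \neq y}\sqrt{\mathsf{F}(\rho_x,\rho_y)},
\]
using the defining identity $\mathsf{F}(\rho_x,\rho_y) = \|\sqrt{\rho_x}\sqrt{\rho_y}\|_1^{\,2}$ together with the triangle inequality for the trace norm applied to the off-diagonal decomposition of $\mathbf{G}-\mathbf{D}$. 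Since each term $1 - \mathrm{Tr}[\boldsymbol{\Lambda}_x \rho_x]$ is nonnegative, the maximum over $x$ is bounded by the total sum, which yields the stated bound on $p_{\text{err}}$.

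Finally, the optimality in the pair-wise orthogonal case is immediate from the same formula: if $\rho_x\rho_y=0$ for $x\neq y$, then $\sqrt{\rho_x}\sqrt{\rho_y}=0$, so $\mathbf{G}=\mathbf{D}$ is block-diagonal, $\mathbf{E}_{xy}=0$ for $x\neq y$, and $\mathrm{Tr}[\boldsymbol{\Lambda}_x \rho_x]=\mathrm{Tr}[\rho_x]=1$, so the PGM achieves zero error (and hence is optimal). The main obstacle in this plan is verifying the identification $\sqrt{\rho_x}\,\boldsymbol{\Sigma}^{-1/2}\sqrt{\rho_x} = (\mathbf{G}^{1/2})_{xx}$, which requires careful bookkeeping with the partial isometry $U$ and the support projector of $\boldsymbol{\Sigma}$ (in particular, one must check that $\mathbf{G}^{-1/2}\mathbf{G}^{1/2}$ acts as the identity on the image of $(I\otimes|x\rangle)$). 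Once this is in place, the remaining steps are a crisp application of Powers--Stormer together with the trace-norm triangle inequality.
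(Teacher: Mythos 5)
Your proof is correct. Note that the paper does not actually prove this statement---it is imported verbatim from Montanaro with a citation---so there is no in-paper argument to compare against; what you have written is a complete, self-contained derivation along the standard Barnum--Knill/Montanaro route. The key steps all check out: with $M=\sum_y\sqrt{\rho_y}\otimes\langle y|$ and the polar decomposition $M=U\mathbf{G}^{1/2}$, the identity $\sqrt{\rho_x}\,\boldsymbol{\Sigma}^{-1/2}\sqrt{\rho_x}=(\mathbf{G}^{1/2})_{xx}$ follows from $U^\dagger M=\mathbf{G}^{1/2}$ and $\mathbf{G}^{1/2}\mathbf{G}^{-1/2}\mathbf{G}^{1/2}=\mathbf{G}^{1/2}$ (the ``obstacle'' you flag is handled exactly by this support-projector cancellation, so it is not a gap); cyclicity then gives $\Tr{\boldsymbol{\Lambda}_x\rho_x}=\|(\mathbf{G}^{1/2})_{xx}\|_2^2$, and reading off the diagonal block of $\mathbf{E}^2=\mathbf{G}$ with $\mathbf{E}$ Hermitian yields $1-\Tr{\boldsymbol{\Lambda}_x\rho_x}=\sum_{y\neq x}\|\mathbf{E}_{xy}\|_2^2$, so the maximum over $x$ is indeed dominated by the full double sum, which Powers--St{\o}rmer plus the trace-norm triangle inequality bounds by $\sum_{x\neq y}\|\sqrt{\rho_x}\sqrt{\rho_y}\|_1=\sum_{x\neq y}\sqrt{\mathsf{F}(\rho_x,\rho_y)}$ (consistent with the paper's convention that $\mathsf{F}$ is the squared fidelity). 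The orthogonal case is also handled correctly: orthogonal supports force $\mathbf{G}=\mathbf{D}$, zero off-diagonal blocks, and hence zero error. One could shave the bound slightly (bounding the max by the worst single row sum rather than the full double sum), but as stated your argument establishes exactly the claimed inequality.
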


We're going to use the following block-encoding based algorithm in~\cite{PhysRevLett.128.220502} for implementing the PGM.
Let $\kappa_{\rho}$ denote the reciprocal of the smallest eigenvalue of a density matrix $\rho$. We use the following theorem.

\begin{theorem}[\cite{PhysRevLett.128.220502}]\label{thm:PGM} The PGM measurement channel for $\{\rho_x\}_{x \in  [2^k]}$ can be implemented with error $\epsilon$ (in terms of diamond distance) in time
$$
\widetilde{\mathcal{O}}\left(\sqrt{2^k \kappa_{\bar{\rho}}} N_{\rho} (\kappa_{\bar{\rho}}  +  \min(\kappa_\rho,2^k \cdot \kappa_{\bar{\rho}}/\epsilon^2)\right),
$$
where $\bar{\rho} = 2^{-k}\sum_{x \in [2^k]} \rho_x$ and where $N_{\rho}$ denotes the size of the quantum circuit needed to implement a purification of $\rho_x$.
\end{theorem}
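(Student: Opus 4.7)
The plan is to implement the PGM channel by combining a block-encoding of the ensemble average $\bar{\rho} = 2^{-k}\sum_x \rho_x$ with quantum singular value transformation (QSVT), and then boosting the success amplitude via amplitude amplification. Recall that the PGM operators can be written as $\boldsymbol{\Lambda}_x = \bar{\rho}^{-1/2} (2^{-k}\rho_x) \bar{\rho}^{-1/2}$ (with inverses on the support), so implementing them coherently reduces to (i) coherent state preparation of $\rho_x$, (ii) applying $\bar{\rho}^{-1/2}$ as a block-encoded operator, and (iii) uncomputing / measuring a label register that records $x$.

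First I would turn the purification circuits of size $N_\rho$ into a block-encoding of $\rho_x$ controlled on an $x$-register. By placing a uniform superposition $2^{-k/2}\sum_x \ket{x}$ on an auxiliary register and using the controlled purification, I obtain a block-encoding of $\bar{\rho}$ with an implicit factor of $2^{-k}$ coming from the uniform average. Next, using QSVT to apply an even polynomial approximation to $t \mapsto t^{-1/2}$ on the spectrum of $\bar{\rho}$, I obtain a block-encoding of $\bar{\rho}^{-1/2}$ restricted to its support; the polynomial has degree $\widetilde{O}(\kappa_{\bar{\rho}}\log(1/\epsilon))$ by standard Chebyshev/Remez constructions, giving a cost factor of $\widetilde{O}(\kappa_{\bar{\rho}})$. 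The alternative bound $\min(\kappa_\rho, 2^k\kappa_{\bar{\rho}}/\epsilon^2)$ arises by switching to a polynomial that is accurate on the (possibly larger) support of the individual $\rho_x$, trading the problem-dependent condition number for one controlled by the truncation error.

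Composing the pieces yields a channel that, on input $\rho$, coherently prepares $\ket{x}\otimes \sqrt{\boldsymbol{\Lambda}_x}\,\rho\,\sqrt{\boldsymbol{\Lambda}_x}$-like branches, but the natural amplitude of a successful branch is only $O(2^{-k/2})$ because we averaged over the label. Standard fixed-point amplitude amplification on the ``success'' flag of the block-encoding removes this factor at the cost of a $\sqrt{2^k\kappa_{\bar{\rho}}}$ overhead, which is exactly the $\widetilde{O}(\sqrt{2^k\kappa_{\bar{\rho}}})$ prefactor in the stated runtime. Multiplying this prefactor by the per-call cost $N_\rho\cdot(\kappa_{\bar{\rho}} + \min(\kappa_\rho, 2^k\kappa_{\bar{\rho}}/\epsilon^2))$ of a single block-encoded application of $\bar{\rho}^{-1/2}\rho_x\bar{\rho}^{-1/2}$ recovers the theorem's complexity.

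The main obstacle, as usual in QSVT-based algorithms, is the diamond-norm error analysis: I need to show that an $\epsilon$-accurate polynomial approximation of $t^{-1/2}$ on $[\kappa^{-1},1]$, together with the $O(2^{-k})$ normalization and the subsequent amplification, compounds to at most $\epsilon$ diamond distance of the PGM \emph{channel} (not just of individual operators). This requires carefully handling the kernel of $\bar{\rho}$ (using a projector onto its support so $\boldsymbol{\Lambda}_x$ is well-defined) and tracking how amplitude amplification interacts with a non-unitary block-encoding, so that the $\epsilon^{-2}$ in the $2^k\kappa_{\bar{\rho}}/\epsilon^2$ branch is not avoidable in general but can be replaced by $\kappa_\rho$ whenever the individual states are well-conditioned.
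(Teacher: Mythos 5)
The paper does not prove this statement at all: \Cref{thm:PGM} is an imported result, quoted verbatim from \cite{PhysRevLett.128.220502}, and the authors' ``proof'' consists of the citation. So the relevant comparison is between your sketch and the argument in that reference, not anything in this paper.

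Your outline does match the structure of the cited algorithm at a high level --- purification access to the $\rho_x$ assembled into a block-encoding of $\bar{\rho}$, a QSVT polynomial approximation of $t\mapsto t^{-1/2}$ of degree $\widetilde{O}(\kappa_{\bar{\rho}})$ to realize $\bar{\rho}^{-1/2}$ on its support, and amplitude amplification to remove the subnormalization, which is where the $\sqrt{2^k\kappa_{\bar{\rho}}}$ prefactor comes from. That is the right skeleton, and identifying the $\min(\kappa_\rho, 2^k\kappa_{\bar{\rho}}/\epsilon^2)$ term as a trade-off between conditioning of the individual states and truncation error is also consistent with the source.

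However, as a proof the proposal has a genuine gap, and you name it yourself: the diamond-norm error analysis is deferred rather than carried out. The theorem's content is precisely a quantitative statement about implementing the PGM \emph{channel} to diamond-distance $\epsilon$, so the step where polynomial-approximation error, block-encoding subnormalization, support/kernel issues for $\bar{\rho}^{-1/2}$, and the error amplification incurred by (fixed-point) amplitude amplification are composed into a single $\epsilon$ is not optional bookkeeping --- it is the theorem. In particular, the appearance of $\epsilon^{-2}$ in one branch of the $\min$ and the claim that it can be replaced by $\kappa_\rho$ when the individual states are well-conditioned both live entirely inside that missing analysis. Without it, what you have is an accurate reconstruction of the algorithm's architecture and a correct reading of where each factor in the runtime originates, but not a derivation of the stated bound.
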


In our case, since a purification of $\rho_x$ is 
\begin{equation}
    \sum_{E \in \bar{\mathcal{P}}_n} \sqrt{\Pr_{E_a \sim \mathcal{D}_{p}^{\otimes n}}[E_a]} \ket{a} \otimes E_a \UES (\ket{0^{n-k}}\otimes \ket{x}),
\end{equation}
we may prepare this purification simply by applying $\UES$ on a coherent superposition. The number of gates required to implement an $n$-qubit Clifford is $N_{\rho} = O(n^2)$.
We show the following theorem.
%\url{https://math.stackexchange.com/questions/4691465/bound-on-minimum-perturbation-of-eigenvalues-based-on-condition-number} may be helpful here

\begin{theorem}[Single-Shot Decoding for Low Constant Noise Rates\label{thm:singleshotdecoding}] Let $n,k \in \mathbb{N}$ and $\epsilon \in (0,1)$. Let $\mathcal{D}_p^{\otimes n}$ be the $n$-qubit depolarizing channel, for some $p\in (0,1/2)$ such that
\begin{equation}
    H(3p)+3\log_2(3) p + k/n< .99 -\frac{\log(1/\epsilon)}{n}.
\end{equation}
Then, there exists a quantum algorithm which runs in time $$
\widetilde{\mathcal{O}}\left(n^2\sqrt{2^k \kappa_{\bar{\rho}}}(\kappa_{\bar{\rho}}  +  \min(\kappa_\rho,2^k \cdot \kappa_{\bar{\rho}}/\epsilon^2)\right).
$$
and solves the $\mathsf{LSN}_{n,k,\mathcal{D}_p^{\otimes n}}$ problem with probability at least $1 - O(\epsilon).$
\end{theorem}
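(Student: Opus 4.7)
The plan is to implement a block-encoded approximation of the \emph{Pretty Good Measurement} (PGM) on the ensemble $\{\rho_x\}_{x \in \bit^k}$ from \Cref{eq:rhox} and to argue that it succeeds with high probability by comparison to a nearby \emph{orthogonal} ensemble on which the PGM is optimal. More concretely, I would truncate the local depolarizing noise to errors of low weight, use the quantum Gilbert--Varshamov bound together with the Knill--Laflamme conditions to show that the truncated codewords remain pairwise orthogonal for a typical random stabilizer code, invoke Montanaro's bound (\Cref{lem:PGM-analysis}) on the orthogonal ensemble, and finally realize the measurement on the physical state via the block-encoded PGM runtime of \Cref{thm:PGM}.

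For the truncation step, set $t = \tfrac{3}{2} n p$ and define
\[
\tilde{\rho}_x \;=\; \frac{1}{Z_t} \sum_{E \in \bar{\mathcal{P}}_n \,:\, |E| \le t} \Pr_{E \sim \mathcal{D}_p^{\otimes n}}[E]\; E \kb{\overline{\psi_x}} E^\dagger, \qquad Z_t = \Pr_{E \sim \mathcal{D}_p^{\otimes n}}[|E| \le t].
\]
A multiplicative Chernoff bound on the binomial weight distribution (exactly as in the proof of \Cref{lem:fidelity-lemma}) gives $1 - Z_t \le e^{-np/24}$, which combined with the Gentle Measurement Lemma yields $\delta_{\mathsf{TD}}(\rho_x, \tilde{\rho}_x) \le e^{-np/24}$ uniformly in $x$. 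Applying \Cref{thm:GVbound} with $d = 2t + 1 = 3np + 1$, the random stabilizer $S \sim \Stab(n,k)$ is non-degenerate with distance at least $d$ except with probability at most $\delta_{\mathsf{GV}} \le 3np \cdot 2^{nH(3p)} \cdot 3^{3np} \cdot 2^{-(n-k)}$. Conditioned on this event, the argument in \Cref{eq:orthogonality} gives $\bra{\overline{\psi_x}} E_a^\dagger E_b \ket{\overline{\psi_y}} = 0$ for all $x \ne y$ and all $|E_a|, |E_b| \le t$, so $\mathrm{F}(\tilde{\rho}_x, \tilde{\rho}_y) = 0$, i.e.\ $\{\tilde\rho_x\}$ is an orthogonal ensemble.

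Once orthogonality is established, \Cref{lem:PGM-analysis} implies that the exact PGM for $\{\tilde\rho_x\}$ has worst-case error $0$. To realize (an approximation of) this measurement on the physical state, I would invoke \Cref{thm:PGM} to implement the PGM for the true ensemble $\{\rho_x\}$ with diamond-norm precision $\epsilon$; a purification of $\rho_x$ is produced by applying $U_{\mathrm{Enc}}^S$ (of size $N_\rho = O(n^2)$ by \Cref{thm:encode_stab}) to a coherent superposition over Pauli errors indexed by an ancilla, which yields exactly the claimed runtime. Collecting the three sources of error --- the GV failure probability $\delta_{\mathsf{GV}}$, the truncation error $e^{-np/24}$ (incurred when the idealized PGM on the orthogonal ensemble is applied to the true state, via trace-distance monotonicity), and the block-encoded implementation error $\epsilon$ --- one obtains overall success probability at least $1 - O(\epsilon)$ whenever each term is $O(\epsilon)$. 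Taking logarithms of the GV bound and dividing by $n$, the condition $\delta_{\mathsf{GV}} \le \epsilon$ becomes $H(3p) + 3\log_2(3)\,p + k/n \le 1 - \frac{\log(1/\epsilon)}{n} - \frac{\log(3np)}{n}$, and the hypothesis in the theorem is tailored precisely to absorb the $\tfrac{\log(3np)}{n}$ polynomial prefactor into the slack $1 - 0.99 = 0.01$; the Chernoff term $e^{-np/24}$ is automatically dominated by $\epsilon$ for constant $p$ and large enough $n$.

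The main obstacle I anticipate is conceptual rather than computational: the exact PGM for $\{\rho_x\}$ and the exact PGM for $\{\tilde\rho_x\}$ are \emph{different} POVMs, each depending on its own ensemble, so one cannot directly transfer Montanaro's zero-error guarantee for the truncated ensemble to the true state. The cleanest fix is to run the exact PGM of the orthogonal ensemble on the actual state, pay $\delta_{\mathsf{TD}}(\rho_x, \tilde\rho_x) \le e^{-np/24}$ via trace-distance monotonicity of POVM outcome distributions, and separately argue (using \Cref{thm:PGM}) that the block-encoded procedure approximates this idealized measurement channel in diamond norm. A secondary subtlety is verifying that the condition numbers $\kappa_\rho, \kappa_{\bar\rho}$ in the runtime remain well-behaved under this comparison; since the statement of the theorem leaves them symbolic, this does not affect the bound itself but must be kept honest in the final bookkeeping.
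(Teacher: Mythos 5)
Your proposal follows essentially the same route as the paper's proof: truncate the depolarizing noise to weight $\le \tfrac{3}{2}np$, use the quantum Gilbert--Varshamov bound and Knill--Laflamme conditions to make the truncated ensemble orthogonal, apply Montanaro's PGM bound there, run that PGM on the true state paying only the (exponentially small) truncation error, and charge the diamond-norm implementation error from \Cref{thm:PGM}; the "subtlety" you flag is resolved exactly as in the paper, whose Algorithm 2 measures $\rho_x^S$ with the POVM built from the truncated states. The only cosmetic difference is that you bound $\delta_{\mathsf{TD}}(\rho_x,\tilde\rho_x)$ via gentle measurement on a purification, whereas the paper uses strong convexity of the trace distance on the Pauli mixture directly; both give the same exponentially small term.
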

\begin{proof}
Suppose we are given as input an instance of the $\mathsf{LSN}_{n,k,\mathcal{D}_p^{\otimes n}}$ problem, i.e., $$
\big(S\in \Stab(n,k),\mathcal{D}_{p}^{\otimes n}(\proj{\overline{\psi_x}}^S) \big) \, \sim \mathsf{LSN}_{n,k,\mathcal{D}_p^{\otimes n}}$$
where $S \sim \Stab(n,k)$ describes a random stabilizer code and $x \sim \bit^k$ is a random element.
We now show that running Algorithm $2$ on input $(S,\mathcal{D}_{p}^{\otimes n}(\proj{\overline{\psi_x}}^S))$ and parameter $\epsilon \in (0,1)$ yields $x$ with the desired success probability.

Algorithm $2$, in fact, implements the PGM with respect to a slightly different ensemble as compared to the true ensemble of problem instances. That is to say, it will suffice to implement the pretty good measurement with respect to the ensemble $\{\tilde{\rho}_x^S\}_{x\in \{0,1\}^k}$ where
\begin{equation}\label{eq:truncated_states_PGM}
\tilde{\rho}_x^S := \tilde{\mathcal{D}}^{(\frac{3np}{2})}(\proj{\overline{\psi_x}}^S)= \sum_{E \in \bar{\mathcal{P}}_n} \tilde{\vec p}^{(\frac{3n}{2})}(E) \cdot E \proj{\overline{\psi_x}}^S  E^\dag,
\end{equation}
where $\tilde{\mathcal{D}}^{(\frac{3np}{2})}$ is the truncated depolarizing noise channel to be defined shortly; while we remind readers that the true ensemble of problem instances is 
\begin{equation}\label{eq:nontruncated_states_PGM}
\rho_x^S := \tilde{\mathcal{D}}^{(\frac{3np}{2})}(\proj{\overline{\psi_x}}^S)= \sum_{E \in \bar{\mathcal{P}}_n} \tilde{\vec p}^{(n)}(E) \cdot E \proj{\overline{\psi_x}}^S  E^\dag.
\end{equation}

We now define the {\em truncated depolarizing noise channel} as the channel which acts on an input state $\rho$ as
\begin{equation}
\tilde{\mathcal{D}}^{(w)}(\rho) := \sum_{E\in \mathcal{P}_n: |E|\leq w} \tilde{\vec p}^{(w)}(E) E\rho E^{\dagger}
\end{equation}
where $p \in (0,3/4)$ denotes the depolarizing noise parameter (c.f. \Cref{eq:nqubitdep}), and the truncation lies in the fact that we restrict the support to Paulis with bounded weight. We define the truncated probability distribution via
\begin{equation}
    \tilde{\vec p}^{(w)}(E) := \frac{1}{N}\left(\frac{p}{3}\right)^{|E|} \left(1-p\right)^{n-|E|}
\end{equation}
and $N$ is a normalization factor, i.e. $N = \sum_{E\in \bar{\calP}_n: |E|\leq w} \tilde{\vec p}^{(w)}(E),$ that ensures that $\tilde{\vec p}^{(w)}$ is a probability distribution over $\bar{\calP}_n$. The distribution corresponding to the $n$-qubit local depolarizing noise channel corresponds to $w=n,$ i.e. $\tilde{\vec p}^{(n)}(E) = \Pr_{E \sim \mathcal{D}_{p}^{\otimes n}}[E]$ and the truncated channel corresponds to acting only with the Paulis with weight at most $w$, with the same relative probabilities as in $n$-qubit local depolarizing noise. Because the weights of the Pauli channels in the decomposition of $\mathcal{D}_p^{\otimes n}$ are distributed as a binomial $w \sim \mathsf{Binom}(n,p)$, one can show that for $w=3np/2$, the total variation distance between the probability distribution over the $n$-qubit Pauli channels induced by $\mathcal{D}_p^{\otimes n}$ and $\tilde{\mathcal{D}}^{(w)}$ is
\begin{equation}\label{eq:TV}
\delta_{\mathsf{TV}}\left(\tilde{\vec p}^{(3np/2)}, \tilde{\vec p}^{(n)}\right) \leq \Pr_{E \sim \mathcal{D}_{p}^{\otimes n} }\left[|E| \geq \frac{3}{2}np\right] \leq \exp\left(-\frac{np}{12} \right),
\end{equation}
using a Chernoff bound.
    
\begin{algorithm}[t]
\DontPrintSemicolon
\SetAlgoLined
\label{alg:PGM-LSN}
\KwIn{Instance $(S\in \Stab(n,k),\mathcal{D}_{p}^{\otimes n}(\proj{\overline{\psi_x}}^S)) \sim \mathsf{LSN}_{n,k,\mathcal{D}_{p}^{\otimes n}}$ and $\epsilon \in (0,1)$.}
    
\KwOut{A string $x' \in \bit^k$.}

Let $\rho_x^S \leftarrow \mathcal{D}_{p}^{\otimes n}(\proj{\overline{\psi_x}}^S)$ denote the ensemble instance.

Use the algorithm in \Cref{thm:PGM} with precision $\epsilon$ to measure $\rho_x^S$ via the $\mathsf{POVM}$
$$
\{\tilde{\boldsymbol{\Lambda}}_x^S\}_{x \in \bit^k} \quad \text{ with } \quad
\tilde{\boldsymbol{\Lambda}}_x^S = \boldsymbol{\Sigma}^{-\frac{1}{2}} \tilde{\rho}_x^S \, \boldsymbol{\Sigma}^{-\frac{1}{2}}
$$
where $\boldsymbol{\Sigma} = \sum_x \tilde{\rho}_x^S$ and where inverses of $\boldsymbol{\Sigma}$ are taken with respect to its support, and where the state $\tilde{\rho}_x^S$ is defined in \Cref{eq:truncated_states_PGM}.

Output the measurement outcome $x' \in \bit^k$.

\caption{Pretty Good Measurement for $\mathsf{LSN}$}
\end{algorithm}

Algorithm $2$ implements the PGM measurement channel with (diamond distance) error $\epsilon \in (0,1)$ with the stated time complexity. Let us first analyze the error probability of the (ideal) pretty good measurement $\{\tilde{\boldsymbol{\Lambda}}_x^S\}_{x}$:
\begin{align}
    \max_{x} \left(1 - \operatorname{tr}(\widetilde{\boldsymbol{\Lambda}}_x^S \rho_x^S)\right)  &= \max_x\left(1-\operatorname{tr}\left(\widetilde{\boldsymbol{\Lambda}}_x\left(\rho_x^S-\tilde{\rho}_x^S\right)\right)-\operatorname{tr}\left(\widetilde{\boldsymbol{\Lambda}}_x \tilde{\rho}_x^S\right)\right)\\
    &\leq \max_x \left(1+ \delta_{\mathsf{TD}}\left(\rho_x^S,\tilde{\rho}_x^S\right)-\operatorname{tr}\left(\widetilde{\boldsymbol{\Lambda}}_x \tilde{\rho}_x^S\right)\right)\\
    &\leq \delta_{\mathsf{TV}}\left(\tilde{\vec p}^{(n)}, \tilde{\vec p}^{(w)}\right) + \max_x \left(1-\operatorname{tr}\left(\widetilde{\boldsymbol{\Lambda}}_x \tilde{\rho}_x^S\right)\right)\\
    &=e^{-\frac{np}{12}} + \max_x \left(1-\operatorname{tr}\left(\widetilde{\boldsymbol{\Lambda}}_x \tilde{\rho}_x^S\right)\right)\\
    &\leq e^{-\frac{np}{12}} + \sum_{x \neq y} \sqrt{\mathsf{F}(\tilde{\mathcal{D}}^{(\frac{3np}{2})}(\proj{\overline{\psi_x}}^S),\tilde{\mathcal{D}}^{(\frac{3np}{2})}(\proj{\overline{\psi_y}}^S))}.\label{eq:fidelities}
\end{align}
The second inequality comes from the strong convexity of trace distance and \Cref{eq:truncated_states_PGM,eq:nontruncated_states_PGM}. The second last equality comes from Eq. \ref{eq:TV}. The last inequality comes from Lemma \ref{lem:PGM-analysis}.

To finish the proof, it suffices to bound the pair-wise fidelities in \Cref{eq:fidelities}. Here, we exploit the special structure of the encoded states. We appeal to the Gilbert-Varshamov bound (Theorem \ref{thm:GVbound}), which states that random stabilizer codes are non-degenerate (i.e. have good distance) with high probability. This means that errors of weight at most $\frac{3}{2}np$ acting on orthogonal states keep them orthogonal. Concretely, this implies that
\begin{align}
&\underset{S \sim \Stab(n,k)}{\Pr}\left[ E_a^\dag E_b \notin N(S), \, \forall \, |E_a|,|E_b| \leq \frac{3}{2}np \right] \\
&= \underset{S \sim \Stab(n,k)}{\Pr}\left[ \bra{\overline{\psi_x}}^S E_a^\dag E_b \ket{\overline{\psi_y}}^S = 0 \,\forall x\neq y, \forall |E_a|,|E_b| \leq \frac{3}{2}np \right] \\
&\geq 1- 3/2 np \cdot 2^{n H(3/2p) } \cdot 3^{3/2np} \cdot 2^{-n+k}.
\end{align}
The first equality follows from the observations mentioned above (Theorem \ref{thm:GVbound}). Moreover, it follows from Uhlmann's theorem that
\begin{align}
 &\mathsf{F}(\tilde{\mathcal{D}}^{(\frac{3np}{2})}(\proj{\overline{\psi_x}}^S),\tilde{\mathcal{D}}^{(\frac{3np}{2})}(\proj{\overline{\psi_y}}^S))  \\
 &= \max_U \left|\bra{\phi^{\rho_x}} (U \otimes I) \ket{\phi^{\rho_y}}\right|^2 \\
 &= \max_U \left| \sum_{E_a, E_b\in \bar{\calP}_n} \sqrt{\tilde{\vec p}^{(\frac{3np}{2})}(E_a)\tilde{\vec p}^{(\frac{3np}{2})}(E_b)} \bra{a}U\ket{b} \cdot \bra{\overline{\psi_x}} E_a^\dag E_b \ket{\overline{\psi_y}} \right| ^2 \label{eq:2}
\end{align}
where, for $x \in \bit^k$, we defined the purification
$$
\ket{\phi^{\rho_x}} = \sum_{E_a\in \bar{\calP}_n} \sqrt{\tilde{\vec p}^{(\frac{3np}{2})}(E_a)} \, \ket{a} \otimes E_a \ket{\overline{\psi_x}}^S.
$$
Note that the superposition ranges over Pauli errors of weight at most $\frac{3}{2}np$ as we are using the truncated distribution over Paulis. Therefore, conditioned on the event that the stabilizer code $S$ is non-degenerate and has distance at least $d = 3np+1$, the Knill-Laflamme error correction conditions imply that $\bra{\overline{\psi_x}} E_a^\dag E_b \ket{\overline{\psi_y}}=0$, 
for any pair of codewords with $x,y \in \bit^k$, and thus
\begin{equation}
    \mathsf{F}(\tilde{\mathcal{D}}^{(\frac{3np}{2})}(\proj{\overline{\psi_x}}^S),\tilde{\mathcal{D}}^{(\frac{3np}{2})}(\proj{\overline{\psi_y}}^S)) =0.
\end{equation} 
Further conditioning on the event that the implementation of the PGM succeeded, the probability of error due to the PGM mis-identifying the state (\Cref{eq:tobound}) is $e^{-{np}/12}$.

We can now put everything together to compute the final success probability of Algorithm $2$, union bounding over all the three error sources:
\begin{enumerate}
    \item $S$ is degenerate.
    \item Diamond-distance approximation error of implementing the PGM channel.
    \item PGM measurement mis-identifies $x$.
\end{enumerate}
We get that Algorithm $2$ successfully outputs the correct $x$ with probability at least $1-\delta$, where 
\begin{equation}
\delta \,\leq \, 3 np \cdot 2^{n H(3p) } \cdot 3^{3np} \cdot 2^{-n+k} + \epsilon + e^{-np/12}.
\end{equation}

This algorithm thus succeeds with constant probability for any noise rate $p$ such that the term 
\begin{equation}
    3 np \cdot 2^{n H(3p) } \cdot 3^{3np} \cdot 2^{-n+k} = \text{exp}[\log(3np)+ n (H(3p) + 3\log(3) p + k/n - 1)] 
\end{equation}
does not blow up. Noting that $k/n = O(1)$ in the definition of $\lsn$, we can then check that as long as $p$ is a constant that satisfies
\begin{equation}
    H(3p)+3\log_2(3) p + k/n< .99 -\frac{\log(1/\epsilon)}{n},
\end{equation}
this term vanishes exponentially in $n$ and the total probability of error is $O(\epsilon)$.
\end{proof}
\subsection{Multi-Shot Decoding Up to a Threshold}

By taking more samples, we can slightly increase the noise rate at which decoding is still possible, but this only works up to a certain noise threshold that we also compute. 

The algorithm is to run a modified PGM on a larger state space defined by the tensor product of the state spaces of all the samples. Note that we cannot appeal to \Cref{lem:MSLSNtoLSN} to argue that the algorithm and proof in the previous section carries over with the trivial replacement $n\leftarrow mn$, because the $\lsn$ problem on the larger state space corresponding to the $\mslsn$ problem does not have the same distribution over stabilizers as in the canonical $\lsn$. 

% Before we state this theorem, we will need to prove a helper lemma that will be useful in analyzing the success probability of the modified PGM. 

Let the support of every sample define a {\em block}. For $P\in \calP_{mn}$, let the {\em block support} of $P$ (denoted $\mathsf{BSupp}(P)$) be the number of blocks on which $P$ has at least one non-identity Pauli. We will call an error $E\in \calP_{mn}$ {\em typical} if it has full block support (i.e. $\mathsf{BSupp}(E) = m$) and on each block its weight is between $[1,3/2np]$. Let us call the set of all typical errors $\calP_{typical}$. 
\begin{theorem}[Multi-shot decoding at higher noise\label{thm:multishotdecoding}]Let $n,k \in \mathbb{N}$ and $\epsilon \in (0,1)$. Let $\mathcal{D}_p^{\otimes n}$ be the $n$-qubit depolarizing channel, for some $p\in (0,1/2)$ such that
\begin{equation}
    H(3p)+3\log_2(3) p + k/n< .99 -\frac{\log(1/\epsilon)}{mn}.
\end{equation} 
Then, there exists a quantum algorithm which runs in time 
$$
 \widetilde{\mathcal{O}}\left(n^2 m^2\sqrt{2^{mk}\kappa_{\bar{\rho}}}(\kappa_{\bar{\rho}}  +  \min(\kappa_\rho,2^{mk} \cdot \kappa_{\bar{\rho}}/\epsilon^2)\right)
 $$
and solves the $\mathsf{MSLSN}_{n,k,m,\mathcal{D}_p^{\otimes n}}$ problem (equivalently the $\lsn_{n,k,\mathcal{D}_p^{\otimes n}}$ problem with $m$ samples) with probability at least $1 - O(\epsilon).$
\end{theorem}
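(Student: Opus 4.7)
The plan is to carry out the PGM-based strategy of \Cref{thm:singleshotdecoding} on the tensor-product ensemble $\{\rho_x\}_{x \in \bit^k}$ on $mn$ qubits, where
$$
\rho_x \;=\; \bigotimes_{i=1}^m \rho_x^{S_i}
$$
and each $\rho_x^{S_i}$ is the single-block noisy codeword of \Cref{eq:rhox}. Exactly as in the single-shot case, I would replace each block's noise by the truncated channel $\tilde{\mathcal{D}}^{(3np/2)}$ keeping only Pauli errors of per-block weight at most $\tfrac{3}{2}np$, and write $\tilde{\rho}_x = \bigotimes_i \tilde{\rho}_x^{S_i}$ for the resulting truncated ensemble. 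A Chernoff bound together with a union bound over the $m$ blocks shows that the total variation distance between the actual and truncated noise distributions is at most $m \cdot e^{-np/12}$.

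The main technical step is to bound the Montanaro PGM error $\sum_{x \neq y} \sqrt{\mathsf{F}(\tilde{\rho}_x,\tilde{\rho}_y)}$ from \Cref{lem:PGM-analysis}. Because fidelity is multiplicative under tensor products and each block is independently encoded and independently noised,
$$
\mathsf{F}(\tilde{\rho}_x,\tilde{\rho}_y) \;=\; \prod_{i=1}^m \mathsf{F}(\tilde{\rho}_x^{S_i}, \tilde{\rho}_y^{S_i}).
$$
Repeating the Uhlmann-style calculation from \Cref{eq:2} block-wise, each per-block fidelity vanishes whenever $S_i$ is non-degenerate against errors of weight $\tfrac{3}{2}np$ (by the Knill--Laflamme conditions, \Cref{thm:KLconditions}). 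Hence the product vanishes as soon as \emph{at least one} block is non-degenerate, not all of them. Since $S_1,\dots,S_m$ are drawn independently, the probability that every block simultaneously fails the Gilbert--Varshamov condition is $\delta_{\mathrm{GV}}^m$, where $\delta_{\mathrm{GV}} = 3np \cdot 2^{nH(3p)} \cdot 3^{3np} \cdot 2^{-n+k}$ is the single-block failure bound of \Cref{thm:GVbound}. This is the key source of the improved threshold: instead of a multiplicative loss of $m$ (which a naive union bound would yield), block independence gives an exponential improvement, and this is precisely what allows the tolerated noise rate to grow with the number of samples.

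Setting $\delta_{\mathrm{GV}}^m \leq \epsilon$ and rearranging gives
$$
mn \left( 1 - H(3p) - 3\log_2(3)p - k/n \right) \;\geq\; \log(1/\epsilon) + m\log(3np),
$$
which---after absorbing the logarithmic overhead into the slack constant $0.99$---is exactly the stated threshold $H(3p) + 3\log_2(3)p + k/n < 0.99 - \log(1/\epsilon)/(mn)$. The runtime bound then follows from \Cref{thm:PGM} applied to this ensemble: via \Cref{lem:MSLSNtoLSN} we view $\mslsn$ as an $\lsn$ instance with $2^{mk}$ outcomes on $mn$ qubits, each purification requires $m$ parallel $n$-qubit Clifford encodings plus the routing permutation, giving $N_\rho = O(m^2 n^2)$ gates. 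The total success probability is then at least $1 - O(\epsilon)$ after union-bounding three sources of error: the $\delta_{\mathrm{GV}}^m$ event that every block is degenerate, the $m \cdot e^{-np/12}$ truncation mismatch, and the $\epsilon$ diamond-distance PGM implementation error.

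The main obstacle is that the encoding Clifford produced by the reduction in \Cref{lem:MSLSNtoLSN} is \emph{not} uniformly random in $\Stab(mn,mk)$---it is a tensor product of $m$ independent random $n$-qubit Cliffords conjugated by a fixed permutation---so one cannot naively invoke Gilbert--Varshamov on the full $mn$-qubit code. Resolving this by applying GV per block and then exploiting block-independence of fidelity to avoid the naive $\log m$ union-bound penalty is the heart of the argument; the remaining bookkeeping mirrors the analogous steps in \Cref{thm:singleshotdecoding}.
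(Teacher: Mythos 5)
Your proposal is correct and follows essentially the same route as the paper's proof: per-block truncation of the depolarizing noise with a $m\cdot e^{-np/12}$ total-variation penalty, multiplicativity of fidelity across the tensor-product blocks so that a single non-degenerate (``good'') block kills every cross-fidelity term, and hence a $(\delta_{\mathrm{GV}})^m$ rather than $m\,\delta_{\mathrm{GV}}$ failure probability, which is exactly where the improved threshold comes from. The paper likewise flags that one cannot simply invoke \Cref{lem:MSLSNtoLSN} with $n \leftarrow mn$ because the product-of-Cliffords code is not uniform in $\Stab(mn,mk)$, and resolves it block-wise just as you do.
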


\begin{proof}
 Suppose we are given as input an instance of the $\mathsf{MSLSN}_{n,k,\mathcal{D}_p^{\otimes n},m}$ problem, i.e., $$
\big(\otimes_{i=1}^m S_i\in \Stab(n,k)^{\otimes m},\mathcal{D}_{p}^{\otimes mn}(\otimes_{i=1}^m \proj{\overline{\psi_x}}^{S_i}) \big) $$ where each $S_i \sim \Stab(n,k)$ describes a random stabilizer code and $x \sim [2^k]$ is a random element. 

For a given MSLSN instance, define $T=\otimes_{i=1}^m S_i \in \Stab(mn,mk)$. Let us denote the true ensemble instance by
\begin{equation}
     \rho_x^T \leftarrow \mathcal{D}_{p}^{\otimes mn}(\otimes_{i=1}^m \proj{\overline{\psi_x}}^{S_i}). 
\end{equation}
We will run the PGM $\{(\tilde{\mathbf{\Lambda}}_x^T)\}_{x\in \{0,1\}^k}$ which is the PGM relative to the modified ensemble $\{(\tilde{\rho}_x^T)\}_{x\in \{0,1\}^k}$ where
\begin{equation}\label{eq:truncated_states_PGM_2}
\tilde{\rho}_x^T := \mathcal{D}^{typical}(\otimes_{i=1}^m \proj{\overline{\psi_x}}^{S_i})
%= \sum_{E \in \calP^{typical}} p^{typical}(E) \cdot E \otimes_{i=1}^m \proj{\overline{\psi_x}}^{S_i}  E^\dag.
\end{equation}

%Similar to before, the algorithm is to run an algorithm for a modified PGM relative to states with {\em truncated} noise, where the PGM acts on the joint system of all $m$ samples. 
That is, we approximate the true noise channel $\mathcal{D}_{p}^{\otimes mn}$ by a channel $\mathcal{D}^{typical}$ that still has tensor product structure across the blocks:
\begin{equation}
\mathcal{D}^{typical}(\rho) := \sum_{P\in \mathcal{P}_{typical}} p^{typical}(P) P\rho P^{\dagger} = \tilde{\mathcal{D}}^{(3/2np)\otimes m}(\rho).
\end{equation}
This is precisely the channel on the Hilbert space of $nm$ qubits where the truncated noise channel for the single-shot problem, $\tilde{\mathcal{D}}^{(3/2np)}$, is applied on each $n$-qubit block. 
% where $p$ is the depolarizing noise parameter, the truncated probability distribution $p^{typical}:\calP^{typical} \rightarrow [0,1]$
% \begin{equation}
%     p^{typical}(P) := \frac{1}{N^{typical}}\left(\frac{p}{4}\right)^{|P|} \left(1-\frac{3p}{4}\right)^{n-|P|} = \prod_{i=1}^m \tilde{p}^{(w)}(P_i)
% \end{equation}
%and $N^{typical}$ is the relevant normalization factor, i.e. $N^{typical} = \sum_{P\in \mathcal{P}_{typical}} p^{typical}(P)$.  
As before, this is a good approximation of the distribution over the corresponding Pauli channels: 
\begin{equation}\label{eq:TV_2}
\left\lVert p^{typical} - \tilde{p}^{(mn)}\right\rVert_1 \leq m \left\lVert\tilde{p}^{(3/2np)} - \tilde{p}^{(n)}\right\rVert_1 \leq m\exp\left(-\frac{np}{12} \right),
\end{equation}
where the first inequality follows from the fact that both $\tilde{p}^{(mn)}$ and $p^{typical}$ are product distributions, so we may use subadditivity of total variation distance over product distributions. 

Next, we explicitly analyze the error probability of the (ideal) pretty good measurement given by $\{(\tilde{\mathbf{\Lambda}}_x^T)\}_{x\in \{0,1\}^k}$. Specifically, we find that
\begin{align}
    &\max_{x\in \{0,1\}^k} 1 - \operatorname{tr}(\tilde{\mathbf{\Lambda}}_x^T \rho_x^T) \\
    &\leq me^{-\frac{np}{12}} + \max_x \left(1-\operatorname{tr}\left(\widetilde{\boldsymbol{\Lambda}}_x ^T\tilde{\rho}_x^T\right)\right)\\
    &\leq me^{-\frac{np}{12}} + \sum_{x \neq y} \sqrt{\mathsf{F}(\mathcal{D}^{typical}(\otimes_{i=1}^m \proj{\overline{\psi_x}}^{S_i}),\mathcal{D}^{typical}(\otimes_{i=1}^m \proj{\overline{\psi_y}}^{S_i})}\\
    &\leq me^{-\frac{np}{12}} + \sum_{x \neq y} \sqrt{\prod_{i=1}^m \mathsf{F}(\tilde{\mathcal{D}}^{(\frac{3np}{2})}(\proj{\overline{\psi_x}}^{S_i}),\tilde{\mathcal{D}}^{(\frac{3np}{2})}( \proj{\overline{\psi_y}}^{S_i})}\label{eq:tobound}
\end{align}
where we have used the fact that fidelity is multiplicative across tensor product: $$F\left(\rho_1 \otimes \rho_2, \sigma_1 \otimes \sigma_2\right)=F\left(\rho_1, \sigma_1\right) F\left(\rho_2, \sigma_2\right).$$
%Lemma \ref{lem:multishotGV} says that taking the tensor product of $m$ random stabilizer codes increases the distance of the effective code exponentially in $m$. This means, with probability $1-\left(\frac{3np \cdot 2^{n H(3p) } \cdot 3^{3np}}{2^{n-k}}\right)^m$ over $T\sim \Stab(mn,mk)$, the pairwise fidelities all vanish:
Call $S_i$ ``good" if has the property that 
\begin{equation}
    \mathsf{F}(\tilde{\mathcal{D}}^{(\frac{3np}{2})}(\proj{\overline{\psi_x}}^{S_i}),\tilde{\mathcal{D}}^{(\frac{3np}{2})}( \proj{\overline{\psi_y}}^{S_i})=0, \quad \forall x,y.
\end{equation}
Note that as long as at least one $S_i$ is ``good", the entire term
\begin{equation}
\sum_{x \neq y} \sqrt{\prod_{i=1}^m \mathsf{F}(\tilde{\mathcal{D}}^{(\frac{3np}{2})}(\proj{\overline{\psi_x}}^{S_i}),\tilde{\mathcal{D}}^{(\frac{3np}{2})}( \proj{\overline{\psi_y}}^{S_i})}
\end{equation}
vanishes, and 
\begin{equation}
    \max_{x\in \{0,1\}^k} 1 - \operatorname{tr}(\tilde{\mathbf{\Lambda}}_x^T \rho_x^T)  \leq me^{-\frac{np}{12}}.
\end{equation}
This fortuitous event happens with probability 
\begin{equation}
    1- P(\text{all $S_i$ are bad}) = 1- (3/2 np \cdot 2^{n H(3/2p) } \cdot 3^{3/2np} \cdot 2^{-n+k})^m.
\end{equation}
Finally, accounting for the same three error sources as in the proof of Theorem \ref{thm:PGM}, the total probability of failure of the whole algorithm is
\begin{equation}
\delta \,\leq \, \left(\frac{3np \cdot 2^{n H(3p) } \cdot 3^{3np}}{2^{n-k}}\right)^m + \epsilon + me^{-np/12}.
\end{equation}
So as long as 
\begin{equation}
    H(3p)+3\log_2(3) p + k/n< .99 -\frac{\log(1/\epsilon)}{mn},
\end{equation}
and $m$ is polynomial in $n$, the total probability of error is $O(\epsilon)$.
\end{proof}
While the above analysis shows that asking for more samples in $\mslsn$ will increase the noise rate $p$ at which the PGM still recovers the secret, there is a threshold level of noise at which taking more samples can never help:
\begin{remark}[Noise threshold for PGM]
Decoding, or error recovery, becomes even information-theoretically intractable at some threshold noise rate. Our PGM-based algorithms fail at noise rate $p$ whenever
\begin{equation}
    H(3p)+3\log_2(3) p + k/n > 1.
\end{equation}
%By going to the dual codespace, we are also able to prove that this threshold is close to optimal, although we omit the proof for brevity.
\end{remark}

\section{Worst-Case to Average-Case Reductions}\label{sec:worst-avg-reduction}

In this section, we provide further evidence for the hardness of $\lsn$ by showing a worst-to-average-case reduction for a variant of the problem. Let us intuitively explain the importance of this. By definition, the worst-case instance within a problem class is harder to solve than all other instances. However, a worst-to-average-case reduction states that an algorithm that succeeds with high probability over a uniformly chosen instance within that class (i.e., succeeds in the average case) would also suffice to solve the worst-case instance within that class. The upshot is that ``most" instances within that class are hard. 

In complexity theory, such reductions are regarded as critical pieces of evidence that a problem is indeed as hard as conjectured. While a worst-to-average-case reduction for $\mathsf{LWE}$ was already identified in \cite{Regev05}, $\lpn$ resisted similar attempts until the work of Brakerski, Lyubashevsky, Vaikuntanathan and Wichs~\cite{BLVW} in 2018. 

%That work showed that the worst-case hardness of the nearest codeword problem (NCP), on balanced (unbiased) linear codes with very low noise rate $\log^2(n)/n$ implies the quasi-polynomial hardness of $\lpn$ for very high noise rate $1/2 - 1/\poly(n).$ In the authors' own words, this large gap between the noise rates of the worst- and the average-case problem amounts to the slightly unsatisfying statement that ````very hard" $\lpn$ is harder than some ``very mildly hard" worst case problem". In a follow-up, Ref. \cite{10.1007/978-3-030-34621-8_1} managed to narrow, but not completely eliminate, the gap.

Our quantum reduction for $\lsn$ proceeds by entirely different means and uses a unitary ``twirl" to randomize the secret, the code and the error all at once. 
Unfortunately, our reduction only applies to a variant of LSN where the average-case instance has some mild dependence on the worst-case instance. We leave it as an open question whether it is possible to reduce to a much larger class of average-case problems that is completely independent of the worst-case instance.

\subsection{Overview of the Reduction} Suppose we are presented with a worst-case LSN instance of the form
\begin{equation}
    (S, E\ket{\overline{\psi_x}}^S )
\end{equation}
where $S\in \text{Stab}(n,k)$ is a stabilizer subgroup, $E\in \bar{P}_n$ is a Pauli error and $x\in \{0,1\}^k$ is a hidden secret---each potentially chosen adversarially. The goal of this section is to turn such a worst-case instance into an average-case instance of the LSN problem. Specifically, we will show how to draw a \emph{re-randomizing} Clifford unitary $R \in \Cliff_n$, that simultaneously re-randomizes
\begin{itemize}
    \item the underlying secret $x \in \bit^k$ of the instance, as well as

    \item the error $E$ and the underlying stabilizer subgroup $S$ of the instance. 
\end{itemize}
In other words, our reduction applies $R$ to the quantum part of the input, thereby obtaining a new (and ideally re-randomized) state of the form
\begin{equation}
    RE\ket{\overline{\psi_x}}^S = E' \ket{\overline{\psi_{x'}}}^{S'}.
\end{equation}
In the next sections, we show how to perform these steps separately.
First, we show in \Cref{lem:rerandomize_secret} how to re-randomize the LSN secret. Next, in \Cref{lem:rerandomize_codeerror}, we show how to re-randomize both the error and the underlying code. Finally, in \Cref{thm:Reduction_LSN} we put everything together and obtain the desired worst-case to average-case reduction.

\subsection{Re-Randomization of the Secret}

We now show how to sample a Pauli operator $P\in \calP_n$ that allows us to re-randomize the secret which underlies the LSN sample. 

For any stabilizer $S \in \Stab(n,k)$, we let $\algo L_X(S)= \{\overline{X^{u}}\}_{u\in \{0,1\}^k}$ denote the set of the logical Pauli $X$ operators associated with $S$. 
While the choice of logical Paulis associated with a given stabilizer code $S$ is not unique, we will use the prescription
\begin{equation}\label{eq:logicalPaulis}
    \overline{P} = U_{\text{Enc}}^S P (U_{\text{Enc}}^S)^{\dagger}, \quad \text{ for } P \in \algo P_n.
\end{equation}
We show that it suffices to apply a random Pauli in $\algo L_X(S)$ to re-randomize the secret.  

\begin{lemma}[Re-randomization of secret\label{lem:rerandomize_secret}]
Suppose that
$(S, E\ket{\overline{\psi_x}}^S)$ is a fixed instance, for some
stabilizer $S\in \Stab(n,k)$, error $E\in \bar{P}_n$ and secret $x\in \{0,1\}^k$.
Let $u \sim \bit^k$ be a random string and
let $\overline{X^{u}}\in \algo L_X(S)$ denote the logical Pauli $X^u$ with respect to $S$.
Then,
    \begin{equation}
        \overline{X^{u}} E\ket{\overline{\psi_x}}^S = E\ket{\overline{\psi_{x \oplus u}}}^{S}.
    \end{equation}
Moreover, the distribution of $x \oplus u$ is now uniform over $\bit^k$.
\end{lemma}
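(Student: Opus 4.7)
The plan is to verify the identity by unfolding the definition of the logical Pauli $\overline{X^u}$ and computing its action on the encoded state, and then to observe the uniformity of $x \oplus u$ as a routine probabilistic fact.

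First, I would apply the prescription from Equation \eqref{eq:logicalPaulis} to write $\overline{X^u} = \UES\,(I^{\otimes (n-k)} \otimes X^u)\,(\UES)^\dagger$, noting that we embed $X^u$ as an operator on the last $k$ (data) qubits of the unencoded register and as the identity on the first $n-k$ (ancilla) qubits, so that it acts as a logical $X^u$ after encoding. A direct computation then gives
\[
\overline{X^u}\ket{\overline{\psi_x}}^S
= \UES\,(I^{\otimes (n-k)} \otimes X^u)\,(\UES)^\dagger\,\UES\,(\ket{0^{n-k}} \otimes \ket{x})
= \UES\,(\ket{0^{n-k}} \otimes \ket{x \oplus u})
= \ket{\overline{\psi_{x \oplus u}}}^S,
\]
where I used $(\UES)^\dagger \UES = I$ and $X^u \ket{x} = \ket{x \oplus u}$.

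Second, I would commute $\overline{X^u}$ past $E$ to reach the stated identity. Since $\overline{X^u}$ is a Clifford conjugation of a Pauli, it is itself an element of $\mathcal{P}_n$; and any two $n$-qubit Paulis either commute or anti-commute, so $\overline{X^u} E = \pm E\, \overline{X^u}$. Because $E$ is treated as an element of the unsigned Pauli group $\bar{\mathcal{P}}_n$ (and the state is only defined up to a global phase in any case), the $\pm 1$ is immaterial, and combining this with the previous step yields $\overline{X^u} E \ket{\overline{\psi_x}}^S = E \ket{\overline{\psi_{x \oplus u}}}^S$ exactly as claimed.

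Finally, for the distributional claim, for any target $y \in \bit^k$ we have $\Pr_{u \sim \bit^k}[\,x \oplus u = y\,] = \Pr_{u \sim \bit^k}[\,u = x \oplus y\,] = 2^{-k}$, so $x \oplus u$ is uniform over $\bit^k$ regardless of the (possibly adversarial) value of the worst-case secret $x$. The only conceptual subtlety in the entire argument is handling the sign in the commutation $\overline{X^u} E = \pm E\, \overline{X^u}$, which is resolved at no cost by passing to $\bar{\mathcal{P}}_n$; otherwise the proof amounts to a one-line unpacking of the logical-Pauli definition together with the trivial uniformity of a shifted uniform string.
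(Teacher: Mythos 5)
Your proof is correct and follows essentially the same route as the paper's: unfold $\overline{X^u} = \UES (I \otimes X^u)(\UES)^\dagger$ to shift the encoded secret, commute $\overline{X^u}$ past $E$ up to a sign that is discarded by working modulo global phase, and note that $x \oplus u$ is uniform for any fixed $x$. No gaps.
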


\begin{proof}
First, we observe that     $\overline{X^{u}}E\ket{\overline{\psi_x}}^S = \pm E\overline{X^{u}}\ket{\overline{\psi_x}}^S $ because Paulis either commute or anticommute. We adopt the convention of ignoring global phases, so we will evaluate $E\overline{X^{u}}\ket{\overline{\psi_x}}^S$ from now on. Calculating the action of $\overline{X^{u}}$ on the state, we find
      \begin{align}
    EP\ket{\overline{\psi_x}}^S &= E\overline{X^{u}}U_{\text{Enc}}^S(\ket{0^{n-k}}\otimes\ket{x}) \\
       &= EU_{\text{Enc}}^S\left((\id_{n-k}\otimes X^{u})\ket{0^{n-k}}\otimes\ket{x} \right) \\
       &=  EU_{\text{Enc}}^S(\ket{0^{ n-k}}\otimes\ket{x\oplus u}).
    \end{align}
   Because $u\sim \{0,1\}^k$ is random, it follows that $x\oplus u$ is also uniformly distributed for any fixed $x \in \bit^k$. This proves the claim.
\end{proof}

\subsection{Re-Randomization of the Code and the Error}

We now show how to re-randomize both the stabilizer $S$ and the error $E$ of a particular LSN instance. There is an important subtlety: the two cannot be randomized independently of each other. Acting with some Clifford unitary $U$ on a given noisy codeword $E\ket{\overline{\psi_x}}^S$ re-randomizes the code $S$ and the error $E$ simultaneously, via:
\begin{equation}\label{eq:correlatedcodeerror}
    U\left(E \ket{\overline{\psi_{x}}}^S\right) = (UEU^{\dag})U\ket{\overline{\psi_{x}}}^S = (UEU^{\dag}) \ket{\overline{\psi_{x}}}^{USU^\dag}.
\end{equation}
It is clear from \Cref{eq:correlatedcodeerror} that the new code (with stabilizer $USU^{\dagger}$) and the new error $UEU^{\dagger}$ are correlated. Nevertheless, their joint distribution can be approximated by a product distribution, as we now show. 

\textbf{Notation:} In this subsection, we'll use the symbol $\mathcal{P}_{w}$ to denote the set of Paulis with Hamming weight $w$, and the symbol $\algo U_w$ to denote the uniform distribution over $\mathcal{P}_{w}$. 

\begin{lemma}[Re-randomization of the code and error\label{lem:rerandomize_codeerror}]
Fix the $\lsn$ instance
$(S, E\ket{\overline{\psi_x}}^S)$, for some
stabilizer subgroup $S\in \Stab(n,k)$, error $E\in \bar{\algo P}_n$ and secret $x\in \{0,1\}^k$. Acting on the state with an $n$-qubit Clifford unitary $U \sim \plc_n$ produces a new state which can be interpreted as an encoding of $x$ under the stabilizer code $USU^{\dagger}$, acted upon by error $UEU^{\dagger}$ (\Cref{eq:correlatedcodeerror}).

Suppose that $E$ has bounded weight $w= O(\log^c n)$ for some integer $c > 0$. Then, the joint distribution of the new error and new code, $(UEU^{\dag},USU^{\dag})$ is within total variation distance 
$1- 1/O(n^{\log^c n})$ of the product distribution $\algo U_w \times \mathrm{Unif}(\{USU^\dag\}_{U \in \plc_n})$. 
%with support $\algo P_n \times \Stab(n,k)$.
\end{lemma}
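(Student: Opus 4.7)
The plan is to parametrize $U \sim \plc_n$ via its canonical decomposition $U = C \circ \mathcal{Q}(\pi)$, where $C = \bigotimes_{i=1}^n C_i$ consists of i.i.d.\ uniform single-qubit Cliffords $C_i \sim \vCliff_1$ and $\pi \sim \mathfrak{S}_n$ is an independently sampled uniform permutation. I would first analyze how each piece acts on the fixed error $E$, and then use this to control the joint behavior on the pair $(E, S)$.

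For the error marginal, the argument is direct. Conjugation by $\mathcal{Q}(\pi)$ relocates tensor factors via
\[
\mathcal{Q}(\pi)(E_1 \otimes \cdots \otimes E_n)\mathcal{Q}(\pi)^\dag = E_{\pi^{-1}(1)} \otimes \cdots \otimes E_{\pi^{-1}(n)},
\]
so $\mathrm{supp}(UEU^\dag) = \pi(\mathrm{supp}(E))$ is a uniformly random $w$-subset of $[n]$. Moreover, since $|\vCliff_1| = 6$ and the conjugation action of $\vCliff_1$ on $\{X, Y, Z\}$ realizes the full symmetric group $S_3$, a uniformly random $C_i$ sends any fixed non-identity Pauli to a uniformly random non-identity Pauli. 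Combining these two independent randomizations gives that $UEU^\dag$ is \emph{exactly} distributed as $\algo U_w$; in particular, the weight of $E$ is preserved and the error marginal matches the target precisely.

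The main obstacle is controlling the joint distribution, since the shared $U$ couples $UEU^\dag$ and $USU^\dag$. I plan to use a fiber-counting argument. For any target $E' \in \mathcal{P}_w$ with support $T$, the preimage $\plc_n^{E \to E'} := \{U \in \plc_n : UEU^\dag = E'\}$ has size exactly $w!(n-w)! \cdot 2^w \cdot 6^{n-w}$: the permutation has $w!$ valid choices on $\mathrm{supp}(E)$ and $(n-w)!$ choices on its complement; each $C_i$ for $i \in T$ has exactly $2$ elements in $\vCliff_1$ conjugating the specified non-identity single-qubit Pauli to its target, while each $C_i$ for $i \notin T$ is unconstrained among the $6$ elements of $\vCliff_1$. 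This is a fraction $1/(\binom{n}{w}3^w) = 1/|\mathcal{P}_w|$ of $|\plc_n| = n! \cdot 6^n$, consistent with the uniform error marginal. Crucially, when $w = O(\log^c n)$, the residual subgroup after conditioning on $UEU^\dag = E'$ still contains essentially the full $\plc_{n-w}$-action on the $n - w$ qubits outside $T$, which I expect to sweep out an approximately uniform distribution on the orbit $\{USU^\dag\}_{U \in \plc_n}$.

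The delicate step will be turning this heuristic into the quantitative TV bound claimed in the statement. My plan is to bound, for each $E'$, the variational discrepancy between the conditional law of $USU^\dag \mid UEU^\dag = E'$ and the unconditional marginal, and then to aggregate over the $|\mathcal{P}_w| = \binom{n}{w}3^w = O(n^{\log^c n})$ possible values of $E'$; this combinatorial cost is precisely where the $n^{\log^c n}$ factor enters. The hardest nut to crack is controlling the residual orbit of a worst-case $S$: if the structure of $S$ happens to align with $\mathrm{supp}(E)$, the residual randomization of $S$ can be substantially smaller than for typical $S$. I would absorb this either by conditioning on an overwhelming-probability structural event (e.g., invoking the quantum Gilbert--Varshamov bound of Theorem~\ref{thm:GVbound} to rule out pathological codes) or by a careful orbit-size estimate on the error-stabilizer subgroup of $\plc_n$ acting on $\mathrm{Stab}(n, k)$.
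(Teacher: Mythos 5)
Your computation of the error marginal (that $UEU^\dag$ is exactly distributed as $\algo U_w$) and your fiber count for $\{U \in \plc_n : UEU^\dag = E'\}$ are both correct, and in fact sharper than anything the paper makes explicit. The gap is in the second half: you are aiming at the wrong target. The lemma does \emph{not} claim that the joint law of $(UEU^\dag, USU^\dag)$ is close to the product distribution; it claims total variation distance \emph{at most} $1 - 1/O(n^{\log^c n})$, i.e.\ a bound only barely below the trivial value $1$. Your plan---showing that the conditional law of $USU^\dag$ given $UEU^\dag = E'$ is close to its unconditional marginal---is a far stronger statement and is generally false for the worst-case $S$ the lemma allows: conditioning on $UEU^\dag = E'$ restricts $U$ to a coset of the stabilizer $H = \{U \in \plc_n : UEU^\dag = E\}$, and the $H$-orbit of $S$ can be a tiny subset of the full code orbit, so no amount of aggregating over $E'$ will close the argument. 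Invoking the quantum Gilbert--Varshamov bound also does not help, since $S$ here is fixed and adversarial, not random.

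The intended argument is much cruder. Since $\plc_n$ is a group (Lemma~\ref{lem:plcgroup}), left-invariance of the uniform measure shows that the joint law of $(UEU^\dag, USU^\dag)$ is \emph{exactly uniform} on the orbit $\plc_n * (E,S)$. This orbit is a subset of the product set $\mathcal{P}_w \times \{USU^\dag\}_{U\in\plc_n}$, on which the target product distribution is uniform, and it contains at least one pair for every code in $\{USU^\dag\}_{U\in\plc_n}$, so $|\plc_n*(E,S)| \geq |\{USU^\dag\}_{U\in\plc_n}|$. The identity $\delta_{\mathsf{TV}}\left(\mathrm{Unif}(\algo X),\mathrm{Unif}(\algo Y)\right) = 1 - |\algo Y|/|\algo X|$ for $\algo Y\subseteq\algo X$ then gives a bound of $1 - 1/|\mathcal{P}_w|$, and $|\mathcal{P}_w| = O(n^{\log^c n})$ for $w = O(\log^c n)$---which is where that factor actually enters, not from a union over target errors. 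So the lemma follows from your fiber count plus this orbit-size lower bound; the ``delicate step'' you flag is not needed and, as planned, would not go through.
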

\begin{proof}
We analyze the joint distribution of $(UEU^{\dag},USU^{\dag})$, for $U \sim \plc_n$.
    We may regard $U:\calP_n\times \Stab(n,k) \rightarrow \calP_n\times \Stab(n,k)$ as a map that acts on the pair of initial error and stabilizer subgroup $(E,S)\in  \calP_n\times \Stab(n,k)$ via the group action
    \begin{equation}
        U *(E,S) := (UEU^{\dag}, USU^{\dag}).
    \end{equation}
    We claim that for $U\sim \plc_n$, the joint distribution of $(UEU^{\dag},USU^{\dag})$ is uniform over the $\plc_n$ orbit of $(E,S)$, namely the set
    \begin{equation}
        \plc_n *(E,S):= \{(P,T)\in  \calP_n\times \Stab(n,k)\, :\, \exists \,U\in \plc_n \,\, \mathrm{ s.t. }\,\, P=UEU^{\dag}, T=USU^{\dag}\}.
    \end{equation}
    To see this, note that for any $(P_1,T_1), (P_2,T_2)\in  \plc_n *(E,S)$, there must exist $V\in \plc_n$ such that $VP_1 V^{\dag} = P_2$ and $VT_1V^{\dag} = T_2.$ This is because $\plc_n$ is a group (Lemma \ref{lem:plcgroup}), and by the definition of $\plc_n *(E,S)$, there exist $V_1, V_2 \in \plc_n$ such that $$(P_1,T_1) = (V_1EV_1^{\dag},V_1SV_1^{\dag}), (P_2,T_2) = (V_2EV_2^{\dag},V_2SV_2^{\dag}).$$ Thus, $V$ is exactly $V_2V_1^{\dag}$. As a result, we conclude that 
    \begin{align}
        &\Pr_{U\sim \plc_n}[UEU^{\dag}=P_1 \, \cap USU^{\dag} = T_1]\\
        &= \Pr_{U\sim \plc_n}[VUEU^{\dag}V^{\dag}=VP_1V^{\dag} \, \cap VUSU^{\dag}V^{\dag} = VT_1V^{\dag}]\\
        &= \Pr_{U\sim \plc_n}[VUEU^{\dag}V^{\dag}=P_2 \, \cap VUSU^{\dag}V^{\dag} = T_2]\\
        &= \Pr_{U\sim \plc_n}[UEU^{\dag}=P_2 \, \cap USU^{\dag} = T_2].
    \end{align}
    We will now argue that the uniform distribution $\mathsf{Unif}(\plc_n *(E,S))$ is a good approximation to the distribution $\mathcal{U}_w\times \mathrm{Unif}(\{USU^\dag\}_{U \in \plc_n})$ by bounding the total variation distance between the two. First, note that the total variation distance between the uniform distribution on finite sets $\algo X$ and $\algo Y\subseteq \mathcal{X}$ takes the following simple form:
\begin{equation}
    \delta_{\mathsf{TV}} \left(\text{Unif}(\algo X), \text{Unif}(\algo Y)\right) = \frac{1}{|\algo X|}(|\algo X|-|\algo Y|) = 1-\frac{|\algo Y|}{|\algo X|}.
\end{equation}
Letting set $\algo X$ be the set of all tuples $\{(P, USU^{\dagger})\}_{P\in \mathcal{P}_w, U\in \plc_n}$ and set $\algo Y = \plc_n *(E,S)$, we have that  
\begin{align}
    &\delta_{\mathsf{TV}}\left( \mathcal{U}_w \times \mathrm{Unif}(\{USU^\dag\}_{U \in \plc_n}),\mathsf{Unif}(\plc_n *(E,S))\right)\\
    &= 1- \frac{|\plc_n *(E,S)|}{|\mathcal{P}_{w}| \cdot|\{USU^{\dag}\}_{U\in \plc_n}|}\\
    &\leq 1-\frac{1}{|\mathcal{P}_{w}|} \,= \, 1-\frac{1}{2\binom{n}{w}3^w}.
    \end{align}
where we have used the fact that $|\plc_n *(E,S)| \geq |\{USU^{\dag}\}_{U\in \plc_n}|$, because for every distinct $T\in \{USU^{\dag}\}_{U\in \plc_n}$, $\exists V\in \plc_n$ such that $T = VSV^{\dagger}$, and so there is at least one element with $(VEV^{\dagger},T) \in \plc_n *(E,S)$. Plugging in the assumption that $w = O(\log^c n)$, we conclude that $2\binom{n}{w}3^w \leq 2\left(\frac{3ne}{w}\right)^w = O(n^{\log^c n})$, which proves the claim.
\end{proof}

\subsection{Worst-Case to Average-Case Reduction}

In this section, we formally state our worst-case to average-case reduction for a variant of the LSN problem.
Specifically, we show how an appropriate average-case solver allows us to solve worst-case LSN instances $(S, E\ket{\overline{\psi_x}}^S)$. 
To this end, we assume that $S$ is a fixed (and worst-case choice of stabilizer) and we consider the average-case problem $\lsn_{n,k,\mathcal{N},\mathcal{S},\mathcal{I}}$ with respect to the following set of distributions:
\begin{itemize}
    \item $\mathcal{N}$ is the uniform distribution $\algo U_w$ over $n$-qubit Pauli errors of weight precisely $w$.

    \item $\mathcal{S}$ is the uniform distribution $\mathrm{Unif}(\{USU^\dag\}_{U \in \plc_n}$ over stabilizers $\{USU^\dag\}_{U \in \plc_n}$.

    \item $\mathcal{I}$ is the uniform distribution over bit strings $x \in \bit^k$.
\end{itemize}
While our worst-case to average-case reduction only applies to a highly specialized variant (in particular, not to the standard variant) of LSN in the quasi-polynomial hardness regime\footnote{See \Cref{remark:quasi} for a definition of quasi-polynomial hardness for the LSN problem.}, it nevertheless results in a meaningful reduction. Concretely, it allows us to solve a worst-case problem with inverse-quasi-polynomial success probability whenever we have a sufficiently good solver for an average-case version of the problem. 

\begin{theorem}[Worst-case to average-case reduction]\label{thm:Reduction_LSN}
Let $n,k \in \mathbb{N}$ with $n = \poly(k)$ and let 
$(S, E\ket{\overline{\psi_x}}^S)$ be any worst-case instance, for some stabilizer subgroup $S\in \Stab(n,k)$, error $E\in \bar{\algo P}_n$ of weight $w=O(\log^c n)$ for $c > 0$, and secret $x\in \{0,1\}^k$.
Suppose there exists an algorithm $\mathcal{A}$ that runs in time $T$ and solves the average-case problem $\lsn_{n,k,\mathcal{N},\mathcal{S},\mathcal{I}}$ (implicitly depending on $S$) with probability $1-\epsilon$.
Then, there exists an algorithm $\mathcal{B}$ which runs in time $\poly(k,T)$ and solves the worst-case instance $(S, E\ket{\overline{\psi_x}}^S)$ with probability at least $1/O(n^{\log^c n})-\epsilon$.
\end{theorem}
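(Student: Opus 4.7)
The plan is to combine the two re-randomization lemmas in sequence and then invoke the average-case solver $\mathcal{A}$. Concretely, the algorithm $\mathcal{B}$, on input the worst-case instance $(S, E\ket{\overline{\psi_x}}^S)$, proceeds as follows: (i) sample $u \sim \bit^k$ uniformly at random, compute a description of the logical Pauli $\overline{X^u} \in \algo L_X(S)$ using the encoding circuit of $S$ (obtainable classically in time $O(n^3)$ via \Cref{thm:encode_stab}), and apply $\overline{X^u}$ to the quantum register; (ii) sample $U \sim \plc_n$ uniformly at random (a uniformly random permutation followed by $n$ uniformly random single-qubit Cliffords), and apply $U$ to the quantum register; (iii) compute classical descriptions of the new stabilizer subgroup $USU^\dagger$ (by conjugating the generators of $S$, cf.\ \Cref{lem:sim}), run $\mathcal{A}$ on the resulting instance, and denote its output by $y \in \bit^k$; (iv) output $y \oplus u$.

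The correctness analysis proceeds by tracking the distributions step by step. After step (i), by \Cref{lem:rerandomize_secret}, the state is $E \ket{\overline{\psi_{x \oplus u}}}^S$ where $x \oplus u$ is uniformly distributed over $\bit^k$; crucially, this step affects neither the underlying code nor the error. After step (ii), using \Cref{eq:correlatedcodeerror}, the state can be written as $(UEU^\dagger) \ket{\overline{\psi_{x \oplus u}}}^{USU^\dagger}$. Applying \Cref{lem:rerandomize_codeerror}, the joint distribution of $(UEU^\dagger, USU^\dagger)$ is within total variation distance at most $1 - 1/O(n^{\log^c n})$ of the product distribution $\algo U_w \times \mathrm{Unif}(\{USU^\dagger\}_{U \in \plc_n})$. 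Since the uniform secret $x \oplus u$ is independent of the randomness re-randomizing the code and error, the overall distribution handed to $\mathcal{A}$ in step (iii) is within total variation distance $1 - 1/O(n^{\log^c n})$ of the target average-case distribution $\lsn_{n,k,\mathcal{N},\mathcal{S},\mathcal{I}}$.

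Because $\mathcal{A}$ succeeds on a true sample of $\lsn_{n,k,\mathcal{N},\mathcal{S},\mathcal{I}}$ with probability at least $1 - \epsilon$, a standard TV-distance argument (the success event is fixed, and TV distance upper-bounds the difference of probabilities under any event) gives that $\mathcal{A}$ correctly outputs the hidden secret $x \oplus u$ with probability at least
\begin{equation*}
(1-\epsilon) - \left(1 - \tfrac{1}{O(n^{\log^c n})}\right) \,=\, \tfrac{1}{O(n^{\log^c n})} - \epsilon.
\end{equation*}
When $\mathcal{A}$ succeeds, $\mathcal{B}$ outputs $(x \oplus u) \oplus u = x$, as required. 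The runtime is $\poly(k, T)$ because each re-randomization step can be implemented with $\poly(n)$ Clifford gates and classical bookkeeping.

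The main obstacle is that the two re-randomization steps are intertwined: the joint distribution of the error and the code after step (ii) is \emph{not} a product distribution, and the bound from \Cref{lem:rerandomize_codeerror} is quite weak (TV distance $1 - 1/O(n^{\log^c n})$ from a product distribution), which is also the reason the assumption $w = O(\log^c n)$ is needed to keep $|\mathcal{P}_w| = 2\binom{n}{w}3^w$ manageable. This weak closeness is what forces us to invoke a quasi-polynomially hard average-case solver (cf.\ \Cref{remark:quasi}): the reduction only preserves hardness up to a $1/O(n^{\log^c n})$ degradation in success probability. One should also verify that step (i) and step (ii) compose correctly when they appear in the opposite order conceptually --- namely, $U \overline{X^u} E \ket{\overline{\psi_x}}^S = (U \overline{X^u} E U^\dagger) U \ket{\overline{\psi_x}}^S$, and the operator $U\overline{X^u}U^\dagger$ acts as a (possibly different) logical $X$ of the new code $USU^\dagger$, so the relabeling $y \mapsto y \oplus u$ still recovers $x$.
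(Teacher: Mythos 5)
Your proposal is correct and follows essentially the same route as the paper's proof: apply a random logical $\overline{X^u}$ to uniformize the secret (\Cref{lem:rerandomize_secret}), apply a random $U \sim \plc_n$ to jointly re-randomize the code and error (\Cref{lem:rerandomize_codeerror}), run the average-case solver, and undo the shift by $u$, with the success bound $1/O(n^{\log^c n}) - \epsilon$ obtained from the total-variation comparison to the product distribution exactly as in the paper (which invokes \Cref{lem:strong-convexity} for this step). Your closing remark about the operator ordering $U\overline{X^u} = (U\overline{X^u}U^\dagger)U$ is a sound sanity check that the paper handles implicitly by applying the two lemmas in sequence.
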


\begin{proof}
By assumption, the instance to the worst-case problem is of the form 
$(S, E\ket{\overline{\psi_x}}^S)$,
for some stabilizer subgroup $S\in \Stab(n,k)$, error $E\in \bar{\algo P}_n$ of weight $w=O(\log^c n)$, $c>0$, and secret $x\in \{0,1\}^k$.
We will now give a reduction $\algo B$ which transforms the given sample $(S, E\ket{\overline{\psi_x}}^S)$ into a new sample which approximates the average-case instance for the $\lsn_{n,k,\mathcal{N},\mathcal{S},\mathcal{I}}$ problem.  Our reduction $\algo B$ uses the solver $\algo A$ and proceeds as follows:
\begin{enumerate}
    \item $\algo B$ samples a random logical operator $\overline{X^{u}}\sim \algo L_X(S)$, for $u \in \bit^k$.

    \item $\algo B$ samples a random unitary $U \sim \plc_n$ from the $\plc_n$ ensemble.

    \item $\algo B$ runs the solver $\algo A$ for the $\lsn_{n,k,\mathcal{N},\mathcal{S},\mathcal{I}}$ problem on input
$$
(USU^\dag, U \overline{X^{u}} E\ket{\overline{\psi_x}}^S)
$$
to obtain a string $x' \in \bit^k$. Then, the reduction $\algo B$ outputs $x' \oplus u$.
\end{enumerate}
In other words, the reduction $\algo B$ applies the $n$-qubit re-randomizing unitary consisting of $R = U \overline{X^{u}}$ to the initial noisy codeword $E\ket{\overline{\psi_x}}^S$. 

We now analyze the probability that $\algo B$ succeeds at recovering the secret $x \in \bit^k$. First, we use our insights from \Cref{lem:rerandomize_secret} and \Cref{lem:rerandomize_codeerror} to argue that
$$
(USU^\dag, U \overline{X^{u}} E\ket{\overline{\psi_x}}^S)= (USU^\dag, (UEU^\dag)\ket{\overline{\psi_{x \oplus u}}}^{USU^\dag}).
$$
In addition, we know from \Cref{lem:rerandomize_secret} that the distribution of the secret is precisely $\algo I$, and we know from \Cref{lem:rerandomize_codeerror} that the distribution of the stabilizer subgroup and error of the resulting state is within total variation distance at most $1-1/O(n^{\log^c n})$ of the product distribution $\algo S \times \algo N$. Therefore, by the strong convexity of the trace distance (\Cref{lem:strong-convexity}), we know that $\algo B$ succeeds with probability at least $1/O(n^{\log^c n})-\epsilon$.

We now argue that it takes $\poly(k,T)$ time to perform the reduction. First, we can invoke \Cref{lem:sim} to argue that a random logical Pauli can be computed in time $O(n^2)$.
Then, from \Cref{lem:sim}, it follows that the reduction $\algo B$ can compute a classical description of the stabilizer subgroup $U S U^\dag$ in time polynomial in $k$ and $n$, since $U \sim \plc_n$ is an efficient Clifford operator. 
Because $\algo A$ runs in time $T$, this completes the proof.
\end{proof}

While we do not explicitly carry out the proof, we remark that a similar worst-case to average-case reduction also applies to the multi-shot variant of LSN: the reduction can simply apply a fresh re-randomizing unitary for every block of $n$ qubits.

\section{Complexity of Learning Stabilizers with Noise}
\label{sec:complexity}

In previous sections, we have postulated a {\em lower bound} for the time-complexity of $\lsn$ (\Cref{def:LSN}); that is, there do not exist efficient quantum algorithms that solve the problem. In this section, we upper-bound the complexity of $\lsn$, placing it within the constellation of unitary synthesis problems proposed in \cite{rosenthal2021interactiveproofssynthesizingquantum, bostanci2023unitarycomplexityuhlmanntransformation}. Specifically, we show that a variant of the LSN problem, for a worst-case choice of non-degenerate code, is contained in $\mathsf{avgUnitaryBQP}^{\mathsf{NP}}$, a (distributional and oracle) unitary complexity class which was recently defined by Bostanci et al.~\cite{bostanci2023unitarycomplexityuhlmanntransformation}.

\subsection{A Review of Unitary Complexity}
This subsection reprises some problems and complexity classes introduced in \cite{rosenthal2021interactiveproofssynthesizingquantum,bostanci2023unitarycomplexityuhlmanntransformation} as a way of giving background for our complexity upper bound.
\paragraph{Unitary synthesis problems.} Many quantum problems whose output is a quantum state or unitary fall outside the purview of traditional complexity theory. Some examples include implementing Hamiltonian time evolution and state preparation tasks. All of these tasks have the flavor of {\em preparing a target unitary} upon input of some classical description of the target. This led to the formalization of \emph{unitary synthesis problems}:

\begin{definition}[Unitary synthesis problems\label{def:unitarysyn}]
    A unitary synthesis problem is given by a sequence $\mathscr{U}=$ $\left(U_x\right)_{x \in\{0,1\}^*}$ of partial isometries.\footnote{See \Cref{sec:prelims} for a formal definition of partial isometries.}
\end{definition}
We may understand $x \in\{0,1\}^{\ast}$ as the way that the particular target partial isometry is specified to the algorithm that solves the problem. 
%For example, in the Hamiltonian time evolution problem, $x$ could be the description of the Hamiltonian and the total time of evolution, meaning that for $x=(H,t)$, $U_x = e^{-i Ht}.$ 
%Although in full generality \Cref{def:unitarysyn} allows for the target quantum channel to be a partial isometry (a class that contains unitaries), we will only need to synthesize unitaries for our $\lsn$ problem. However, in this review section, we retain the wording `partial isometry' for full generality.

In the definition above, we call $x$ the instance of the problem and $U_x$ the transformation of $\mathscr{U}$ corresponding to $x$. The goal of an algorithm handed an instance $x$ of a unitary synthesis problem is then to implement a quantum channel $C_x$ which approximates a channel completion of the target unitary $U_x$ in diamond norm. In fact the algorithm must accomplish this for all problem instances, $x\in \{0,1\}^{\ast}$. One could consider various metrics for how well the algorithm's output $C_x$ approximates the target; a ``worst-case" measure is to require the existence of a channel completion $\Phi_x$ of $U_x$ such that
\begin{equation}
\left\|C_x-\Phi_x\right\|_{\diamond} \leq \delta(|x|)\quad \forall x\in \{0,1\}^{\ast}.
\end{equation}
The strict requirement of diamond-norm approximation makes this a ``worst-case" measure of closeness: it says there must exist a channel completion of the target unitary such that for any choice of registers to trace out, tracing out those registers of the channel completion still gives a channel that is well-approximated by the channel $C_x$ after tracing out the same registers. 
\subparagraph{Average-case/distributional unitary synthesis problems:}
It is not necessary for us to use this strict notion of approximation; we will instead be using an {\em average-case} notion of approximation captured by a {\em distributional (or average-case) unitary synthesis problem}. Here, in addition to a specifying a target partial isometry, we also specify an input state $\ket{\psi_x}$ and a register of this state on which the partial isometry is going to act, and we only care about closeness with respect to this register. We call the register that $U_x$ (or its channel completion) acts on the \emph{quantum input} to the unitary synthesis problem.

\begin{definition}[Distributional unitary synthesis problem]
\label{def:dist-usynth}
A \emph{unitary synthesis problem} is given by a sequence $\mathscr{U} = (U_{x})_{x \in \{0,1\}^*}$ of partial isometries. 
We say that a pair $(\mathscr{U}, \Psi)$ is a \emph{distributional unitary synthesis problem} if $\mathscr{U} = (U_x)_{x \in \{0,1\}^*}$ is a unitary synthesis problem with $U_x \in \linear(\mathcal{H}_{\mathsf{A}_x}, \mathcal{H}_{\mathsf{B}_x})$ for some registers $\mathsf{A}_x\mathsf{B}_x$, and $\Psi = (\ket{\psi_x})_{x \in \{0,1\}^*}$ is a family of bipartite pure states on registers $\mathsf{A}_x \mathsf{R}_x$. We call $\ket{\psi_x}$ the \emph{distribution state} with \emph{target register} $\mathsf{A}_x$ and \emph{ancilla register} $\mathsf{R}_x$. 
\end{definition}

\begin{definition}[Average-case implementation of distributional unitary synthesis] \label{def:avg_case_error}
Let $(\mathscr{U},\Psi)$ denote a distributional unitary synthesis problem, where $\mathscr{U} = (U_x)_{x \in \{0,1\}^*}$ and $\Psi = (\ket{\psi_x})_{x \in \{0,1\}^*}$, and let $\epsilon: \mathbb{N} \to \mathbb{R}$ be a function. Let $C = (C_x)_{x \in \{0,1\}^*}$ denote a family of quantum circuits, where $C_x$ implements a channel whose input and output registers are the same as those of $U_x$. We say that \emph{$C$ implements $(\mathscr{U},\Psi)$ with \textbf{average-case error} $\epsilon$} if, for all sufficiently long $x \in \{0,1\}^*$, there exists a channel completion $\Phi_x$ of $U_x$ such that
\[
\delta_{\mathsf{TD}}\Big ( (C_x \ot \id)(\psi_x), \, (\Phi_x \ot \id)(\psi_x) \Big ) \leq \epsilon(|x|) \,,
\]
where the identity channel acts on the ancilla register of $\ket{\psi_x}$.
\end{definition}

\paragraph{Unitary/state complexity classes.}
Unitary complexity problems can be grouped into complexity classes. These complexity classes are organized based on the amount of computational resources needed to perform state transformations. As important background, let us informally introduce the unitary complexity class $\mathsf{unitaryBQP}$. Analogous to $\mathsf{BQP}$ which is the set of all decision problems that can be solved by a polynomial-time quantum computer with at most $1/3$ probability of error, $\mathsf{unitaryBQP}$ is the set of all partial isometries that can be approximately applied in polynomial time in their description length. That is, it is the set of all sequences of unitary operators $\left(U_x\right)_{x \in\{0,1\}^*}$ where there is a polynomial-time quantum algorithm $A$ that, given an instance $x \in\{0,1\}^*$ and a quantum system B as input, (approximately) applies $U_x$ to system B. Here the input system B could contain any state, even part of a state on a larger system. 

\subparagraph{Average-case/distributional unitary complexity classes:}
We will eventually be interested in the average-case version of $\mathsf{UnitaryBQP}$, which is $\mathsf{avgUnitaryBQP}$. Recall that to go from worst-case to distributional, or average-case unitary synthesis problems, we introduce {\em distribution states} -- input states whose specific registers we will be implementing the desired unitary transformation on. In order to properly define $\mathsf{avgUnitaryBQP}$, therefore, we must introduce the state complexity class $\mathsf{stateBQP}$ which was introduced in~\cite{rosenthal2021interactiveproofssynthesizingquantum}. Intuitively, this class contains sequences of quantum states that require polynomial time to be synthesized. 

\begin{definition}[$\mathsf{stateBQP}$] \label{def:stateclasses}
	Let $\delta: \mathbb{N} \to [0,1]$ be a function. Then, $\mathsf{stateBQP}_{\delta}$ is the class of all sequences of density matrices $(\rho_x)_{x \in \bit^*}$ such that each $\rho_x$ is a state on $\poly(|x|)$ qubits, and there exists a time-uniform family of general quantum circuits $(C_x)_{x \in \bit^*}$ such that, for all sufficiently long $x \in \bit^*$, the circuit $C_x$ takes no inputs and $C_x$ outputs a density matrix $\sigma_x$ such that 
	\[
	\delta_{\mathsf{TD}}(\sigma_x, \rho_x) \leq \delta(|x|)\,.
	\]
	We define
	\[
	    \mathsf{stateBQP} = \bigcap_{c \in \mathbb{N}} \mathsf{stateBQP}_{n^{-c}}.
	\]
\end{definition}

% In traditional complexity theory, decision problems are formalized as \emph{languages}, which are sets of binary strings. The analog in our framework is the following formalization of unitary synthesis problems.
% Specifically, we use a notion of \emph{distributional (or average-case) unitary synthesis problems}.
With this definition in hand, we can define $\mathsf{avgUnitaryBQP}$ as the set of polynomial-time solvable distributional unitary synthesis problems, with the restriction that their input state is in $\mathsf{stateBQP}$ (i.e. is polynomial-time preparable).
\begin{definition}[$\mathsf{avgUnitaryBQP}$]\label{def:avgunitaryBQP_avgunitaryPSPACE}
Let $\epsilon: \mathbb{N} \to \mathbb{R}$ be a function. Define the unitary complexity class $\mathsf{avgUnitaryBQP}_\epsilon$ to be the set of distributional unitary synthesis problems $\Big( \mathscr{U} = (U_x)_{x \in \{0,1\}^*}, \Psi= (\ket{\psi}_x)_{x \in \{0,1\}^*} \Big)$ where $\Psi \in \mathsf{stateBQP}$ and there exists a uniform polynomial-time quantum algorithm $C$ that implements $(\mathscr{U},\Psi)$ with \textbf{average-case error} $\epsilon$.

We define
\[
    \mathsf{avgUnitaryBQP} = \bigcap_{c \in \mathbb{N}} \mathsf{avgUnitaryBQP}_{n^{-c}}.
\]
\end{definition}

\subparagraph{Average-case oracle unitary complexity classes.}

We also consider reductions between (average-case) variants of unitary complexity classes. Hence, we use oracular variants of (average-case) unitary complexity classes, which behave analogously as their classical counterparts. We refer to Section 3.3 of~\cite{bostanci2023unitarycomplexityuhlmanntransformation} for more details.

\subsection{Complexity Upper Bound}

We conclude this section by showing that (a variant of) the average-case  $\mathsf{LSN}_{n,k,\mathcal{D}_p^{\otimes n}}$ problem from \Cref{def:LSN}---for a worst-case choice of non-degenerate stabilizer code---is contained in the complexity class $\mathsf{avgUnitaryBQP}^{\mathsf{NP}}$. Note that this requirement that the stabilizer code is non-degenerate is essentially without loss of generality: the quantum Gilbert-Varshamov bound guarantees that a random stabilizer code is non-degenerate and has large distance with overwhelming probability.

In the following proof, the stabilizer code associated with Clifford $C \in \Cliff_n$ is assumed to be fixed and non-degenerate. When we refer to the ``average-case" problem, we mean average-case over choice of error and secret. This is sound in the sense that the stabilizer code (and its accompanying encoding circuit $C$) is known to the $\lsn$ algorithm, whereas the error and secret are not.

\begin{theorem}[Complexity of LSN]\label{thm:complexity_LSN}
Let $k \in \mathbb{N}$ and $n \geq 8k$.
Let
$\mathcal{D}_p^{\otimes n}$ be the $n$-qubit depolarizing channel, for some constant $p\in(0,0.05)$. Then, the
$\mathsf{LSN}_{n,k,\mathcal{D}_p^{\otimes n}}$ problem with a worst-case choice of non-degenerate stabilizer code with distance $d>3np$ is contained in the distributional (and oracle) 
unitary synthesis class $\mathsf{avgUnitaryBQP}^{\mathsf{NP}}$.    
\end{theorem}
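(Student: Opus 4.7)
The plan is to cast the $\mathsf{LSN}_{n,k,\mathcal{D}_p^{\otimes n}}$ problem as a distributional unitary synthesis problem $(\mathscr{U}, \Psi)$ in the sense of \Cref{def:dist-usynth} and then exhibit a polynomial-time quantum algorithm with $\mathsf{NP}$-oracle access that implements a channel completion of the associated partial isometry with negligible average-case error. I would take the instance to be a classical description of the encoding Clifford $C \in \Cliff_n$ for the (worst-case) non-degenerate stabilizer code, and the distribution state to be exactly $\ket{Q^0}_{\mathsf{AB}}$ from Eq.~\eqref{eq:Q0}: the target register $\mathsf{B}$ holds the noisy codeword $E_a C(\ket{0^{n-k}} \otimes \ket{x})$ together with a $2n$-qubit ancilla prepared in $\ket{0}$, while the purifying register $\mathsf{A}$ coherently records the secret $x$ and the error label $a$. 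To see $\Psi \in \mathsf{stateBQP}$, observe that $\ket{Q^0}$ is prepared by (i) creating a uniform superposition on $x \in \bit^k$ and the Pauli-weighted superposition $\sum_a \sqrt{\Pr_{\mathcal{D}_p^{\otimes n}}[E_a]}\ket{a}$ (whose amplitudes $(p/3)^{|E_a|/2}(1-p)^{(n-|E_a|)/2}$ are efficiently computable), (ii) copying $x$ into the appropriate qubits of $\mathsf{B}$ and applying $C$, and (iii) applying controlled Paulis indexed by $a$---all of which is in $\mathsf{BQP}$. The target partial isometry $U_C$ acts on $\mathsf{B}$ and is characterized as the one whose image on $\ket{Q^0}$ best approximates $\ket{Q^1}_{\mathsf{AB}}$ of Eq.~\eqref{eq:Q1}, from which $x$ can be read off in the computational basis.

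The existence of such a $U_C$ with sufficiently small average-case error is already guaranteed by the analysis in \Cref{sec:unique-sol}: \Cref{lem:fidelity-lemma} yields $\mathrm{F}(Q^0_{\mathsf{A}}, Q^1_{\mathsf{A}}) \geq 1 - 4e^{-np/24}$ for any non-degenerate code of distance $d > 3np$, and Uhlmann's theorem (\Cref{thm:uhlmann}) together with Fuchs--van de Graaf then produces a unitary on $\mathsf{B}$ whose action on the distribution state is within average-case trace distance $O(e^{-np/48})$ of the target, which is negligible in the parameter regime of the theorem. This settles the information-theoretic half of the claim.

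The heart of the proof is then the efficient $\mathsf{BQP}^{\mathsf{NP}}$ implementation. I would use the following coherent syndrome-decoding circuit: (i) apply $C^\dagger$ on register $\mathsf{B}$, conjugating $E_a$ into a Pauli $P_a = C^\dagger E_a C$ and leaving $\mathsf{B}$ in $P_a(\ket{0^{n-k}} \otimes \ket{x}) \otimes \ket{0^{2n}}$; writing $P_a = \pm X^{\alpha^{(a)}} Z^{\beta^{(a)}}$ and splitting $\alpha^{(a)} = (\alpha_S^{(a)}, \alpha_L^{(a)})$, the first $n-k$ qubits become $\ket{\alpha_S^{(a)}}$, which is precisely the classical stabilizer syndrome; (ii) coherently copy this syndrome to a fresh ancilla; (iii) query the $\mathsf{NP}$ oracle coherently to compute a canonical (say, lexicographically smallest) minimum-weight Pauli $E_b$ consistent with the syndrome, via binary search on the standard $\mathsf{NP}$ predicate ``does there exist a Pauli of weight $\leq w$ with syndrome $s$?''; and (iv) apply the correction $C^\dagger E_b^\dagger C$ on $\mathsf{B}$, while writing $b$ into the ancilla slot meant for $a$. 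Since the code has distance $d > 3np$ and a Chernoff bound on $\mathcal{D}_p^{\otimes n}$ ensures $|E_a| \leq 3np/2$ except with probability $e^{-\Omega(np)}$, the decoded $E_b$ coincides with $E_a$ up to stabilizer equivalence with overwhelming probability, so after step (iv) register $\mathsf{B}$ holds $\ket{0^{n-k}} \otimes \ket{x} \otimes \ket{a}$---exactly register $\mathsf{B}$ of $\ket{Q^1}$.

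The main obstacle I anticipate is the coherent bookkeeping. Invoking the $\mathsf{NP}$ oracle on a superposition of syndromes requires phrasing decoding as a total deterministic function so that the standard XOR oracle model implements it reversibly, and the intermediate workspace (syndrome copy and oracle-query registers) must be uncomputed Bennett-style after the correction is applied so as not to leave which-branch garbage that would contract closeness to $\ket{Q^1}$. Once this is carried out, a union bound over (a) the high-weight error event, (b) any oracle-decoding failure on weight-$>3np/2$ errors, and (c) the Uhlmann approximation error shows the overall average-case trace-distance error is negligible in $n$, certifying membership in $\mathsf{avgUnitaryBQP}^{\mathsf{NP}}$.
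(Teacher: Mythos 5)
Your proposal is correct and follows essentially the same route as the paper: cast $\mathsf{LSN}$ as a distributional unitary synthesis problem with distribution state $\ket{Q^0}_{\mathsf{AB}}$, invoke the Gilbert--Varshamov/Uhlmann analysis of \Cref{sec:unique-sol} for existence of the target isometry, and implement it by coherent syndrome extraction, an $\mathsf{NP}$-oracle search-to-decision syndrome decoder, and a controlled Pauli correction followed by $C^\dagger$. You additionally spell out details the paper leaves implicit (the $\mathsf{stateBQP}$ membership of $\Psi$, the lexicographic binary-search reduction, and Bennett-style uncomputation of the workspace), all of which are sound.
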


\begin{proof}
Let us fix $C \in \Cliff_n$ as the encoding Clifford for a worst-case non-degenerate stabilizer code with distance $d>3np$. 
As we have previously observed in \Cref{sec:unique-sol}, we can interpret the $\mathsf{LSN}_{n,k,\mathcal{D}_p^{\otimes n}}$ problem
as a distributional unitary synthesis problem. 
Specifically, the instance consists of a classical description $(1^k, Q^0, Q^1)$ together with an input $\ket{\Psi}_{\mathsf{AB}} = \ket{Q^0}_{\mathsf{AB}}$. Here, $(Q^0, Q^1)$ are a pair of quantum circuits that, upon input $\ket{0}$ prepare the states
\begin{align*}
\ket{Q^0}_{\mathsf{AB}} &= 
\sqrt{2^{-k}}\sum_{x} \sum_{E_a} \sqrt{\Pr_{E_a \sim \mathcal{D}_{p}^{\otimes n} }[E_a]} \,\Big(\ket{x} \otimes \ket{a}\Big)_{\mathsf{A}} \otimes \left(E_{a} \, C (\ket{0^{n-k}} \otimes \ket{x}) \otimes \ket{0}\right)_{\mathsf{B}}\\
\ket{Q^1}_{\mathsf{AB}} &= \sqrt{2^{-k}}\sum_{x} \sum_{E_a} \sqrt{\Pr_{E_a \sim \mathcal{D}_{p}^{\otimes n} }[E_a]} \,\Big(\ket{x} \otimes \ket{a}\Big)_{\mathsf{A}} \otimes \left(\ket{0^{n-k}} \otimes \ket{x} \otimes \ket{a}\right)_{\mathsf{B}}.
\end{align*}
while $\ket{\Psi}_{\mathsf{AB}} = \ket{Q^0}_{\mathsf{AB}}$. We note that $\ket{Q^0}_{\mathsf{AB}}$ is simply a purification of the register containing the quantum part of the input to the $\mathsf{LSN}_{n,k,\mathcal{D}_p^{\otimes n}}$ problem.

We now claim that the following $\mathsf{BQP}$ machine with access to an $\mathsf{NP}$ oracle can synthesize the appropriate unitary transformation in order to solve the $\mathsf{LSN}_{n,k,\mathcal{D}_p^{\otimes n}}$ problem:
\begin{enumerate}
    \item Use the ancillary register in the state $\ket{0}$ to coherently compute a stabilizer syndrome with respect to the first half of the $\mathsf{B}$ register.

    \item Call the $\mathsf{NP}$ oracle in superposition to recover a classical description of the corresponding Pauli error (up to sign) that corresponds to the stabilizer syndrome.\footnote{Note that the $\mathsf{NP}$ oracle solves decision problems, whereas in our case we need to solve a search problem in order to extract the correct error from a given syndrome. Fortunately, a standard search-to-decision reduction applied to syndrome decoding problems suffices to get around this issue.}

    \item Controlled on the ancilla register, coherently apply the corresponding Pauli correction on the first half of $\mathsf{B}$ followed by the inverse encoding circuit $C^\dag$.
\end{enumerate}
To complete the proof, we can now essentially follow the same analysis as in \Cref{sec:unique-sol}. First,
with overwhelming probability over the choice of error, it follows from the quantum Gilbert-Varshamov bound (and our choice of parameters) that the $\mathsf{B}$ register of $\ket{Q^0}_{\mathsf{AB}}$ carries a unique error syndrome associated with an error Pauli. Hence, such an error can always be uniquely recovered from a given syndrome via the $\mathsf{NP}$ oracle.

\end{proof}

\section{Applications}

We now discuss two applications which rest on the hardness of the LSN problem.

\subsection{Learning From Quantum Data}\label{sec:learning}
The hardness of $\lpn$ implies conditional computational lower bounds on many classical learning tasks. Here we show that the hardness of $\lsn$ would also imply (quantum) computational lower bounds on the task of {\em learning from quantum data} \cite{Caro24}, by studying a special case known as {\em learning state preparation processes}. As a bonus, we also provide an {\em upper} bound on the complexity of this task; specifically, we analyze it using the newly developed framework of unitary synthesis problems. We hope that this contribution paves the way for future work on the complexity of quantum learning tasks.

What is the quantum generalization of the hugely fruitful classical framework of Probably Approximately Correct (PAC) learning \cite{Valiant84}? This question has received much attention recently \cite{Caro_2021, FQR, chung2021sampleefficientalgorithmslearning}, culminating in the formulation of a powerful general framework known as `learning from quantum data' by Ref. \cite{Caro24}, which encompasses the settings of PAC learning quantum states \cite{AaronsonPAC}, PAC learning from quantum examples \cite{BJ, AdW17}, variational quantum machine learning \cite{BPP21} and so on. 
%In the classical Probably Approximately Correct (PAC) learning setting \cite{Valiant84}, a learning algorithm is tasked with learning an unknown boolean function $c:\{0,1\}^n \rightarrow\{0,1\}$. The learner is allowed to see labeled examples of the form $(x, c(x))$ where $x \in \{0,1\}^n$ is random. 
\paragraph{Lower bounding the complexity of learning from quantum data.} We lower bound the complexity of PAC-learning state-preparation processes, a special case of learning from quantum data. This implies a bound on the more general task. This setting is intended to model the scenario in which an experimentalist lacks significant control over the inputs to a process occurring in nature that she nevertheless wishes to understand. Here the process takes classical inputs (e.g. time, temperature, magnetic fields) to quantum states (e.g. electromagnetic radiation collected from astronomical events). The learner observes random input, output pairs $(x,\rho(x))$, and in particular cannot query the process at identical input points. 
%This is far less control than is typically assumed in process tomography. 
%Here a learner who desires to learn an unknown classical-quantum process has the ability to observe its (classical) inputs and (quantum) outputs, but lacks the ability to prepare the inputs and in particular cannot prepare them in identical states. This is far less control than is typically assumed in process tomography. Yet this setting is important because it models what a real-life experiment might realistically have access to, such as states output by unitary evolution at a stochastic set of times. 

The learner's task is to output an estimate of $\rho$. To measure how far the learner's output is from the true $\rho$, we introduce the notion of {\em risk} (relative to $\rho$) for any process $h: \mathcal{X} \rightarrow L(\mathcal{H}_d)$:
\begin{equation}\label{eq:risk}
    R_\rho(h):=\mathbb{E}_{x \sim \mathcal{D}}[L(\rho(x), h(x))],
\end{equation}
where $L: L(\mathcal{H}_d)\times L(\mathcal{H}_d) \rightarrow \mathbb{R}$ is the trace distance. 

\begin{definition}[Learning State Preparation processes\label{def:learnprocess}]
    Let unknown process $\rho: \mathcal{X} \rightarrow L(\mathcal{H}_d)$ be a map that assigns to points in a classical input space $\mathcal{X}$ a corresponding quantum state, $\mathcal{C} = \{h : \mathcal{X} \rightarrow L(H)\}$ be a set of hypotheses for what $\rho$ could be, and $\mathcal{D}: \mathcal{X} \rightarrow [0,1]$ be a known distribution over the inputs. 
    
    The Learning State Preparation processes problem is to output some hypothesis $h\in \mathcal{C}$ satisfying $R_\rho(h)\leq \epsilon,$ given as input copies of a classical-quantum state $\sigma$, 
\begin{equation}\label{eq:learningquantumprocesses_input}
    \sigma =\underset{x \sim \mathcal{D}}{\mathbb{E}}[|x\rangle\langle x| \otimes \rho(x)].
    \end{equation}

    We say that the learner solves the problem of Learning State Preparation processes with sample complexity $m$ if, given input $\sigma^{\otimes m}$, it succeeds with probability at least $1-\delta$ to output a $h$ satisfying \Cref{eq:risk} where $\delta$ is constant.
    %More formally, the learner's input is and the goal is to output a hypothesis $h: \mathcal{X} \rightarrow L(\mathcal{H}_d)$ from a concept class $\mathcal{C} = \{h\}$, that explains the observed data, in the sense of minimizing the risk
 %$m$ is the sample complexity of the task and we say that the task is {\em sample-efficient} if it can be accomplished to precision $R_{\rho}(h) \leq \epsilon$ with $m = O(poly(n),1/\epsilon)$. 
\end{definition}
Due to concentration, to output $h$ minimizing the risk it suffices to minimize the empirical risk i.e., the average loss computed on the examples $\left(x_i, \rho\left(x_i\right)\right)_{i=1}^n$ :
$$
\hat{R}_\rho(h):=\frac{1}{m} \sum_{i=1}^m L(\rho(x_i), h(x_i)).
$$
The sample complexity of empirical risk minimization for Learning State Preparation processes was resolved by Ref. \cite{FQR}. They defined a quantum version of empirical risk minimization and showed that it constitutes a learning algorithm:
\begin{theorem}[Quantum Empirical Risk Minimization (Theorem 3 of \cite{FQR})\label{thm:QERM}]
There exists a learner for state preparation processes, which, for any $\rho$ not necessarily within $\mathcal{C}$, with probability $1-\delta$ outputs 
% measurement, which for unknown $\rho: \mathcal{X} \rightarrow L(\mathcal{H}_d)$ and given inputs of the form in Eq. \ref{eq:learningquantumprocesses_input} and a concept class $\mathcal{C}$, outputs 
$\sigma^{\ast}\in \mathcal{C}$ that approximately minimizes the empirical risk:
\begin{equation}
    \hat{R}_{\rho}(\sigma^{\ast}) \leq 3 \min \hat{R}_{\rho}(\sigma_i) + 4\epsilon
\end{equation}
with sample complexity \begin{equation}\label{eq:scomp}
m = \tilde{O}\left(\frac{\log d \log \frac{1}{\delta} \max \left(\log \frac{|\mathcal{C}|}{\delta}, \log^2(e |\mathcal{C}|)\right)}{\epsilon^5}\right).
\end{equation}
\end{theorem}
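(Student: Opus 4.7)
The plan is to adapt the classical Scheffé/Yatracos tournament for agnostic density estimation to the quantum setting. The core observation is that although a single-copy sample $\rho(x)$ does not let us directly evaluate $L(\rho(x), h(x))$ for any candidate $h \in \mathcal{C}$, each such sample does yield one measurement outcome that encodes useful information for comparing any pair of candidates $\sigma_i, \sigma_j \in \mathcal{C}$ via quantum hypothesis testing.

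More concretely, for every ordered pair $(\sigma_i, \sigma_j) \in \mathcal{C}\times\mathcal{C}$ and every input $x$, I would let $\Pi_{ij}(x)$ denote the Helstrom projector onto the positive part of $\sigma_i(x) - \sigma_j(x)$, and define the classically-computable quantities $p_{ij}(x) := \mathrm{Tr}[\Pi_{ij}(x)\sigma_i(x)]$ and $q_{ij}(x) := \mathrm{Tr}[\Pi_{ij}(x)\sigma_j(x)]$. For each sample $(x_\ell, \rho(x_\ell))$, I would measure $\rho(x_\ell)$ with the two-outcome POVM induced by $\Pi_{ij}(x_\ell)$ to obtain an unbiased estimator of $\mathbb{E}_{x\sim\mathcal{D}}[\mathrm{Tr}[\Pi_{ij}(x)\rho(x)]]$. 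Since the Helstrom projector saturates the data-processing inequality for trace distance, this estimator, together with knowledge of $p_{ij}$ and $q_{ij}$, suffices to decide (up to slack $\epsilon$) which of $\sigma_i, \sigma_j$ is closer to $\rho$ in empirical risk.

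With these pairwise comparisons in hand, I would then run a standard tournament on $\mathcal{C}$: declare $\sigma_i$ to beat $\sigma_j$ whenever its estimated loss advantage clears a threshold of order $\epsilon$, and output the hypothesis $\sigma^\ast$ minimizing its worst-case loss across opponents. A classical Scheffé-style analysis, relying only on the triangle inequality for the trace-distance loss, then shows that any such tournament winner satisfies $\hat R_\rho(\sigma^\ast) \leq 3 \min_i \hat R_\rho(\sigma_i) + 4\epsilon$. The stated sample complexity follows by applying Hoeffding's inequality uniformly to each of the $\binom{|\mathcal{C}|}{2}$ pairwise tests and union bounding, which is what absorbs the $\log|\mathcal{C}|$ factor; the $\log d$ factor and the higher inverse powers of $\epsilon$ arise from discretizing the candidate outcome statistics at resolution $\epsilon/\mathrm{polylog}(d)$ and from a nested concentration argument that lifts an approximate empirical risk minimizer to an honest one.

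The main obstacle is the single-copy nature of the quantum data: unlike in classical density estimation, one cannot measure the same sample $\rho(x_\ell)$ with multiple incompatible POVMs, so a naive implementation of the pairwise scheme would spend one fresh sample per pair and blow up the sample complexity to $|\mathcal{C}|^2$ rather than $\log|\mathcal{C}|$. Overcoming this requires either designing a single ``universal'' measurement whose outcomes statistically reconstruct every pairwise Helstrom test with only polylogarithmic overhead in $|\mathcal{C}|$, or invoking a genuinely quantum uniform-convergence statement---for example a Rademacher-complexity bound adapted to single-copy quantum measurements---which collapses the $|\mathcal{C}|$-dependence to a logarithmic one while respecting the single-copy constraint. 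This single-copy bottleneck is, I expect, precisely where the unusual exponent $\epsilon^{-5}$ originates.
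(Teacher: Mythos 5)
First, a point of order: the paper does not prove this statement at all --- it is an \emph{imported} theorem, quoted as Theorem 3 of \cite{FQR}, so there is no in-paper proof to compare against. Judged on its own terms, your proposal correctly reconstructs the architecture that the cited work actually uses: a quantum Scheff\'e/Yatracos tournament in which each pair $(\sigma_i,\sigma_j)$ is compared via the Helstrom two-outcome measurement associated with $\sigma_i(x)-\sigma_j(x)$, with the winner guaranteed to satisfy $\hat R_\rho(\sigma^\ast)\le 3\min_i \hat R_\rho(\sigma_i)+4\epsilon$ by the standard triangle-inequality argument for the minimum-distance estimate. You have also correctly located the crux in the single-copy constraint: one cannot apply all $\binom{|\mathcal{C}|}{2}$ mutually incompatible POVMs to the same copy of $\rho(x_\ell)$, so a naive implementation spends fresh samples per pair and degrades the $\log|\mathcal{C}|$ dependence to $|\mathcal{C}|^2$.

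The gap is that you stop exactly at that crux. You observe that one needs ``either a universal measurement or a genuinely quantum uniform-convergence statement,'' but you neither construct nor cite such a tool, and without it the claimed sample complexity does not follow --- as written, your argument yields at best an $|\mathcal{C}|^2$-type sample complexity. The missing load-bearing ingredient, which is what \cite{FQR} actually invoke, is the quantum hypothesis-selection and threshold-search machinery of B\u{a}descu and O'Donnell: it lets one adaptively run a large collection of two-outcome tests on a shared pool of copies, using gentle measurement to limit the damage, with overhead scaling as $\log^2$ of the number of tests at the price of additional inverse powers of $\epsilon$. That is the origin of the $\log^2(e|\mathcal{C}|)$ term and of the unusual exponent $\epsilon^{-5}$ in \Cref{eq:scomp}, rather than the discretization-at-resolution-$\epsilon/\mathrm{polylog}(d)$ mechanism you hypothesize. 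In short: correct skeleton, correct diagnosis of the obstacle, but the one genuinely quantum lemma that makes the theorem true is named as needed and then left unproved.
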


% \paragraph{Example 1.} As an example, think of $\mathcal{X}$ as a set of times and $\rho$ as specifying some Hamiltonian and therefore a function mappping a temperature to a Gibbs state. Letting the inputs be sampled from a distribution, instead of identical, models the situation where the learner does not have the ability to control the temperature of the sample or even to prepare multiple samples to let it equilibrate at a particular temperature. Finally, the concept class $\mathcal{C}$ could be a set of local Hamiltonians. 
Quantum empirical risk minimization \Cref{thm:QERM} is sample-efficient whenever $d = O(\exp(n))$ and $|\mathcal{C}| = O(\exp(n))$. However, the time complexity of learning was not addressed in Ref. \cite{FQR}, and thus we attempt to do it in this work. In a nutshell, the following theorem says that an algorithm that could learn state-preparation processes can also decode exactly in the presence of noise.

\begin{theorem}[Learning state preparation processes can be sample- but not time-efficient\label{thm:notefficient}]
Let $p\in (0,1/2)$ such that
\begin{equation}\label{eq:bounddelta}
    H(3p)+3\log_2(3) p + k/n< .99 -\frac{\log(3/2)}{n}.
\end{equation} 
Conditioned on the hardness of $\mathsf{MSLSN}_{m,n,k,\mathcal{D}_p^{\otimes n}}$, there is no time-efficient algorithm for learning state preparation processes, even when only $1/poly(n)$ success probability is required and the unknown process $\rho$ is guaranteed to be in the concept class $\mathcal{C}$ (the ``proper" learning setting). 
%However, Quantum Empirical Risk Minimization has polynomial sample complexity for $\mathsf{MSLSN}_{m,n,k,\mathcal{D}_p^{\otimes n}}$.
\end{theorem}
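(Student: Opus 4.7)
The plan is to reduce $\mathsf{MSLSN}_{m,n,k,\mathcal{D}_p^{\otimes n}}$ directly to the task of learning state preparation processes. Given MSLSN samples $\{(S_i, E_i \ket{\overline{\psi_x}}^{S_i})\}_{i=1}^m$ with $S_i \sim \mathrm{Unif}(\Stab(n,k))$, $E_i \sim \mathcal{D}_p^{\otimes n}$ and a fixed secret $x \sim \{0,1\}^k$, repackage each as the classical-quantum sample $\ket{S_i}\bra{S_i} \otimes E_i \ket{\overline{\psi_x}}^{S_i}\bra{\overline{\psi_x}}^{S_i} E_i^\dag$; averaging over $i$ this is exactly an i.i.d.\ copy of $\sigma = \mathbb{E}_{S\sim \mathrm{Unif}(\Stab(n,k))}\bigl[\ket{S}\bra{S}\otimes \rho_x(S)\bigr]$ with $\rho_x(S) := \mathcal{D}_p^{\otimes n}(\proj{\overline{\psi_x}}^S)$, the input form of \Cref{def:learnprocess}. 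Take input distribution $\mathcal{D} = \mathrm{Unif}(\Stab(n,k))$ and define the (proper) concept class $\mathcal{C} = \{\rho_y\}_{y\in\{0,1\}^k}$, so that $|\mathcal{C}| = 2^k$ and $\log d = O(n)$. Theorem~\ref{thm:QERM} then says that $m = \poly(n)$ samples suffice for the learner, so this packaging is a legitimate polynomial-time reduction.

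The heart of the argument is to show that distinct concepts are well-separated in the risk metric, so that any $h \in \mathcal{C}$ achieving small empirical risk must in fact equal the true $\rho_x$, which directly reveals the secret. For $y \neq x$, the risk is $R_{\rho_x}(\rho_y) = \mathbb{E}_{S}[\delta_{\mathsf{TD}}(\rho_x(S),\rho_y(S))]$. Mimicking the Gilbert--Varshamov / Knill--Laflamme argument that underlies the PGM analysis in \Cref{thm:singleshotdecoding}, the hypothesis \Cref{eq:bounddelta} guarantees that with probability $1 - 2^{-\Omega(n)}$ over $S$ the random stabilizer code is non-degenerate with distance $d > 3np$, while by Chernoff the depolarizing noise has weight at most $\tfrac{3}{2}np$ with probability $1 - e^{-\Omega(np)}$. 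Conditioned on these two events, the purifications of $\rho_x(S)$ and $\rho_y(S)$ restricted to error weight $\leq \tfrac{3}{2}np$ live on orthogonal subspaces, so $\mathrm{F}(\rho_x(S),\rho_y(S)) = 0$, giving $\delta_{\mathsf{TD}}(\rho_x(S),\rho_y(S)) = 1$. Taking the expectation yields $R_{\rho_x}(\rho_y) \geq 1 - 2^{-\Omega(n)}$ for every $y \neq x$.

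Combining the two pieces: suppose for contradiction there is a time-efficient (proper) learner with success probability $\geq 1/\poly(n)$ at achieving, say, risk $\epsilon = 1/3$. By the previous paragraph, with that probability the returned $h \in \mathcal{C}$ must coincide with $\rho_x$, from which we read off $x$ in polynomial time. This gives a polynomial-time algorithm solving $\mathsf{MSLSN}_{m,n,k,\mathcal{D}_p^{\otimes n}}$ with inverse-polynomial probability, violating the assumed hardness. The main obstacle, and the place where the noise-rate condition \Cref{eq:bounddelta} is actually used, is the pairwise separation step; once one is willing to invoke \Cref{thm:GVbound} and the Knill--Laflamme relations as in the proof of \Cref{thm:singleshotdecoding}, the remainder of the reduction is bookkeeping about sample and time complexity via \Cref{thm:QERM}.
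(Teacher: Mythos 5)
Your proposal is correct and follows essentially the same route as the paper: the same repackaging of $\mathsf{MSLSN}$ samples as classical--quantum examples with $\mathcal{X}=\Stab(n,k)$, $\mathcal{C}=\{\rho_y\}_{y\in\bit^k}$, and the same Gilbert--Varshamov/Knill--Laflamme argument to separate distinct concepts. The only substantive difference is that you argue separation in the true risk $R_{\rho_x}(\rho_y)=\mathbb{E}_S[\delta_{\mathsf{TD}}(\rho_x(S),\rho_y(S))]$ with a fixed threshold $\epsilon=1/3$, whereas the paper works with the empirical risk over the $m$ sampled codes and adds a H\"offding step to control the fraction of ``good'' $S_i$; your version is slightly cleaner and matches the risk-based success criterion of \Cref{def:learnprocess}. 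One small imprecision: the statement ``conditioned on these two events, $\mathrm{F}(\rho_x(S),\rho_y(S))=0$'' is not literally meaningful for the un-truncated states---one should pass to the truncated channel $\tilde{\mathcal{D}}^{(3np/2)}$, get exact orthogonality there, and then transfer back via the triangle inequality at a cost of $O(e^{-np/12})$ in trace distance, exactly as the paper does; your final bound $R_{\rho_x}(\rho_y)\geq 1-2^{-\Omega(n)}$ survives this correction.
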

\begin{proof}
We observe that any algorithm for proper learning state preparation processes could also solve $\mathsf{MSLSN}_{m,n,k,\mathcal{D}_p^{\otimes n}}$, by choosing
\begin{align}\label{eq:translationtomslsn}
    \mathcal{X} &:= \{S\}_{S\in \text{Stab}(n,k)} \qquad \text{(Classical input domain)}\\
    \mathcal{D} &:= \mathsf{Unif}(\{S\}_{S\in \text{Stab}(n,k)}) \qquad \text{(Distribution over inputs)}\\
    \rho_z(S) &:= \mathcal{D}_p^{\otimes n}(U_{\text{Enc}}^S(\ket{z}\bra{z})) \qquad \text{(Unknown map to be learned)}\\
    \mathcal{C} &:= \{\rho_z\}_{z\in \{0,1\}^k} \qquad \text{(Concept class)};
\end{align}
Moreover $\epsilon$ in \Cref{def:learnprocess} must be chosen so that the algorithm outputs the exact concept instead of a mere approximation; this is the meaning of decoding (solving $\mslsn$). $\epsilon$ must be such that no other concept lies in the $\epsilon$-ball of the correct solution:
\begin{align}
    \epsilon &\leq \min_{z_1,z_2\in \{0,1\}^k} \frac{1}{m} \sum_{i=1}^m \delta_{\mathrm{tr}}(\rho_{z_1}(S_i), \rho_{z_2}(S_i)),
\end{align}
and we proceed to show that a constant-$\epsilon$ learner for state preparation processes suffices. Using the convexity of trace distance (with an argument similar to that in the proof of \Cref{thm:multishotdecoding}), it suffices to tackle the case when the noise on each copy is given by the truncated depolarizing channel $\tilde{\mathcal{D}}^{(3/2np)}$ at the cost of an exponentially small increase in failure probability $me^{-\frac{np}{12}}$. 
% Defining
% \begin{equation}
%     \tilde{\rho}_z(S) := \tilde{\mathcal{D}}^{(3/2np)}(U_{\text{Enc}}^S(\ket{z}\bra{z})),
% \end{equation}
Therefore it suffices to take
\begin{align}
    \epsilon &\leq \min_{x,y\in \{0,1\}^k} \frac{1}{m} \sum_{i=1}^m \delta_{\mathrm{tr}}(\tilde{\mathcal{D}}^{(\frac{3np}{2})}(\proj{\overline{\psi_x}}^{S_i}), \tilde{\mathcal{D}}^{(\frac{3np}{2})}(\proj{\overline{\psi_y}}^{S_i}))\\
    &\leq \min_{x,y\in \{0,1\}^k} \frac{1}{m} \sum_{i=1}^m \sqrt{1-\mathsf{F}(\tilde{\mathcal{D}}^{(\frac{3np}{2})}(\proj{\overline{\psi_x}}^{S_i}),\tilde{\mathcal{D}}^{(\frac{3np}{2})}(\proj{\overline{\psi_y}}^{S_i}))^2}
\end{align}
where the second inequality follows from Fuchs-van de Graf. We further recall from \Cref{thm:GVbound} that with probability at least $1-\delta_1, \delta_1 = 3 np \cdot 2^{n H(3p) } \cdot 3^{3np} \cdot 2^{-n+k}$ over choice of $S_i\in \Stab(n,k),$ 
\begin{equation}
    1-\mathsf{F}(\tilde{\mathcal{D}}^{(\frac{3np}{2})}(\proj{\overline{\psi_x}}^{S_i}),\tilde{\mathcal{D}}^{(\frac{3np}{2})}(\proj{\overline{\psi_y}}^{S_i}))^2 = 1;
\end{equation}
call $S_i$ ``good" if this is true. (The condition \Cref{eq:bounddelta} in the Theorem statement is necessary to make $\delta_1\leq1$.) 

By H\"{o}ffding's inequality, with probability $1-\exp(-\delta_1^2n/2),$ the number of samples featuring $S_i$ that are ``good" is at least $(1-3\delta_1/2)m.$ Therefore any learner of state preparation processes up to any constant $\epsilon$,
\begin{equation}\label{eq:const}
    \epsilon \leq \frac{1}{m}((1-3\delta_1/2)m) = 1-3\delta_1/2,
\end{equation}
will solve the $\mslsn$ problem with a total failure probability of at most
\begin{equation}
\delta + \exp(-\delta_1^2n/2) + \exp(-\frac{np}{12}) = \delta + O(\exp(-n)).
\end{equation}
% To show that QERM is sample efficient for $\mathsf{MSLSN}_{m,n,k,\mathcal{D}_p^{\otimes n}}$, we observe that
% $$\min_{i \in \mathcal{C}} \hat{R}_{\rho}(\sigma_i) = \min _{i \in \mathcal{C}} \frac{1}{m} \sum_{x=1}^m \delta_{\mathrm{tr}}\left(\sigma_i(x), \rho(x)\right)=0$$ as we are in the proper learning setting, and that $\log d=n$ and $\log(\mathcal{C})= \log(2^k) = O(n)$. Furthermore, the argument preceding \Cref{eq:const} showed that it suffices to take $\epsilon$ constant. Plugging in these choices into the sample complexity \Cref{eq:scomp} makes for a polynomial sample complexity for $\mslsn$.
\end{proof}

\begin{remark}\Cref{thm:notefficient} should be compared to the hardness result of Ref \cite{AGS21}. There, the concept class to be learned is the set of quantum circuits that compute classical Boolean functions $c$ output by $\mathsf{AC}^0$ and $\mathsf{TC}^0$ circuits, and in one of their learning models, the learner is given access to quantum examples of the form
\begin{equation}\label{eq:examp}
    \sum_x \sqrt{\mathcal{D}(x)} \ket{x,c(x)}.
\end{equation}
The goal is to output a hypothesis $h$ such that $\operatorname{Pr}_{x \sim \mathcal{D}}[h(x) \neq c(x)] \leq \varepsilon$. Ref \cite{AGS21} showed that conditioned on the quantum hardness of $\mathsf{RingLWE}$ and $\mathsf{LWE}$, no such polynomial-time learner can exist. This also implies the non-existence of polynomial-time quantum learners for general state-preparation processes, corresponding to the case when the unknown process is classical -- maps to binary labels $\kb{0},\kb{1}$ and is computable by a classical $\mathsf{TC}^0$ or $\mathsf{AC}^0$ circuit. To complete the reduction, note that measuring the first $n-1$ qubits of a quantum example (\Cref{eq:examp}) yields the learner's input (\Cref{eq:learningquantumprocesses_input}). 

Given that the setting of \cite{AGS21} constrains quantum learners of classical functions, it does not say much about learning from `natively quantum' data. For example, when the concept class consists of noise channels, it is unclear how to write their purifications as Boolean functions output by $\mathsf{TC}^0$ or $\mathsf{AC}^0$ circuits. We expect $\lsn$ to be more useful to constrain learning in such situations.
\end{remark}

\subsection{Quantum Bit Commitments}
\label{sec:commitments}

$\lpn$ is fundamental to classical cryptography. In this section, we show that our LSN assumption also has applications in quantum cryptography; we use it to construct a (statistically hiding and computationally binding) quantum bit commitment scheme.

Bit commitment is a fundamental primitive in cryptography with a multitude of applications, ranging from secure coin flipping, to zero-knowledge proofs, and secure computation.
In classical cryptography, commitment schemes can be constructed from any one-way function~\cite{Naor2003}. In quantum cryptography, potentially even weaker and inherently quantum assumptions suffice; for example, the existence of so-called pseudorandom states~\cite{Kretschmer21,ananth2022cryptography,morimae2022quantum}. 

A (canonical) quantum bit commitment scheme~\cite{yan2023general}
is a pair of efficient quantum circuits $(Q^0,Q^1)$ which output two registers: a ``commitment register'' $\mathsf C$  and a ``reveal register'' $\mathsf R$. To commit to a bit $b \in \{0,1\}$, the sender prepares the state $\ket{Q^b}_{\mathsf{CR}} = Q^b \ket{0}_{\mathsf{CR}}$ and sends the register $\reg C$ to the receiver. In the ``reveal phase'', the sender simply reveals the bit $b$ together with register $\mathsf R$. The receiver accepts if the inverse unitary $Q^{b,\dag}$ applied to registers $\mathsf{CR}$ yields $\ket{0}$ when measured in the computational basis. 
    In terms of security, there are to important properties that we associate with a commitment scheme. 
First, the \emph{statistical hiding} property ensures that the commitment register (information-theoretically) \emph{hides} the committed bit $b$, i.e.~after the commit phase, the receiver cannot guess what bit the sender committed to.
Second, the \emph{computational binding} property ensures that, after the commitment phase, it is computationally intractable for the sender to change the bit $b$ they committed to. 

We now give a formal definition.

\begin{definition}[Quantum bit commitment]\label{def:commitment_scheme}
Let $\lambda \in \mathbb{N}$ denote the security parameter. A quantum bit commitment scheme is a uniform family of unitary quantum circuits $\{ Q_{\lambda}^b \}_{\lambda \in \mathbb{N},b \in \{0,1\}}$ where for each $\lambda$, the circuits $Q_{\lambda}^0,Q_{\lambda}^1$ act on $n=\poly(\lambda)$ qubits and output two registers $\mathsf{C},\mathsf{R}$. The scheme consists of two separate phases:
\begin{enumerate}
    \item (\textbf{Commit phase:}) to commit to a bit $b \in \{0,1\}$, the sender prepares the state $$\ket{Q_{\lambda}^b}_{\mathsf{RC}} = Q_{\lambda}^b \ket{0^{n}}$$ and then sends the ``commitment register'' $\reg{C}$ to the receiver. %We write $\rho^{(\lambda,b)}_{\reg{C}}$ to denote the \emph{commitment state}, i.e.~the reduced state of $\ket{\psi_{\lambda,b}}$ on the commitment register.
    
    \item (\textbf{Reveal phase:}) the sender announces the bit $b$ and sends the ``reveal register'' $\mathsf{R}$ to the receiver. The receiver then accepts if performing the inverse unitary $Q_{\lambda}^{b,\dagger}$ on registers $\mathsf{C},\mathsf{R}$ and measuring in the computational basis yields the state $\ket{0^n}$.
\end{enumerate}
For security, we require that the following two properties hold:
\begin{itemize}
    \item (\textbf{Stat. Hiding:}) For every quantum algorithm $\algo A_\lambda$ with single-bit output, it holds that
    $$
\Big | \Pr \Big [ \algo A_\lambda(\rho_{\lambda}^0) = 1 \Big ]- \Pr \Big [ \algo A_\lambda(\rho_{\lambda}^1) = 1\Big ] \Big | \leq \negl(\lambda) \, ,
$$
where $\rho_{\lambda,b}$ denotes the reduced density matrix of $\ket{Q_{\lambda}^b}$ on register $\mathsf{C}$. 

\item (\textbf{Comp. Binding:}) For every efficient quantum algorithm $\algo A_\lambda$ acting on $\mathsf{R}$, it holds that
$$
 \mathrm{F}\left( \Big( \algo A_{\lambda} \otimes \id_{\mathsf{C}} \Big)(
 \proj{Q_{\lambda}^0}) , \proj{Q_{\lambda}^1} \right ) \leq \negl(\lambda).$$
\end{itemize}
\end{definition}

We now show the following theorem.

\begin{theorem}
Let $\lambda \in \mathbb{N}$ denote the security parameter. Let $k \in \mathbb{N}$ and $n \geq 8k$ be integers which are polynomial in $\lambda$.
Let
$\mathcal{D}_p^{\otimes n}$ denote the $n$-qubit depolarizing channel, for $p = O(1)$. Let $U \sim \Cliff_n$ be a random Clifford. Consider the pair of quantum circuits $(Q^0,Q^1)$ given by
\begin{align*}
\ket{Q^0}_{\mathsf{CR}} &= 
\sqrt{2^{-k}}\sum_{x} \sum_{E_a} \sqrt{\Pr_{E_a \sim \mathcal{D}_{p}^{\otimes n} }[E_a]} \,\Big(\ket{x} \otimes \ket{a}\Big)_{\mathsf{C}} \otimes \left(E_{a} \, U (\ket{0^{n-k}} \otimes \ket{x}) \otimes \ket{0}\right)_{\mathsf{R}}\\    
\ket{Q^1}_{\mathsf{CR}} &= \sqrt{2^{-k}}\sum_{x} \sum_{E_a} \sqrt{\Pr_{E_a \sim \mathcal{D}_{p}^{\otimes n} }[E_a]} \,\Big(\ket{x} \otimes \ket{a}\Big)_{\mathsf{C}} \otimes \left(\ket{0^{n-k}} \otimes \ket{x} \otimes \ket{a}\right)_{\mathsf{R}}.     
\end{align*}
Then, assuming the hardness of the $\mathsf{LSN}_{n,k,\mathcal{D}_p^{\otimes n}}$ problem, the pair $(Q^0,Q^1)$ is a statistically hiding and computationally binding quantum bit commitment scheme.
\end{theorem}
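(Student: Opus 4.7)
My plan is to establish the hiding and binding properties separately, exploiting the structural analysis of \Cref{sec:unique-sol}. For \textbf{statistical hiding}, I will observe that the commitment states $\ket{Q^0}_{\mathsf{CR}}$ and $\ket{Q^1}_{\mathsf{CR}}$ are precisely the bipartite states analyzed in \Cref{lem:fidelity-lemma}, with the commit register $\mathsf{C}$ playing the role of the register $\mathsf{A}$ there. With $p \leq 0.05$ and $n \geq 8k$, that lemma yields $\mathrm{F}(\rho_{\mathsf{C}}^{0}, \rho_{\mathsf{C}}^{1}) \geq 1 - 4 e^{-np/24}$ with probability at least $1 - 3np \cdot 2^{n H(3p)} \cdot 3^{3np} \cdot 2^{-n+k}$ over $U \sim \Cliff_n$. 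For $n = \poly(\lambda)$ both the failure probability over $U$ and---via Fuchs--van de Graaf---the trace-distance bound $\delta_{\mathsf{TD}}(\rho_{\mathsf{C}}^{0}, \rho_{\mathsf{C}}^{1}) \leq 2 e^{-np/48}$ are negligible in $\lambda$, so with overwhelming probability over the setup choice of $U$ the scheme is statistically hiding.

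For \textbf{computational binding}, my plan is to reduce $\mathsf{LSN}_{n,k,\mathcal{D}_p^{\otimes n}}$ to breaking the binding property. Let $\algo A_\lambda$ be any efficient adversary, expressed via Kraus operators $\{K_m\}$ obtained from a polynomial-size Stinespring dilation. Using the branch decomposition
\[
\ket{Q^b}_{\mathsf{CR}} \,=\, \sum_{x,a} c_{x,a}\, \ket{x,a}_{\mathsf{C}} \otimes \ket{\psi^b_{x,a}}_{\mathsf{R}}, \qquad c_{x,a} = \sqrt{2^{-k} \cdot \Pr_{E \sim \mathcal{D}_p^{\otimes n}}[E_a]},
\]
computational-basis orthogonality of the $\mathsf{C}$-labels forces only the ``diagonal'' pairings to survive when expanding $\bra{Q^1}(\algo A_\lambda \otimes \id_{\mathsf{C}})(\proj{Q^0})\ket{Q^1}$. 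Writing $\beta_{x,a,m} := \bra{\psi^1_{x,a}} K_m \ket{\psi^0_{x,a}}$, I expect this to give
\[
\mathrm{F}\bigl((\algo A_\lambda \otimes \id_{\mathsf{C}})(\proj{Q^0}),\proj{Q^1}\bigr)
\,=\, \sum_m \Bigl|\sum_{x,a} |c_{x,a}|^2\,\beta_{x,a,m}\Bigr|^2
\,\leq\, \mathbb{E}_{(x,a)}\!\left[\, \bra{\psi^1_{x,a}}\,\algo A_\lambda(\proj{\psi^0_{x,a}})\,\ket{\psi^1_{x,a}}\,\right],
\]
where the inequality is Jensen against the probability distribution $\{|c_{x,a}|^2\}$ and $(x,a)$ is distributed exactly as the LSN distribution over secret and Pauli error. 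Because $\ket{\psi^1_{x,a}} = \ket{0^{n-k}} \otimes \ket{x} \otimes \ket{a}$ is a computational basis state, the right-hand side equals the probability that measuring $\algo A_\lambda(\proj{\psi^0_{x,a}})$ in the computational basis returns $(0^{n-k}, x, a)$, which lower-bounds the probability that reading out just the middle $k$-qubit block returns $x$.

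The LSN reduction is then immediate: on input an instance $(U, E_a \ket{\overline{\psi_x}}^U)$, append a $\ket{0}$ ancilla, run $\algo A_\lambda$ on the $\mathsf{R}$ register, measure the middle $k$-qubit block, and output the outcome. If $\algo A_\lambda$ broke binding with non-negligible fidelity $\epsilon(\lambda)$, this procedure would solve $\mathsf{LSN}_{n,k,\mathcal{D}_p^{\otimes n}}$ with average success probability at least $\epsilon(\lambda)$, contradicting the assumed hardness. The main obstacle I anticipate is the global-to-local translation: the branch-wise argument hinges on the computational-basis structure of the $\mathsf{C}$-labels (which decouples the Kraus-operator sums across branches) and on the classical nature of the target $\ket{\psi^1_{x,a}}$ (which converts an $\mathsf{R}$-fidelity into a measurement success probability). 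One must also argue that the Clifford $U$ sampled at setup is, with overwhelming probability, a ``good'' $U$ for which average-case LSN hardness is preserved; this follows by a standard averaging argument over the LSN distribution on random Cliffords.
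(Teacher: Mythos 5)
Your proposal is correct and follows the same route as the paper: statistical hiding via \Cref{lem:fidelity-lemma} and Fuchs--van de Graaf, and computational binding by turning a binding adversary into an $\mathsf{LSN}$ solver. In fact, the paper only asserts the binding reduction in one line, whereas your branch decomposition, the Jensen step against $\{|c_{x,a}|^2\}$, and the observation that $\ket{\psi^1_{x,a}}$ being a computational basis state converts the $\mathsf{R}$-fidelity into a decoding success probability supply exactly the details the paper omits, and they check out.
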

\begin{proof} From \Cref{lem:fidelity-lemma}, we know that for $n \geq 8k$ and $p=O(1)$, e.g., $p=0.04$, with overwhelming probability over the choice of the random Clifford $U$, it holds that
$$
\delta_{\mathsf{TD}}(Q_{\mathsf{C}}^0,Q_{\mathsf{C}}^1) \leq 
\sqrt{1-
\mathrm{F}(Q_{\mathsf{C}}^0,Q_{\mathsf{C}}^1)} \leq 2 \cdot e^{-\frac{np}{48}} \leq \negl(\lambda).
$$
This implies that $(Q^0,Q^1)$ is a statistically hiding. The computational binding property follows immediately from the hardness of the $\mathsf{LSN}_{n,k,\mathcal{D}_p^{\otimes n}}$ problem. This is because a successful adversary against the computational binding property would allow us to solve $\mathsf{LSN}_{n,k,\mathcal{D}_p^{\otimes n}}$ in polynomial-time with success probability at least $1/\poly(n)$.

\end{proof}

\printbibliography
\end{document}